\def\slash{\relax\ifmmode\delimiter"502F30E\mathopen{}\else\@old@slash\fi}
\tikzstyle{every picture}=[
\def\mmod#1#2{\left[#1\right]_{#2}}
\def\sc{\mathrm{sc}}
\def\oo{\star}
\title{State complexity of catenation combined with boolean operations}
\author{
    Pascal Caron
    \and Jean-Gabriel Luque
    \and Bruno Patrou
    \thanks{\{Pascal.Caron, Jean-Gabriel.Luque, Ludovic.Mignot, Bruno.Patrou\}@univ-rouen.fr}
}
\institute{LITIS, Universit\'e de Rouen,\\ Avenue de l'Universit\'e,\\ 76801 Saint-\'Etienne du Rouvray Cedex,\\ France}
\def\sat{\mathrm{sat}}
\def\split{\mathrm{split}}
\begin{document}

\maketitle
\begin{abstract}
We exhaustively investigate  possible combinations of a boolean operation together with a catenation. In many cases we prove and improve some conjectures by Brzozowski. 
For each family of operation, we endeavour to provide a common witness with a small size alphabet. 
\end{abstract}

\section{Introduction}

State complexity is a very active research area. It aims to determine the maximal size of a minimal automaton recognizing a language belonging to a given class. State complexity can be studied from the deterministic as well as non-deterministic point of view. Here, we only consider the deterministic case. Then, the state complexity of a regular language is the states number of its minimal DFA (Deterministic Finite Automaton).  The state complexity of a regular operation allows to compute the maximal size of any DFA obtained by applying this operation over regular languages, knowing their respective state complexities. Such operations can be elementary (see, as one of the first reference in this domain, \cite{Mas70}) or the result of some combinations (see, for example, \cite{GSY08}, \cite{CGKY11} or \cite{JO11}). Sometimes, the computation of state complexities needs to use  combinatorial tools, as in \cite{CLMP15}. To have an expanded view of the domain, it is useful to refer to the survey  \cite{GMRY15}.

 In \cite{Brz13}, J. Brzozowski shows that a particular family of DFAs, that we call Brzozowski automata, is used to produce witnesses in a very large number of cases.
This family of DFA are such that the letters must play one of the four following roles:
	 a total cycle,
	 a transposition,
	 a contraction or
	 the identity.
The first three  roles are used to maximize the semigroup of transformations.
 
We illustrate the power of this approach by revisiting and completing the picture concerning the combination of catenation with any boolean operation. We recall the known results and some Brzozowski conjectures. Then we prove the conjectures and some new results.

We give a complete panorama of state complexity of each possible  combination involving the catenation and/or a boolean operation.
As noticed in \cite{CLMP15}, it is sufficient to focus on the three operators $\cap$, $\cup$ and $\oplus$ to produce the desired results. The possible combinations are
\begin{enumerate}
	\item\label{cas5} $ABC$
	\item\label{cas4} $A\circ B\circ C$
	\item\label{cas3} $A(B\circ C)$
	\item\label{cas2} $(A\circ B)C$
	\item\label{cas1} $(AB)\circ C$ (which immediately implies $A\circ(BC)$ since $\circ$ is commutative)
\end{enumerate}
For each of these combinations, some results are already known.
\begin{enumerate}
\item The bound is given in \cite{GY09} and a $3$-letters witness is given in \cite{CLP16}.  
\item When $\circ$ is restricted to $\cap$ and $\cup$, this is a particular case from a more general study done in \cite{EGLY09} from which  a $6$-letters witness is deduced. 
\item  The bound is given and reached in  \cite{CGKY11} with a $3$-letters witness when   $\circ$  is $\cup$ and a $4$-letters witness when $\circ$ is $\cap$. When $\circ$ is $\oplus$ the bound is given in \cite{CLMP15} and a $4$-letters witness is provided.
\item The bound is given and reached in  \cite{CGKY12}   with  $4$-letters witnesses when   $\circ$  is $\cup$  or $\cap$. 
\item  The bound is given and reached in  \cite{CGKY12} with  $3$-letters witnesses when   $\circ$  is $\cup$  or $\cap$.
\end{enumerate}
We improve some of the previous results as follows.
\begin{enumerate}
\setcounter{enumi}{1}
\item For the $9$ cases where each $\circ$ is replaced by $\cap$, $\cup$ or $\oplus$, we produce a common $2$-letters witness.
\item Conjectures $18$ and $19$ of \cite{Brz13} provide a common $4$-letters Brzozowski witness for  $\cup$ and $\cap$. We improve this result by giving a common $3$-letters witness for $\cup$ and $\cap$. We also show that this witness does not suit to the case of $\oplus$. In this last case, we conjecture that $4$-letters are needed.
\item We provide a common $3$-letters Brzozowski witness for the $3$ operations ($\cup$, $\cap$ and $\oplus$), resolving  Conjecture $6$ of \cite{Brz13}.
\item We provide a common $2$-letters Brzozowski witness for the $3$ operations ($\cup$, $\cap$ and $\oplus$), improving Conjecture $5$ of \cite{Brz13}.
\end{enumerate}

\section{Preliminaries}\label{preliminaries}

For any integer $i\in \mathbb Z$, any $p\in \mathbb N\setminus \{0\}$, we set  $\mmod{i}{p}=\mathrm{min}\{j\mid j\geq 0 \wedge j\equiv i (p)\}$. 
Let $\Sigma$ 
denotes a finite alphabet. A word $w$ over  $\Sigma$ is a finite sequence of symbols of $\Sigma$. The length of  $w$, denoted by $|w|$ is the number of occurrences of symbols of $\Sigma$ in $w$. For  $a\in \Sigma$, we denote by $|w|_a$ the number of $a$ in $w$.  The set of all finite words over $\Sigma$ is denoted by $\Sigma ^*$.  The empty word is denoted by  $\varepsilon$. A language is a subset of $\Sigma^*$. The set of subsets of a finite set $A$ is denoted by $2^A$ and $|A|$ denotes the cardinality of $A$.  We denote by $\uplus$ the union of disjoint sets. The symbol $\circ$ denotes any binary boolean operation on languages. In the following, by abuse of notation, we 
often write $q$ for any singleton $\{q\}$.

A  finite automaton (FA) is a $5$-tuple $A=(\Sigma,Q,I,F,\cdot)$ where $\Sigma$ is the input alphabet, $Q$ is a finite set of states, $I\subset Q$ is the set of initial states, $F\subset Q$ is the set of final states and $\cdot$ is the transition function from  $Q\times \Sigma$ to $2^Q$. An FA is deterministic  and complete (DFA) if $|I|=1$ and for all $q\in Q$, for all $a\in \Sigma$, $|q\cdot a=1$. 
 The transition function is  extended to any word  by $q\cdot a w=\bigcup _{q'\in q\cdot a} q'\cdot w$ and  $q\cdot \varepsilon=q$ for any symbol $a$   of $\Sigma$ and  any word $w$  of $\Sigma^*$. 
  For convenience,  we sometimes use the notation $q\xrightarrow{w} q'$ to denote that $q'\in q\cdot w$.

The dual operation is defined by 
 $w\cdot q=\{q'\mid q\in q'\cdot w\}$.  We  extend the dot notation to any set of states $S$ by $S\cdot w=\bigcup_{s\in S}s\cdot w$ and $w\cdot S =\bigcup_{s\in S} w\cdot s$.  A word $w\in \Sigma ^*$ is recognized by an FA $A$ if $I\cdot w\cap F\neq \emptyset$. 

The language recognized by an FA $A$ is the set $L(A)$ of words recognized by $A$. 
Two automata are said to be equivalent if they recognize the same language.  

Let $D=(\Sigma,Q_D,i_D,F_D,\cdot)$ be a DFA.
Two states $q_1,q_2$ of  $D$ are equivalent if for any word $w$ of $\Sigma^*$, $q_1\cdot w\in F_D$ if and only if $q_2\cdot w\in F_D$. Such an equivalence is denoted by $q_1\sim q_2$. A DFA is  minimal if there does not exist any equivalent  complete DFA  with less states and it is well known that for any DFA, there exists a unique minimal equivalent one \cite{HU79}. Such a minimal DFA  can be  obtained from $D$ by computing the accessible part of the automaton $D\slash \sim=(\Sigma,Q_D\slash \sim,[i_D],F_D\slash \sim,\cdot)$ where for any $q\in Q_D$, $[q]$ is the $\sim$-class of the state $q$ and satisfies the property  $[q]\cdot a=[q\cdot a]$, for any $a\in \Sigma$. In a minimal DFA, any two distinct states are pairwise inequivalent.

 The state complexity of a regular language $L$ denoted by $\sc(L)$ is the number of states of its minimal DFA. 
  Let ${\cal L}_n$ be the set of languages of state complexity $n$. The state complexity of a unary operation $\otimes$ is the function $\sc_{\otimes}$ associating with an integer $n$ the maximum of the state complexities of $(\otimes L)$ for $L\in {\cal L}_n$.
  A language $L\in {\cal L}_n$ is a witness (for $\otimes$) if  $\sc(\otimes L)=\sc_{\otimes}(n)$. 
  This can be generalized, and the state complexity of a $k$-ary operation $\otimes$ is the $k$-ary function which associates with any tuple $(n_1,\ldots,n_k)$ the integer $\mathrm{max}\{\sc(\otimes(L_1,\ldots,L_k))|L_i\in\mathcal{L}_{n_i},\forall i\in[1,k]\}$. Then, a witness is a tuple $(L_{1},\ldots,L_{k})\in({\cal L}_{n_1}\times \cdots  \times{\cal L}_{n_k})$ such that $\sc(\otimes(L_{1},\ldots,L_{k}))=\sc_{\otimes}(n_1,\ldots,n_k)$. 
  An important research area consists in finding witnesses for any $(n_1,\ldots ,n_k)\in \mathbb{N}^k$.
In the aim to manipulate combinations of binary operators, we introduce the following notation. A binary operator $\otimes$ is also denoted by $\oo\otimes\oo$ and we extend the notation for any combination of binary operators.
   For example, the ternary operation  defined for any three languages $L_1, L_2, L_3$ by  $L_1\cdot(L_2\cup L_3)$ is denoted by $\oo \cdot (\oo \cup \oo)$. Let $h$ be its state complexity. 
  Let $f,g$ be the  respective state complexity of $\oo\cdot\oo$ and $\oo\cup\oo$. For any three integers $n_1,n_2,n_3$, it holds $h(n_1,n_2,n_3)\leq f(n_1,g(n_2,n_3))$ \cite{GY09}. In fact, applying the union on a witness does not produce a good candidate for a witness for catenation. Indeed, about half of the states of the obtained DFA are final which contradicts the fact that a good candidate must have only one final state \cite{JJS05}.
  
\subsection{Brzozowski witnesses\label{Brzozowski witnesses}}

In \cite{Brz13}, Brzozowski defines  a family of languages  that turns to be universal witnesses for several operations. The automata denoting these languages are called \textit{Brzozowski automata}.
We  need some background to define these automata. We 
follow the terminology of \cite{GM08}. Let $Q=\{0,\ldots, n-1\}$ be a set. A \textit{transformation} of the set $Q$ is a mapping of $Q$ into itself. If $t$ is a transformation and $i$ an element of $Q$, we denote by $it$ the image of $i$ under $t$. A transformation of $Q$ can be represented by $t=[i_0, i_1, \ldots i_{n-1}]$ which means that $i_k=kt$ for each $0\leq k\leq n-1$ and $i_k\in Q$. A \textit{permutation} is a bijective transformation on $Q$. The \textit{identity} permutation of $Q$ is denoted by $\mathds{1}$. A \textit{cycle} of length $\ell\leq n$  is a permutation $c$, denoted   by $(i_0,i_1,\ldots i_{\ell-1})$, on a subset $I=\{i_0,\ldots ,i_{\ell-1}\}$ of $Q$  where  $i_kc=i_{k+1}$ for $0\leq k<\ell-1$ and $i_{\ell-1}c=i_0$.  A \textit{transposition} $t=(i,j)$ is a permutation on $Q$ where $it=j$ and $jt=i$ and for every  elements $k\in Q\setminus \{i,j\}$, $kt=k$.  A \textit{contraction}  $t=\left(\begin{array}{r}i\\j\end{array}\right)$ is a transformation where  $it=j$ and  for every  elements $k\in Q\setminus \{i\}$, $kt=k$.
Then,  a Brzozowski automaton is a complete  DFA $(\Sigma, Q=\{0,\ldots , n-1\}, 0, F=\{n-1\}, \cdot)$, where any letter of $\Sigma$ induces one of the transformation among transposition, cycle over $Q$, contraction and identity.

To define shortly such a DFA, we introduce the following definition:
\begin{definition}
A Brzozowski automaton  ${\mathcal X}_n(\sigma_1,\sigma_2,\sigma_3;\Sigma\setminus \{\sigma_1,\sigma_2, \sigma_3\})=(\Sigma, \{0,\ldots, n-1\},0,\{n-1\},\cdot)$ is a DFA in which $\sigma_1, \sigma_2, \sigma_3 \in \Sigma\cup \{-\}$ and each  symbol induces a transformation:
\begin{itemize}
\item the letter $\sigma_1\neq -$ induces the $n$-cycle $(0,\ldots, n-1)$,
\item the letter $\sigma_2\neq -$ induces the transposition $(0,1)$,
\item the letter  $\sigma_3\neq -$ induces the contraction $\left(\begin{array}{c}1\\0\end{array}\right)$,
\item every other letter of $\Sigma$ induces the identity on $Q$.
\end{itemize}
\end{definition}

  Let $\Sigma= \{a,b,c,d\}$. As an example of Brzozowski automata (see Figure \ref{Brzo}), let \label{Brzo-def} ${\cal X}_n(a,-,c;\{b,d\})=(\Sigma,Q_n,0,\{n-1\},\cdot)$ where $Q_n=\{0,1,\ldots ,n-1\}$, the symbol $a$ acts as the cycle  $(0,1,\ldots, n-1)$, $c$ acts as the contraction $\left(\begin{array}{r}1\\0\end{array}\right)$ and   $b,d$ act as $\mathds{1}$. 

 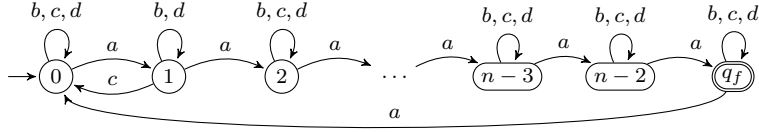
\begin{figure}[htb]
	\centerline{
		\begin{tikzpicture}[node distance=1.5cm, bend angle=25]
			\node[state,initial] (0)  {$0$};
			\node[state] (1) [right of=0] {$1$};
			\node[state] (2) [right of=1] {$2$};
			\node (etc1) [right of=2] {$\ldots$};
			\node[state, rounded rectangle] (m-3) [right of=etc1] {$n-3$};
			\node[state, rounded rectangle] (m-2) [right of=m-3] {$n-2$};
			\node[state,accepting, rounded rectangle] (m-1) [right of=m-2] {$q_f$};
			\path[->]
        (0) edge[bend left] node {$a$} (1)
        (1) edge[bend left] node {$a$} (2)
        (2) edge[bend left] node {$a$} (etc1)
        (etc1) edge[bend left] node {$a$} (m-3)
        (m-3) edge[bend left] node {$a$} (m-2)
        (m-2) edge[bend left] node {$a$} (m-1)
        (m-1) edge[out=-115, in=-65, looseness=.2] node[above] {$a$} (0)
		    (0) edge[out=115,in=65,loop] node {$b,c, d$} (0)
		    (1) edge[out=115,in=65,loop] node {$b,d$} (1)
		    (2) edge[out=115,in=65,loop] node {$b, c, d$} (2)
		    (m-3) edge[out=115,in=65,loop] node {$b, c, d$} (m-3)
		    (m-2) edge[out=115,in=65,loop] node {$b, c, d$} (m-2)
		    (m-1) edge[out=115,in=65,loop] node {$b, c, d$} (m-1)
        (1) edge[bend left] node[above] {$c$} (0)
			;
\end{tikzpicture}
}
\caption{The automaton ${\cal X}_n(a,-,c;\{b,d\})$}\label{Brzo}
\end{figure}	

For convenience, in the following of the paper, we  identify ${\mathcal X}_n$ and $L({\mathcal X}_n)$.

\subsection{Construction algorithms}
We define an operation on automata allowing us to compute a DFA for the catenation of two DFAs.

\begin{definition}\label{CatenationDFAs}
 Let $A=(\Sigma,Q_A,i_A,F_A,\cdot_A)$  and $B=(\Sigma,Q_B,i_B,F_B,\cdot_B)$ be two DFAs. We define the DFA $A\cdot B=(\Sigma, Q, i,F, \cdot)$ as follows : 
\begin{itemize}
\item $Q=\{(p,S)\mid p\in Q_A, S\subset Q_B\}$
\item $i=\left\{\begin{array}{ll}(i_A,\emptyset)&\mbox{ if }i_A\not\in F_A\\(i_A,\{i_B\})&\mbox{ otherwise}\end{array}\right.$
\item $F=\{(p,S)\mid S\cap F_B\neq \emptyset \}$
\item $(p,S)\cdot a=\left\{\begin{array}{ll}(p\cdot a, S\cdot a)&\mbox{ if }p\cdot a\not\in F_A\\(p\cdot a, S\cdot a\cup \{i_B\})&\mbox{ otherwise}\end{array}\right.$
\end{itemize}
\end{definition}
We now define an operation on automata allowing us to compute a DFA for any  boolean operation over two DFAs.

\begin{definition}\label{CartesianDFAs}
 Let $A=(\Sigma,Q_A,i_A,F_A,\cdot_A)$  and $B=(\Sigma,Q_B,i_B,F_B,\cdot_B)$ be two DFAs. We define the DFA $A\circ B=(\Sigma, Q, i,F, \cdot)$ as follows : 
\begin{itemize}
\item $Q=\{(p,q)\mid p\in Q_A, q\in Q_B\}$
\item $i=(i_A,i_B)$
\item $F=\{(p,q)\mid p\in F_A \}\circ \{(p,q)\mid q\in F_B \}$
\item $(p,q)\cdot a= (p\cdot a, q\cdot a)$
\end{itemize}
\end{definition}
It is easy to verify the following lemma :
\begin{lemma}
$L(A\cdot B)=L(A)\cdot L(B)$ and $L(A\circ B)=L(A)\circ L(B)$.
\end{lemma}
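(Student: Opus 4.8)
The plan is to treat the two identities separately, and in each case to establish by induction on the length of $w$ an exact description of the state reached after reading $w$, from which membership can be read off directly using the definition of the final states.

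For the boolean part I would first prove the invariant that in $A\circ B$ one has $i\cdot w=(i_A\cdot w,\, i_B\cdot w)$ for every $w\in\Sigma^*$. The base case $w=\varepsilon$ is the definition of $i$, and the inductive step is immediate from the transition rule $(p,q)\cdot a=(p\cdot a,\, q\cdot a)$, since the two coordinates evolve independently. Once this is known, membership unwinds directly from the definition of $F$: the word $w$ is accepted iff $(i_A\cdot w,\, i_B\cdot w)\in F$, and since $F$ is obtained by applying $\circ$ to the two sets $\{(p,q)\mid p\in F_A\}$ and $\{(p,q)\mid q\in F_B\}$, this amounts to combining the predicates $i_A\cdot w\in F_A$ and $i_B\cdot w\in F_B$ by the boolean operation $\circ$. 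As these predicates are exactly $w\in L(A)$ and $w\in L(B)$, acceptance is equivalent to $w\in L(A)\circ L(B)$, so this half is routine.

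The substantial part is the catenation construction, where the key is to guess the right invariant for the subset component. I would claim that for every $w$, writing $i\cdot w=(p_w,S_w)$ in $A\cdot B$, one has $p_w=i_A\cdot w$ together with
$$S_w=\{\, i_B\cdot v \mid \exists\, u,\ w=uv \text{ and } u\in L(A)\,\},$$
i.e. $S_w$ collects exactly those states of $B$ reachable by reading a suffix $v$ of $w$ whose complementary prefix already lies in $L(A)$. The base case records that $S_\varepsilon$ is $\{i_B\}$ or $\emptyset$ according to whether $i_A\in F_A$, which matches the definition of $i$.

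The inductive step is where the bookkeeping must be done carefully, and I expect it to be the main obstacle. Passing from $w$ to $wa$, the factorisations $wa=u'v'$ split into those with $v'$ nonempty, say $v'=va$ with $w=u'v$, which contribute $(i_B\cdot v)\cdot a$ and hence sweep out exactly $S_w\cdot a$; and the single factorisation with $v'=\varepsilon$, which contributes the fresh state $i_B$ and occurs precisely when $wa\in L(A)$, that is when $p_w\cdot a=i_A\cdot wa\in F_A$. This is exactly the case distinction built into the transition rule of Definition~\ref{CatenationDFAs}, so the invariant propagates. Finally, $w$ is accepted by $A\cdot B$ iff $S_w\cap F_B\neq\emptyset$, which by the invariant means there is a factorisation $w=uv$ with $u\in L(A)$ and $i_B\cdot v\in F_B$, i.e. $v\in L(B)$; this is the definition of $w\in L(A)\cdot L(B)$, completing the argument.
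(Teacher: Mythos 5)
Your proof is correct: the paper itself gives no proof of this lemma (it is stated as ``easy to verify''), and your two invariants --- the product-state invariant $i\cdot w=(i_A\cdot w,\,i_B\cdot w)$ for the boolean construction, and the suffix-set invariant $S_w=\{\,i_B\cdot v \mid w=uv,\ u\in L(A)\,\}$ for the catenation construction --- are exactly the standard verification the authors are leaving implicit. In particular your inductive step for catenation correctly matches the case distinction $p\cdot a\in F_A$ in Definition~\ref{CatenationDFAs} with the empty-suffix factorisation $wa=u'\varepsilon$, so nothing is missing.
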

These constructions can be combined in several ways.   Table \ref{combop} summarizes the different forms of states one can have.

\begin{table}[h]
$$\begin{array}{|c|c|}
\hline
(A\cdot  B)\cdot C&(i,S_1,S_2)\\
\hline
(A\cdot  B)\circ C&(i,S_1,k)\\
\hline
(A\circ  B)\cdot C&(i,j,S_2)\\
\hline
A\cdot  (B\circ C)&(i,T)\\
\hline
(A\circ_1  B)\circ_2 C&(i,j,k)\\
\hline
\end{array}$$
\caption{Forms of states for  combined operations where $i\in Q_A, j\in Q_B, k\in Q_C, S_1\subset Q_B, S_2\subset Q_C, T\subset Q_B\times Q_C$}\label{combop} 
\end{table} 
In the following of the paper, the name of the state is  considered modulo the size of the automaton it belongs to. For instance, the state $(i,j,S_2)$ is in fact the state $([i]_m,[j]_n,S_2)$ where $m$ and $n$ are the respective  number of states of $A$ and $B$. 

\subsection{Tableaux}
About the construction  $A\cdot(B\circ C)$ whose states are of the form $(i,T)$, 
 the set $T$ can be seen as a tableau with $n$ rows and $p$ columns where any cell $(j,k)$ is marked if and only if the couple of states $(j,k)$ is in $T$ (see Figure \ref{tableau}). In the following, for simplicity, when the dimensions are fixed, we 
assimilate a tableau with the set of its marked cells.

  \begin{figure}[H]
    \centerline{ 
      \begin{tikzpicture}[scale=0.5]   
	    \foreach \x in {1,...,7} {
	      \foreach \y in {1,...,6} {
	        \pgfmathparse{\x+1} \let\z\pgfmathresult
	        \pgfmathparse{\y+1} \let\t\pgfmathresult
	        \draw[fill=gray!40] (\x,\y) rectangle (\x+1,\y+1);
	      }
	    }  
	    \foreach \x/\y in {3/3,6/3,6/5} {
	        \pgfmathparse{\x+1} \let\z\pgfmathresult
	        \pgfmathparse{\y+1} \let\t\pgfmathresult
	        \draw[fill=white] (\x,\y) rectangle (\x+1,\y+1);
	        \draw (\x,\y) -- (\z,\t);
	        \draw (\z,\y) -- (\x,\t);
	    }
      \end{tikzpicture}
    }
    \caption{The tableau corresponding to $T=\{(3,2),(1,5),(3,5)\}$ with $n=6$ and $p=7$.}
    \label{tableau}    
  \end{figure}
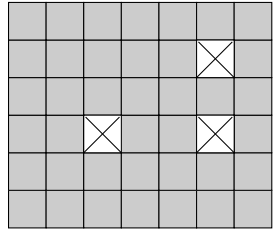 

Since the state complexity of catenation is $ \sc_\bullet(m,m')=(m-1)2^{m'}+2^{m'-1}$ and the state complexity of a binary boolean operation
$\circ$
is bounded by $\sc_\circ(n,p)=np$ (see \cite{YZS94}), their composition allows to bound
the state complexity of  $A\cdot(B\circ C)$ by $(m-1)2^{np}+2^{np-1}$. This bound is reached when $\circ=\cap$ \cite{CGKY11}.

 The state complexity for the combination of catenation with union ($A\cdot(B\cup C)$) has been studied in \cite{CGKY11} but it can be reinterpreted using the tableaux defined previously. Let $(i,T)$ and $(i,T')$ be two distinct states such that the couples $(x,x')$ and $(y,y')$ are in $T'$ and $T=T'\cup \{(x,y')\}$. Then the two states $(i,T)$ and $(i,T')$ are  equivalent. Indeed,  to separate  these states, one has to find a word $w$ such that $(1)$ $T'\cdot w$ is equal to a set of couples which members are both non-final and $(2)$ $(x,y')\cdot w$ leads to a couple of states at least one of the two is final. The fact that $x\cdot w$ or $y'\cdot w$ is final  contradicts $(1)$. So $(i,T)$ and $(i,T')$ are equivalent.

 Such equivalent states have tableaux with specific patterns. Indeed, the tableaux for $T$ and $T'$ contain the pattern of Figure \ref{fig two tab}(a) and Figure \ref{fig two tab}(b) respectively. None of them can be distinguished from the pattern of  Figure \ref{fig two tab}(c). So the number of equivalent states is the number of indistinguishable tableaux represented by the patterns of  Figure \ref{fig two tab}. The number of tableaux not containing patterns of Figure \ref{fig two tab}(a) or Figure \ref{fig two tab}(b) is $(2^n-1)(2^p-1)+1$.
 
 \begin{figure}[htb]
 
	\begin{minipage}{0.32\linewidth}
	\centerline{
  \begin{tikzpicture}[scale=0.4]   
	    \foreach \x in {1,...,6} {
	      \foreach \y in {1,...,5} {
	        \pgfmathparse{\x+1} \let\z\pgfmathresult
	        \pgfmathparse{\y+1} \let\t\pgfmathresult
	        \draw[fill=gray!40] (\x,\y) rectangle (\x+1,\y+1);
	      }
	    }  
	    \foreach \x/\y in {2/2,5/4} {
	        \pgfmathparse{\x+1} \let\z\pgfmathresult
	        \pgfmathparse{\y+1} \let\t\pgfmathresult
	        \draw[fill=white] (\x,\y) rectangle (\x+1,\y+1);
	        \draw (\x,\y) -- (\z,\t);
	        \draw (\z,\y) -- (\x,\t);
	    }	   
  	    \foreach \x/\y in {5/2,2/4} {
	        \pgfmathparse{\x+1} \let\z\pgfmathresult
	        \pgfmathparse{\y+1} \let\t\pgfmathresult
	        \draw[fill=white] (\x,\y) rectangle (\x+1,\y+1);
	    }	   
    \end{tikzpicture}
}
\end{minipage}
	\begin{minipage}{0.32\linewidth}
	\centerline{
  \begin{tikzpicture}[scale=0.4]   
	    \foreach \x in {1,...,6} {
	      \foreach \y in {1,...,5} {
	        \pgfmathparse{\x+1} \let\z\pgfmathresult
	        \pgfmathparse{\y+1} \let\t\pgfmathresult
	        \draw[fill=gray!40] (\x,\y) rectangle (\x+1,\y+1);
	      }
	    }  
	    \foreach \x/\y in {2/2,5/2,5/4} {
	        \pgfmathparse{\x+1} \let\z\pgfmathresult
	        \pgfmathparse{\y+1} \let\t\pgfmathresult
	        \draw[fill=white] (\x,\y) rectangle (\x+1,\y+1);
	        \draw (\x,\y) -- (\z,\t);
	        \draw (\z,\y) -- (\x,\t);
 }
  	    \foreach \x/\y in {2/4} {
	        \pgfmathparse{\x+1} \let\z\pgfmathresult
	        \pgfmathparse{\y+1} \let\t\pgfmathresult
	        \draw[fill=white] (\x,\y) rectangle (\x+1,\y+1);
}	    
      \end{tikzpicture}
}
\end{minipage}
	\begin{minipage}{0.32\linewidth}
	\centerline{
  \begin{tikzpicture}[scale=0.4]   
	    \foreach \x in {1,...,6} {
	      \foreach \y in {1,...,5} {
	        \pgfmathparse{\x+1} \let\z\pgfmathresult
	        \pgfmathparse{\y+1} \let\t\pgfmathresult
	        \draw[fill=gray!40] (\x,\y) rectangle (\x+1,\y+1);
	      }
	    }  
	    \foreach \x/\y in {2/2,5/2,2/4, 5/4} {
	        \pgfmathparse{\x+1} \let\z\pgfmathresult
	        \pgfmathparse{\y+1} \let\t\pgfmathresult
	        \draw[fill=white] (\x,\y) rectangle (\x+1,\y+1);
	        \draw (\x,\y) -- (\z,\t);
	        \draw (\z,\y) -- (\x,\t);
	    }
      \end{tikzpicture}
}
\end{minipage}
\caption{Three indistinguishable tableaux (a), (b), (c), for the union operator.}\label{fig two tab}
\end{figure}
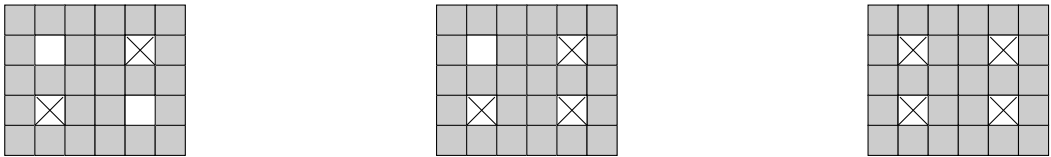	
	
 Indeed, one has to choose among $n$ rows and $p$ columns (at least one of each) and mark every cell at the intersection of the chosen rows and columns ($(2^n-1)(2^p-1)$) plus one configuration with no cell marked. We also have to count the same tableaux but with the cell $(0,0)$ marked ($2^{n-1}2^{p-1}$ tableaux). Combined with the state complexity of catenation, these observations 
lead
to the state complexity $(m-1)((2^n-1)(2^p-1)+1)+2^{n-1}2^{p-1}=(m-1)(2^{n+p}-2^n-2^p+2)+2^{n+p-2}$ of $A\cdot(B\cup C)$.
 
  It is easy to check that there exist DFAs $A$, $B$ and $C$ such that there are no indistinguishable tableaux for $A\cdot(B\cap C)$, the state complexity of catenation combined with intersection coincides with the bound.

As for the union, some particular states are necessarily equivalent for $A\cdot(B\oplus C)$. Let $(i,T)$ and $(i,T')$ be two distinct states such that the couples $(x,x')$, $(x,y')$ and $(y,y')$ are in $T'$ and $T=T'\cup \{(y,x')\}$. Then the two states $(i,T)$ and $(i,T')$ are equivalent.
  Indeed, if a word $w$ separates  $(i,T)$ and $(i,T')$, then $w$ sends $y$ in $F_B$ or $x'$ in $F_C$ but not both,  sending $(i,T)$ to a final state of $A\cdot(B\oplus C)$. This cannot be achieved without sending  $(i,T')$ to a final state of $A\cdot(B\oplus C)$, thus contradicting the separation by $w$. 

  Such equivalent states imply indistinguishable tableaux as described below.
  Four distinct marked cells $s_1$, $s_2$, $s_3$ and $s_4$ define a \emph{rectangle} if there exist four integers $x$, $x'$, $y$ and $y'$ such that $\{s_1,s_2,s_3,s_4\}=\{x,y\}\times\{x',y'\}$. 
  Three distinct marked cells $s_1$, $s_2$ and $s_3$ form a \emph{right triangle} if there exists an unmarked cell $s_4$ such that $s_1$, $s_2$, $s_3$ and $s_4$ form a rectangle (See Figure~\ref{fig rect} and Figure~\ref{fig tri}). 
  
  A tableau $T$ is \emph{saturated} if it is the union of all its equivalent tableaux. Informally, to saturate a tableau, it is sufficient to complete the tableau by marking  the missing cells for   the considered operation.
  
  \begin{minipage}{0.45\linewidth}
    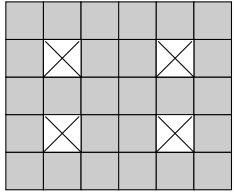
\begin{figure}[H]
      \centerline{
        \begin{tikzpicture}[scale=0.5]   
	      \foreach \x in {1,...,6} {
	        \foreach \y in {1,...,5} {
	          \pgfmathparse{\x+1} \let\z\pgfmathresult
	          \pgfmathparse{\y+1} \let\t\pgfmathresult
	          \draw[fill=gray!40] (\x,\y) rectangle (\x+1,\y+1);
	        }
	      }  
	      \foreach \x/\y in {2/2,5/4,2/4,5/2} {
	        \pgfmathparse{\x+1} \let\z\pgfmathresult
	        \pgfmathparse{\y+1} \let\t\pgfmathresult
	        \draw[fill=white] (\x,\y) rectangle (\x+1,\y+1);
	        \draw (\x,\y) -- (\z,\t);
	        \draw (\z,\y) -- (\x,\t);
	      }
        \end{tikzpicture}
      }
      \caption{A rectangle.}
      \label{fig rect}
    \end{figure}
  \end{minipage}
  \hfill  
  \begin{minipage}{0.45\linewidth}
    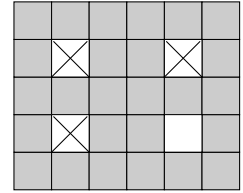
\begin{figure}[H]
      \centerline{
        \begin{tikzpicture}[scale=0.5]   
	      \foreach \x in {1,...,6} {
	        \foreach \y in {1,...,5} {
	          \pgfmathparse{\x+1} \let\z\pgfmathresult
	          \pgfmathparse{\y+1} \let\t\pgfmathresult
	          \draw[fill=gray!40] (\x,\y) rectangle (\x+1,\y+1);
	        }
	      }  
	      \foreach \x/\y in {2/2,5/4,2/4} {
	        \pgfmathparse{\x+1} \let\z\pgfmathresult
	        \pgfmathparse{\y+1} \let\t\pgfmathresult
	        \draw[fill=white] (\x,\y) rectangle (\x+1,\y+1);
	        \draw (\x,\y) -- (\z,\t);
	        \draw (\z,\y) -- (\x,\t);
	      }
	      	      \foreach \x/\y in {5/2} {
	        \pgfmathparse{\x+1} \let\z\pgfmathresult
	        \pgfmathparse{\y+1} \let\t\pgfmathresult
	        \draw[fill=white] (\x,\y) rectangle (\x+1,\y+1);
}
        \end{tikzpicture}
      }
      \caption{A right triangle.}
      \label{fig tri}
    \end{figure}
  \end{minipage}
  
\section{The various combinations}

In the sequel of the paper, for each combination of operations, we proceed as follows:
\begin{itemize}
\item[$\bullet$] First, we consider a certain kind of states, computed by applying some constraints on the states of Table \ref{combop}. The enumeration of these states gives the  state complexity of each combination of operations.
\item[$\bullet$] Then, we provide a  Brzozowski witness, often common for all $\circ$ operators, over an alphabet with a cardinality lower than the one described in the literature.
\item[$\bullet$] Finally, we show the accessibility and the pairwise non-equivalence of the states for this witness.
\end{itemize}

\subsection{Double catenation}

In \cite{CLP16}, we give a $3$-letters Brzozowski  witness for the double catenation:
\[{}
W^{(\oo\cdot\oo)\cdot\oo}_{m,n,p}=\left(\chi_m(b,c,-;\{a\}),\chi_n(a,b,c;\emptyset),\chi_p(a,-,b;\{c\})\right).
\]
 This witness is given in Figure \ref{3-letters}.
A $2$-letters witness for the catenation is also given in  \cite{CLP16}. It can be  deduced from the previous one by considering only the two last automata restricted to  letters $a$ and $b$. This witness will be useful in the case of the combined operations $(\oo\circ \oo)\cdot \oo$.

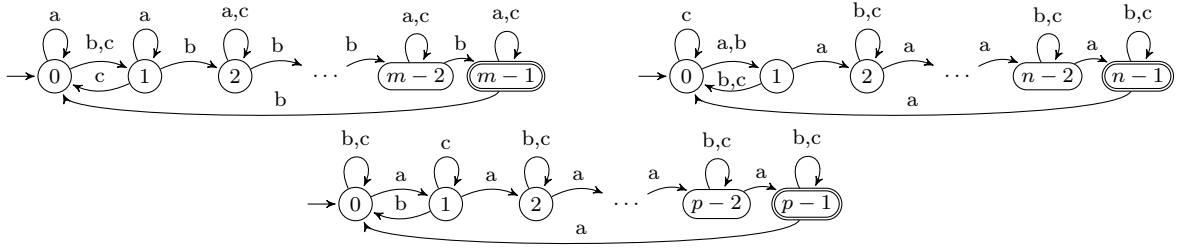
\begin{figure}[htb]
	\centerline{
		\begin{tikzpicture}[node distance=1.2cm, bend angle=25]
			\node[state,initial] (p0) {$0$};
			\node[state] (p1) [right of=p0] {$1$};
			\node[state] (p2) [right of=p1] {$2$};
			\node (etc1) [right of=p2] {$\ldots$};
			\node[state, rounded rectangle] (m-2) [right of=etc1] {${m-2}$};
			\node[state, rounded rectangle, accepting] (m-1) [right of=m-2] {${m-1}$};
			\path[->]
        (p0) edge[bend left] node {b,c} (p1)
        (p1) edge[bend left] node {b} (p2)
        (p2) edge[bend left] node {b} (etc1)
        (etc1) edge[bend left] node {b} (m-2)
        (m-2) edge[bend left] node {b} (m-1)
        (m-1) edge[out=-115, in=-65, looseness=.2] node[above] {b} (p0)
		    (p0) edge[out=115,in=65,loop] node {a} (p0)
		    (p1) edge[out=115,in=65,loop] node {a} (p1)
		    (p2) edge[out=115,in=65,loop] node {a,c} (p2)
		    (m-2) edge[out=115,in=65,loop] node {a,c} (m-2)
		    (m-1) edge[out=115,in=65,loop] node {a,c} (m-1)
        (p1) edge[bend left] node[above] {c} (p0)
			;
			\node (concon) [right of=m-1]{};
			\node[state,initial] (q0) [right of=concon]{$0$};
			\node[state] (q1) [right of=q0] {$1$};
			\node[state] (q2) [right of=q1] {$2$};
			\node (etc2) [right of=q2] {$\ldots$};
			\node[state, rounded rectangle] (n-2) [right of=etc2] {${n-2}$};
			\node[state, rounded rectangle, accepting] (n-1) [right of=n-2] {${n-1}$};
			\path[->]
        (q0) edge[bend left] node {a,b} (q1)
        (q1) edge[bend left] node {a} (q2)
        (q2) edge[bend left] node {a} (etc2)
        (etc2) edge[bend left] node {a} (n-2)
        (n-2) edge[bend left] node {a} (n-1)
        (n-1) edge[out=-115, in=-65, looseness=.2] node[above] {a} (q0)
		    (q0) edge[out=115,in=65,loop] node {c} (q0)
		    (q2) edge[out=115,in=65,loop] node {b,c} (q2)
		    (n-2) edge[out=115,in=65,loop] node {b,c} (n-2)
		    (n-1) edge[out=115,in=65,loop] node {b,c} (n-1)
        (q1) edge[bend left] node[above=-.1cm] {b,c} (q0)
			;
			\node (concon2) [below of=etc1, node distance=1.7cm]{};
			\node[state,initial] (r0) [right of=concon2, node distance=.4cm]{$0$};
			\node[state] (r1) [right of=r0] {$1$};
			\node[state] (r2) [right of=r1] {$2$};
			\node (etc3) [right of=r2] {$\ldots$};
			\node[state, rounded rectangle] (p-2) [right of=etc3] {${p-2}$};
			\node[state, rounded rectangle, accepting] (p-1) [right of=p-2] {${p-1}$};
			\path[->]
        (r0) edge[bend left] node {a} (r1)
        (r1) edge[bend left] node {a} (r2)
        (r2) edge[bend left] node {a} (etc3)
        (etc3) edge[bend left] node {a} (p-2)
        (p-2) edge[bend left] node {a} (p-1)
        (p-1) edge[out=-115, in=-65, looseness=.2] node[above] {a} (r0)
		    (r0) edge[out=115,in=65,loop] node {b,c} (r0)
		    (r1) edge[out=115,in=65,loop] node {c} (r1)
		    (r2) edge[out=115,in=65,loop] node {b,c} (r2)
		    (p-2) edge[out=115,in=65,loop] node {b,c} (p-2)
		    (p-1) edge[out=115,in=65,loop] node {b,c} (p-1)
        (r1) edge[bend left] node[above] {b} (r0)
			;
    \end{tikzpicture}
  }
  \caption{$3$-letters witness for double catenation}
  \label{3-letters}
\end{figure}

\subsection{Combinations of boolean operations}
In this section, we consider the operators $(\oo\circ_{1}\oo)\circ_2\oo$ where $\circ_{1},\circ_{2}\in\{\cup,\cap,\oplus\}$. Notice that, since the operators $\circ_{1}$ and $\circ_{2}$ are commutative, the state complexity of  $\oo\circ_{2}(\oo\circ_{1}\oo)$ is the same as the one of $(\oo\circ_{1}\oo)\circ_{2}\oo$.

The witness we consider, represented in Figure \ref{2-letters bool}, is
\[{}
W^{(\star\circ_{1}\star)\circ_{2}\star}_{m,n,p}=\left({\mathcal X}_m(a,-,-;\{b\}),{\mathcal X}_n(a,b,-;\emptyset),{\mathcal X}_p(b,-,-;\{a\})\right).
\]

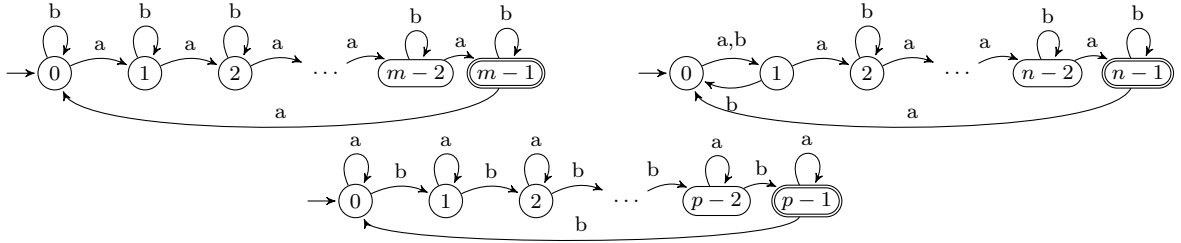
\begin{figure}[htb]
	\centerline{
		\begin{tikzpicture}[node distance=1.2cm, bend angle=25]
			\node[state,initial] (p0) {$0$};
			\node[state] (p1) [right of=p0] {$1$};
			\node[state] (p2) [right of=p1] {$2$};
			\node (etc1) [right of=p2] {$\ldots$};
			\node[state, rounded rectangle] (m-2) [right of=etc1] {${m-2}$};
			\node[state, rounded rectangle, accepting] (m-1) [right of=m-2] {${m-1}$};
			\path[->]
        (p0) edge[bend left] node {a} (p1)
        (p1) edge[bend left] node {a} (p2)
        (p2) edge[bend left] node {a} (etc1)
        (etc1) edge[bend left] node {a} (m-2)
        (m-2) edge[bend left] node {a} (m-1)
        (m-1) edge[out=-115, in=-65, looseness=.33] node[above] {a} (p0)
		    (p0) edge[out=115,in=65,loop] node {b} (p0)
		    (p1) edge[out=115,in=65,loop] node {b} (p1)
		    (p2) edge[out=115,in=65,loop] node {b} (p2)
		    (m-2) edge[out=115,in=65,loop] node {b} (m-2)
		    (m-1) edge[out=115,in=65,loop] node {b} (m-1)
			;
			\node (concon) [right of=m-1]{};
			\node[state,initial] (q0) [right of=concon]{$0$};
			\node[state] (q1) [right of=q0] {$1$};
			\node[state] (q2) [right of=q1] {$2$};
			\node (etc2) [right of=q2] {$\ldots$};
			\node[state, rounded rectangle] (n-2) [right of=etc2] {${n-2}$};
			\node[state, rounded rectangle, accepting] (n-1) [right of=n-2] {${n-1}$};
			\path[->]
        (q0) edge[bend left] node {a,b} (q1)
        (q1) edge[bend left] node {a} (q2)
        (q1) edge[bend left] node {b} (q0)
        (q2) edge[bend left] node {a} (etc2)
        (etc2) edge[bend left] node {a} (n-2)
        (n-2) edge[bend left] node {a} (n-1)
        (n-1) edge[out=-115, in=-65, looseness=.33] node[above] {a} (q0)
		    (q2) edge[out=115,in=65,loop] node {b} (q2)
		    (n-2) edge[out=115,in=65,loop] node {b} (n-2)
		    (n-1) edge[out=115,in=65,loop] node {b} (n-1)
			;
			\node (concon2) [below of=etc1, node distance=1.7cm]{};
			\node[state,initial] (r0) [right of=concon2, node distance=.4cm]{$0$};
			\node[state] (r1) [right of=r0] {$1$};
			\node[state] (r2) [right of=r1] {$2$};
			\node (etc3) [right of=r2] {$\ldots$};
			\node[state, rounded rectangle] (p-2) [right of=etc3] {${p-2}$};
			\node[state, rounded rectangle, accepting] (p-1) [right of=p-2] {${p-1}$};
			\path[->]
        (r0) edge[bend left] node {b} (r1)
        (r1) edge[bend left] node {b} (r2)
        (r2) edge[bend left] node {b} (etc3)
        (etc3) edge[bend left] node {b} (p-2)
        (p-2) edge[bend left] node {b} (p-1)
        (p-1) edge[out=-115, in=-65, looseness=.2] node[above] {b} (r0)
		    (r0) edge[out=115,in=65,loop] node {a} (r0)
		    (r1) edge[out=115,in=65,loop] node {a} (r1)
		    (r2) edge[out=115,in=65,loop] node {a} (r2)
		    (p-2) edge[out=115,in=65,loop] node {a} (p-2)
		    (p-1) edge[out=115,in=65,loop] node {a} (p-1)
			;
    \end{tikzpicture}
  }
  \caption{$2$-letters witness for any combination of two boolean operations}
  \label{2-letters bool}
\end{figure}

According to the construction described in Section \ref{Brzozowski witnesses}, we examine the reachability and the pairwise equivalence of the states of the automaton
\[{}
\mathcal R^{(\star\circ_{1}\star)\circ_{2}\star}_{m,n,p}=\left({\mathcal X}_m(a,-,-;\{b\})\circ_{1}{\mathcal X}_n(a,b,-;\emptyset)\right)\circ_{2}{\mathcal X}_p(b,-,-;\{a\}).
\]
From Table \ref{combop}, the states of $\mathcal R^{(\star\circ_{1}\star)\circ_{2}\star}_{m,n,p}$ are under the form $(i,j,k)$ with $0\leq i\leq m-1$, $0\leq j\leq n-1$, and $0\leq k\leq p-1$. 
We need the following which is straightforward from the construction.
\begin{lemma}\label{lm-permut}
Let $A=(\Sigma, Q_A,i_A,F_A,\cdot_A)$ and $B=(\Sigma, Q_B,i_B,F_B,\cdot_B)$. If $w\in \Sigma^*$ induces a permutation on both $Q_A$ and $Q_B$, then it also induces a permutation on $Q_{A\circ B}$, for any $\circ \in \{\cap, \cup, \oplus\}$.
\end{lemma}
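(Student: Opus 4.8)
The plan is to exploit the fact that, in Definition~\ref{CartesianDFAs}, the state set $Q_{A\circ B}=Q_A\times Q_B$ and the transition function $(p,q)\cdot a=(p\cdot a,q\cdot a)$ are \emph{completely independent} of the boolean operation $\circ$; only the set $F$ of final states depends on the choice among $\cap$, $\cup$, $\oplus$. Since the notion of ``inducing a permutation on $Q_{A\circ B}$'' concerns only the transition function and not the acceptance condition, it suffices to establish the claim once for this common product structure, and the result for all three operations follows simultaneously.

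First I would extend the single-letter rule of Definition~\ref{CartesianDFAs} to arbitrary words by an immediate induction on $|w|$, yielding $(p,q)\cdot w=(p\cdot w,q\cdot w)$ for every $w\in\Sigma^*$. Next, writing $\pi_A$ and $\pi_B$ for the permutations that $w$ induces on $Q_A$ and $Q_B$ respectively, the map $(p,q)\mapsto(p,q)\cdot w$ on $Q_{A\circ B}$ is exactly the Cartesian product $\pi_A\times\pi_B\colon(p,q)\mapsto(\pi_A(p),\pi_B(q))$.

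It then remains to observe that a Cartesian product of two bijections is a bijection: if $(\pi_A(p),\pi_B(q))=(\pi_A(p'),\pi_B(q'))$ then $p=p'$ and $q=q'$ by the injectivity of $\pi_A$ and $\pi_B$, so the map is injective, and since $Q_{A\circ B}=Q_A\times Q_B$ is finite this already forces it to be a permutation (alternatively, surjectivity follows directly by taking preimages componentwise). This gives the conclusion for every $\circ\in\{\cap,\cup,\oplus\}$ at once.

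There is essentially no obstacle here: the statement is, as noted, straightforward from the construction. The only point requiring a moment of care is conceptual rather than technical, namely to separate the transition structure of $A\circ B$ (which alone carries the permutation information and is identical for all three operations) from the acceptance condition (the sole ingredient that distinguishes $\cap$, $\cup$ and $\oplus$); once this separation is made explicit, the verification is routine.
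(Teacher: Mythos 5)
Your proof is correct and matches the paper's intent: the paper gives no explicit proof, stating the lemma is ``straightforward from the construction,'' and the argument it has in mind is exactly yours --- transitions in $A\circ B$ act componentwise as $(p,q)\cdot w=(p\cdot w,q\cdot w)$, so $w$ acts as the product $\pi_A\times\pi_B$ of two bijections, which is a bijection, independently of the final-state set and hence uniformly for $\cap$, $\cup$ and $\oplus$. Your explicit remark that only the acceptance condition depends on $\circ$ is a clean way of making the paper's one-line justification rigorous.
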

Noticing that any word $w\in \{a,b\}^*$ induces a permutation on the states of $\chi_m(a,-,-,\{b\})$,  $\chi_n(a,b,-,\emptyset)$, and $\chi_p(b,-,-,\{a\})$, we apply Lemma \ref{lm-permut} to show that $w$ induces a permutation on the states of $\mathcal R^{(\star\circ_{1}\star)\circ_{2}\star}_{m,n,p}$.

\begin{lemma}\label{lm-inverse}
Let $A=(\Sigma, Q_A,i_A,F_A,\cdot_A)$. If $w\in \Sigma^*$ induces a permutation $\sigma$ on $Q_A$, then there exists a word $u$ inducing the permutation $\sigma^{-1}$ on $Q_A$. In other words, we have $(s\cdot u)\cdot w=(s\cdot w)\cdot u=s$ for any $s$ of $Q_A$. We denote by $w^{-1}$ such a word $u$ and $w^{-i}=(w^{-1})^i=(w^{i})^{-1}$.
\end{lemma}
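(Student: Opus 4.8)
The plan is to exploit the finiteness of $Q_A$ to show that the permutation $\sigma$ has finite order, so that $\sigma^{-1}$ is itself a \emph{positive} power of $\sigma$ and is therefore induced by a suitable power of the word $w$. The one fact I would record at the outset is that concatenation of words corresponds to composition of the induced transformations: since $q\cdot(uv)=(q\cdot u)\cdot v$ for every state $q$, the word $w^{k}$ induces exactly $\sigma^{k}$ on $Q_A$ for all $k\geq 0$ (with $w^{0}=\varepsilon$ inducing $\mathds{1}$). I will use this correspondence repeatedly.

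The heart of the argument is a pigeonhole step. Since $Q_A$ is finite there are only finitely many transformations of $Q_A$, so the transformations $\sigma^{0}=\mathds{1},\sigma^{1},\sigma^{2},\ldots$ cannot all be distinct; hence there exist integers $0\leq i<j$ with $\sigma^{i}=\sigma^{j}$. As $\sigma$ is a permutation it is invertible as a map, and cancelling $\sigma^{i}$ gives $\sigma^{j-i}=\mathds{1}$. Writing $d=j-i\geq 1$ we obtain $\sigma^{d}=\mathds{1}$, whence $\sigma^{d-1}\sigma=\sigma\sigma^{d-1}=\sigma^{d}=\mathds{1}$, so that $\sigma^{d-1}$ is precisely the inverse permutation $\sigma^{-1}$.

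I would then set $u=w^{d-1}$, which induces $\sigma^{d-1}=\sigma^{-1}$ by the correspondence above. The claimed identities follow at once: for any $s\in Q_A$ we have $(s\cdot u)\cdot w=s\cdot(uw)=s\cdot w^{d}$ and $(s\cdot w)\cdot u=s\cdot(wu)=s\cdot w^{d}$, both equal to $s$ because $w^{d}$ induces $\sigma^{d}=\mathds{1}$. The notational conventions $w^{-1}:=u$ and $w^{-i}=(w^{-1})^{i}=(w^{i})^{-1}$ are then consistent, since $(w^{-1})^{i}$ induces $(\sigma^{-1})^{i}=(\sigma^{i})^{-1}$.

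There is no genuine obstacle here: the statement is a routine finite-order fact, and the existence of $u$ is all that is asserted (the exponent $d$ depends on $w$, namely it is the order of $\sigma$, which is harmless). The only point needing a touch of care is the composition-order bookkeeping when identifying $\sigma^{d-1}$ with $\sigma^{-1}$ and when invoking the word-to-transformation correspondence; but since every word involved is a power of the single word $w$, the order of composition is immaterial and the verification is immediate.
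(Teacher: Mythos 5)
Your proof is correct and follows essentially the same route as the paper: both establish that $\sigma$ has finite order $d$ and take $u=w^{d-1}$, which induces $\sigma^{d-1}=\sigma^{-1}$. The only cosmetic difference is how finite order is justified --- the paper takes $d$ to be the lcm of the cycle lengths of $\sigma$, while you use a pigeonhole argument on the finitely many transformations of $Q_A$ followed by cancellation; both are immediate and equivalent.
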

\begin{proof}
Since $\sigma$ is a permutation, there exists $N\in \mathbb{N}$ such that $\sigma^{N}=Id$. It suffices to set $N$ as the \emph{lcm} of the sizes of the cycles of $\sigma$. Hence, $u=w^{N-1}$ acts as $\sigma^{N-1}=\sigma^{-1}$ on $Q_A$.
\end{proof}
For any $j\in \mathbb{Z}$, we define $u_{j}=a^{-j}ba^{j}$. The action of $u_{j}$ on the states is given by
	\begin{equation}
		(i,j,k)\cdot u_{j'}=\left\{\begin{array}{ll}(i,j,k+1)&\mbox{if }j\not\in \{\mmod{j'}{n},\mmod{j'+1}{n}\},\\
		(i,j+1,k+1)&\mbox{if }j=\mmod{j'}{n},\\
		(i,j-1,k+1)&\mbox{if }j=\mmod{j'+1}{n}.\end{array}\right.
	\end{equation}
	So we have
	\begin{equation}\begin{array}{rcl}
		(0,0,0)\cdot u_{0}u_{1}\cdots u_{[j-i-2]_{p}}=(0,1,1)\cdot u_{1}u_{2}\cdots u_{[j-i-2]_{p}}&=&(0,2,2)\cdot{}
		 u_{2}u_{3}\cdots u_{[j-i-2]_{p}}\\&=&\cdots=(0,j-i-1,j-i-1).\end{array}
	\end{equation}
	We define for any triplet $(i,j,k)$ the word $w_{ijk}= u_{0}u_{1}\cdots u_{j-i-2}a^{i}u_{j-1}u_{j+1}^{k+i-j}$.
	We have
	\begin{equation}\begin{array}{rcl}
	(0,0,0)\cdot w_{ijk}=(0,j-i-1,j-i-1)\cdot a^{i}u_{j-1}u_{j+1}^{k+i-j}&=&(i,j-1,j-i-1)\cdot
	u_{j-1}u_{j+1}^{k+i-j}\\&=&(i,j,j-i)\cdot u_{j+1}^{k+i-j}.\end{array}
	\end{equation}
	But since $n>2$, we have $j\not\in \{\mmod{j+1}{n},\mmod{j+2}{n}\}$. Hence,
	\begin{equation}\label{wijk}
		(0,0,0)\cdot w_{ijk}=(i,j,k).
	\end{equation}
	The accessibility follows immediately from equation (\ref{wijk}).
	\begin{proposition}\label{o1o2acc}
		All the states of $\mathcal R^{(\star\circ_{1}\star)\circ_{2}\star}$ are accessible.
	\end{proposition}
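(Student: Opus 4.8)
The plan is to show that the initial state $(0,0,0)$ of $\mathcal R^{(\star\circ_{1}\star)\circ_{2}\star}_{m,n,p}$ can be driven to every admissible triple $(i,j,k)$ with $0\le i\le m-1$, $0\le j\le n-1$ and $0\le k\le p-1$, since by Table \ref{combop} these are exactly the states of the automaton. Because a state is accessible precisely when some word maps the initial state onto it, it suffices to exhibit, for each such triple, a word $w$ with $(0,0,0)\cdot w=(i,j,k)$. All the machinery for this has been assembled above, and the conclusion is immediate from equation (\ref{wijk}): it names one such word, $w_{ijk}$. So the real content of the argument lies entirely in establishing that equation, and I would present the proposition as its corollary.

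Before invoking (\ref{wijk}), I would make explicit why the construction is not trivial. In the product automaton the three components share the alphabet $\{a,b\}$, so a single letter acts on all three coordinates at once: $a$ is an $m$-cycle on the first component and an $n$-cycle on the second but the identity on the third, while $b$ is the identity on the first, the transposition $(0,1)$ on the second, and a $p$-cycle on the third. Hence no letter moves one coordinate while freezing the other two, and the genuine obstacle is that $i$, $j$ and $k$ cannot be set independently. The gadget $u_{j'}=a^{-j'}ba^{j'}$ is exactly what overcomes this: the two blocks of $a$'s cancel on the first component so $i$ is fixed, the $a$'s are inert on the third component so $k$ advances by one, and on the second component conjugation turns $(0,1)$ into the localized transposition $(\mmod{j'}{n},\mmod{j'+1}{n})$. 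This is the behaviour recorded in equation (1), and verifying these three coordinate actions is the first step I would carry out.

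With the action of $u_{j'}$ in hand, the remaining verification splits into two stages. First, the telescoping product $u_{0}u_{1}\cdots u_{j-i-2}$ drives $(0,0,0)$ to $(0,j-i-1,j-i-1)$ by advancing the $j$- and $k$-coordinates in lockstep, as in equation (2). Then the suffix $a^{i}u_{j-1}u_{j+1}^{k+i-j}$ disentangles them: the block $a^{i}$ installs the first coordinate and simultaneously shifts $j$ into position, the single $u_{j-1}$ performs one more synchronized step, and the power of $u_{j+1}$ pumps $k$ up to its target while leaving $j$ untouched, as in equation (3). The one place that needs genuine care is the last point, that the repeated $u_{j+1}$ must not perturb $j$; this is exactly where the hypothesis $n>2$ is used, since it guarantees $j\notin\{\mmod{j+1}{n},\mmod{j+2}{n}\}$ so each $u_{j+1}$ only increments $k$. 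Granting this chain, equation (\ref{wijk}) yields $(0,0,0)\cdot w_{ijk}=(i,j,k)$ for an arbitrary admissible triple, and the accessibility of all states follows at once.
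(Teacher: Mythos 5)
Your proposal is correct and follows essentially the same route as the paper: the paper likewise reduces the proposition to equation (\ref{wijk}), obtained by verifying the coordinate-wise action of the conjugated words $u_{j'}=a^{-j'}ba^{j'}$, telescoping $u_0u_1\cdots u_{j-i-2}$ to reach $(0,j-i-1,j-i-1)$, and then applying $a^{i}u_{j-1}u_{j+1}^{k+i-j}$, with the same appeal to $n>2$ to ensure the powers of $u_{j+1}$ fix the second coordinate. Your added remarks on why the coordinates cannot be set independently are sound motivation but do not change the argument.
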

In the aim to prove the pairwise non equivalence, we need a slightly more general result.
\begin{lemma}\label{0002ijk}
	Let $(i_{s},j_{s},k_{s})$ and $(i_{d},j_{d},k_{d})$ be two states of $\mathcal R^{(\star\circ_{1}\star)\circ_{2}\star}$. There exists a word
	$w$ such that $(i_{s},j_{s},k_{s})\cdot w=(i_{d},j_{d},k_{d})$ .
\end{lemma}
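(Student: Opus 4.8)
The plan is to reduce the general reachability claim to the single-source accessibility already obtained in Proposition~\ref{o1o2acc}, by exploiting the fact that every word over $\{a,b\}$ acts invertibly. Since $a$ and $b$ induce permutations on each of the three component automata, Lemma~\ref{lm-permut} guarantees that every $w\in\{a,b\}^*$ induces a permutation on the states of $\mathcal R^{(\star\circ_{1}\star)\circ_{2}\star}$. In particular each word $w_{ijk}$ built in the accessibility argument is a word over $\{a,b\}$ (its pieces $u_{j'}=a^{-j'}ba^{j'}$, $a^{i}$ and $u_{j+1}^{k+i-j}$ are concatenations of $a$'s and $b$'s, the negative powers being realized by genuine words through Lemma~\ref{lm-inverse}), and hence induces a permutation of the state set.

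First I would record, from Equation~(\ref{wijk}), that $(0,0,0)\cdot w_{i_{s}j_{s}k_{s}}=(i_{s},j_{s},k_{s})$. Because $w_{i_{s}j_{s}k_{s}}$ induces a permutation, Lemma~\ref{lm-inverse} supplies an inverse word $w_{i_{s}j_{s}k_{s}}^{-1}$ over $\{a,b\}$ with
\[
(i_{s},j_{s},k_{s})\cdot w_{i_{s}j_{s}k_{s}}^{-1}=(0,0,0).
\]
Thus every state can be driven back to the initial state $(0,0,0)$, which is the missing half of a strong-connectivity argument: Proposition~\ref{o1o2acc} already routes $(0,0,0)$ to every state.

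Then I would combine the two halves. Setting
\[
w=w_{i_{s}j_{s}k_{s}}^{-1}\,w_{i_{d}j_{d}k_{d}},
\]
we obtain $(i_{s},j_{s},k_{s})\cdot w=(0,0,0)\cdot w_{i_{d}j_{d}k_{d}}=(i_{d},j_{d},k_{d})$, again by Equation~(\ref{wijk}). This $w$ is exactly the word required by the statement, and the construction is uniform in $\circ_1$ and $\circ_2$, so no case analysis on $\cap$, $\cup$, $\oplus$ is needed.

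The only point demanding care — and the step I expect to be the main obstacle, modest as it is — is justifying that $w_{i_{s}j_{s}k_{s}}^{-1}$ is a legitimate word of $\{a,b\}^*$ and not merely a formal symbol. This rests on the two facts highlighted above: $w_{i_{s}j_{s}k_{s}}$ induces a permutation because it is a concatenation of permutation-inducing pieces, and Lemma~\ref{lm-inverse} then realizes the inverse permutation by an actual word. Everything else is the bookkeeping of composing two transitions through $(0,0,0)$.
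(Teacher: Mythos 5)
Your proof is correct and takes essentially the same approach as the paper, whose entire proof is the single line ``It suffices to set $w=w_{i_{s}j_{s}k_{s}}^{-1}w_{i_{d}j_{d}k_{d}}$'', relying implicitly on Equation~(\ref{wijk}) and Lemmas~\ref{lm-permut} and~\ref{lm-inverse} exactly as you invoke them. Your additional care in checking that $w_{i_{s}j_{s}k_{s}}^{-1}$ is a genuine word of $\{a,b\}^*$ (because $a$ and $b$ induce permutations on all three components, so every word does on the product) merely spells out what the paper leaves implicit.
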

\begin{proof}
 It suffices to set $w=w_{i_{s}j_{s}k_{s}}^{-1}w_{i_{d}j_{d}k_{d}}.$
\end{proof}
For the sake of simplicity, we adopt the following notation: 
	\begin{itemize} 
	\item a word of states $i_{1}i_{2}\cdots i_{t}$ denotes the set $\{i_{1},i_{2},\dots,i_{t}\}$.
	\item if $I=\{i_{1},\cdots,i_{k}\}\subset Q$ is a set of states of an automaton $(\Sigma,Q,i,F,\cdot)$, we denote 
	$\overline {i_{1}\cdots{}
	i_{k}}=Q\setminus I$, 
	\item finally, we use $\_$ instead of $\overline\varepsilon=Q$. 
		\end{itemize}
The final states of $\mathcal R^{(\star\circ_{1}\star)\circ_{2}\star}$ are summarized in the following table:
\begin{equation}\label{compo1o2}
\begin{array}{|c|c|c|c|}
\hline
o_{1}\setminus o_{2}&\cup&\cap&\oplus\\\hline
\cup&\begin{array}{c}
(m-1,\_,\_)\\(\_,n-1,\_)\\(\_,\_,p-1)
\end{array}&
\begin{array}{c}
(m-1,\_,p-1)\\(\_,n-1,p-1)
\end{array}&
\begin{array}{c}\rule[0cm]{0cm}{0.5cm}
(m-1,\_,\overline{p-1})\\\rule[0cm]{0cm}{0.5cm}(\_,n-1,\overline{p-1})\\\rule[0cm]{0cm}{0.5cm}(\overline{m-1},\overline{n-1},p-1)
\end{array}\\\hline
\cap&{}
\begin{array}{c}
(m-1,n-1,\_)\\(\_,\_,p-1)
\end{array}&(m-1,n-1,p-1)&
\begin{array}{c}\rule[0cm]{0cm}{0.5cm}
(m-1,n-1,\overline{p-1})\\\rule[0cm]{0cm}{0.5cm}(\overline{m-1,n-1},p-1)
\end{array}\\\hline
\oplus&\begin{array}{c}\rule[0cm]{0cm}{0.5cm}
(m-1,\overline{n-1},\_)\\\rule[0cm]{0cm}{0.5cm}(\overline{m-1},n-1,\_)\\(\_,\_,p-1)
\end{array}&
\begin{array}{c}\rule[0cm]{0cm}{0.5cm}
(m-1,\overline{n-1},p-1)\\\rule[0cm]{0cm}{0.5cm}(\overline{m-1},n-1,p-1)
\end{array}&
\begin{array}{c}\rule[0cm]{0cm}{0.5cm}
(m-1,\overline{n-1},\overline{p-1})\\\rule[0cm]{0cm}{0.5cm}(\overline{m-1},n-1,\overline{p-1})\\
\rule[0cm]{0cm}{0.5cm}(\overline{m-1},\overline{n-1},p-1)\\(m-1,n-1,p-1)
\end{array}\\\hline
\end{array}
\end{equation}
We warn the reader than for  $\circ_{1}=\oplus$ and $\circ_{2}=\cap$, $\overline{m-1,n-1}$ means we consider the complementary set of $\{(m-1,n-1)\}$
with respect to the states of  $\mathcal R^{\star\circ_{1}\star}$.
\begin{proposition}
	The states of $\mathcal R^{(\star\circ_{1}\star)\circ_{2}\star}$ are pairwise non equivalent.
\end{proposition}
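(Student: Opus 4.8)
The plan is to reformulate finality as a fixed Boolean combination of three coordinate tests and then separate any two distinct states by forcing their Boolean types to differ in exactly one relevant coordinate.

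First I would record that, by Definition~\ref{CartesianDFAs}, a state $(i,j,k)$ of $\mathcal R^{(\star\circ_{1}\star)\circ_{2}\star}$ is final if and only if $\beta\big([i=m-1],[j=n-1],[k=p-1]\big)$ holds, where $\beta(x,y,z)=(x\circ_{1}y)\circ_{2}z$ is read as a Boolean function and $[\,\cdot\,]$ denotes a truth value; this is precisely the content of Table~(\ref{compo1o2}). A routine inspection of the nine choices of $(\circ_{1},\circ_{2})$ shows that each such $\beta$ genuinely depends on all three arguments, so for every variable there is an assignment of the other two under which $\beta$ is sensitive to it. As a warm-up, the case $\circ_{1}=\circ_{2}=\cap$ is immediate: here $F=\{(m-1,n-1,p-1)\}$ is a singleton, every word acts as a bijection by Lemma~\ref{lm-permut}, and the action is transitive by Lemma~\ref{0002ijk}; a word sending a state $s$ to the unique final state automatically sends any $d\neq s$ elsewhere, so $s\not\sim d$.

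Next I would exploit the explicit action. The letter $a$ sends $(i,j,k)$ to $(i+1,j+1,k)$ and $b$ sends it to $(i,\tau(j),k+1)$ with $\tau=(0\,1)$, so the transformation induced by a word $w$ is completely determined by the triple $\big(|w|_{a}\bmod m,\ \pi_{w},\ |w|_{b}\bmod p\big)\in\mathbb Z/m\times S_{n}\times\mathbb Z/p$, where $\pi_{w}$ is the induced permutation of the $j$-coordinate. The conjugates $u_{j'}=a^{-j'}ba^{j'}$ introduced above (using Lemma~\ref{lm-inverse} for the inverses) realize the adjacent transposition $(j',j'+1)$ on $j$ while fixing $i$ and incrementing $k$, so the $j$-coordinate carries the full action of $S_{n}$. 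Given two distinct states $s$ and $d$, I would pick a coordinate in which they differ and build a word $w$ whose induced transformation (i) makes the associated Boolean differ between $s\cdot w$ and $d\cdot w$, by shifting that component so that exactly one of them reaches its accepting value (possible because $w$ acts bijectively and, for the $j$-coordinate, because $n>2$ leaves room), and (ii) makes the other two Booleans coincide for $s$ and $d$ and take the sensitivity values found in the first step. Then $\beta$ evaluates differently on $s\cdot w$ and $d\cdot w$, so $w$ witnesses $s\not\sim d$.

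The main obstacle is realizability, because the coordinates are coupled: each letter moves several of them, so the achievable triples form only the subgroup $H=\langle(1,c,0),(0,\tau,1)\rangle$ of $\mathbb Z/m\times S_{n}\times\mathbb Z/p$, where $c$ is the $n$-cycle. Since $[S_{n},S_{n}]=A_{n}$, commutators give $\{0\}\times A_{n}\times\{0\}\subseteq H$, whence $H$ has index at most two and the only possible obstruction to prescribing a triple $(\alpha,\pi,\gamma)$ is a single parity in which $\mathrm{sgn}(\pi)$ always occurs. This is exactly where $n>2$ saves the construction: the constraints imposed on $\pi$ in step~(ii) only confine the images of the one or two points $j_{s},j_{d}$ to a set of admissible non-accepting values, and for $n\ge 3$ such a $\pi$ can always be chosen of either sign, so we may adjust $\mathrm{sgn}(\pi)$ to land inside $H$ without disturbing the prescribed Boolean pattern. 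Finally, a handful of small-parameter situations, such as $p=2$ with $k_{s}\neq k_{d}$, where the two $k$-tests cannot be forced to agree, are dealt with by distinguishing on that very coordinate instead; whenever $s$ and $d$ differ in more than one coordinate we are free to choose the most convenient one, and the argument closes over all nine operator pairs uniformly.
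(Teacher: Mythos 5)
Your group-theoretic core is sound and even sharper than what the paper makes explicit: identifying the realizable transformations with a subgroup $H\leq \mathbb{Z}/m\times S_{n}\times\mathbb{Z}/p$, getting $\{0\}\times A_{n}\times\{0\}\subseteq H$ from commutators, and adjusting $\mathrm{sgn}(\pi)$ when $n\geq 3$ correctly substantiates what the paper obtains through transitivity alone (Lemma~\ref{0002ijk} and the words $w_{ijk}$). The genuine gap is in your separation logic, step~(ii). Because every word acts injectively on each coordinate, two states that differ in a coordinate can never have both corresponding tests \emph{true} after any word; on a differing coordinate the tests can only coincide at the value false. So a sensitivity assignment of the other two variables is realizable as coinciding values only when it does not demand a true on a differing coordinate, and for several of the nine functions this fails. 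Concretely, take $\circ_{1}=\cap$, $\circ_{2}=\cup$, so $\beta(x,y,z)=(x\wedge y)\vee z$, and the accessible states $s=(0,0,0)$, $d=(1,1,0)$: sensitivity of $\beta$ in $x$ requires $y=1$, which would force $\pi(0)=\pi(1)=n-1$; sensitivity in $y$ requires $x=1$, which would force $\alpha$ to equal both $m-1$ and $m-2$ modulo $m$; and the $z$-tests agree for every $\gamma$ since $k_{s}=k_{d}$. No single-coordinate flip separates these two states, contrary to your closing claim that one is free to choose the most convenient differing coordinate. The same obstruction arises for $(x\cap y)\oplus z$ in the same difference pattern, and for $(x\cup y)\cap z$ and $(x\oplus y)\cap z$ when all three coordinates differ.

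What is needed, and what the paper's table of preimages of $(m-1,n-1,p-1)$ encodes, is a \emph{simultaneous} multi-coordinate flip: in the example above one must send $s$ to a state in $(m-1,n-1,\overline{p-1})$, so that both the $x$- and $y$-tests become true for $s$ while injectivity keeps at least one of them false for $d$; this is exactly the paper's choice $s\in(0,0,\overline{0k})$ in the $\cap\slash\cup$ cell of its table (the paper first normalizes one state to $(0,0,0)$ by the permutation action, which shrinks but does not remove the case analysis). Your framework can absorb the repair: replace steps~(i)--(ii) by the claim that for each $\beta$ and each pattern of differing coordinates there exist target Boolean vectors $(x_{s},y_{s},z_{s})$, $(x_{d},y_{d},z_{d})$ with $\beta$ evaluating differently, subject only to the constraint that a differing coordinate never carries two trues and an agreeing coordinate carries equal values, and then invoke your realizability analysis. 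But that claim must be verified over the nine operator pairs and the difference patterns, which is precisely the enumeration the paper performs and your proposal omits.
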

\begin{proof}
	Since each word acts as a permutation and all the states are accessible (Proposition \ref{o1o2acc}), it suffices to prove that
	for any $(i,j,k)\neq (0,0,0)$ the states $(0,0,0)$ and $(i,j,k)$ are non equivalent. Indeed, assuming this fact, if $(i',j',k')\neq (i,j,k)$
	then $(i',j',k')\cdot w_{i'j'k'}^{-1}=(0,0,0)$ and  the non equivalence of $(i',j',k')$ and $(i,j,k)$  comes from the non equivalence of $(0,0,0)$ and $(i,j,k)\cdot w_{i'j'k'}^{-1}$; this
	last state is different of $(0,0,0)$ because $w_{i'j'k'}^{-1}$ acts as a permutation and $(0,0,0)$ already has a preimage.\\ \\
	A consequence of Lemma \ref{0002ijk} is that it is not necessary to find a word $w$ separating $(0,0,0)$ and $(i,j,k)$ but only to investigate
	the preimage of $(m-1,n-1,p-1)$ by $w$. We choose the preimage $s$ of $(m-1,n-1,p-1)$ in  a set depending on the operations $\circ_{1}$ and $\circ_{2}$. This set is described in the following table:
\begin{equation}
\begin{array}{|c|c|c|c|}
\hline
o_{1}\setminus o_{2}&\cup&\cap&\oplus\\\hline
\cup&\begin{array}{rl}   \rule[0cm]{0cm}{0.5cm}
\mbox{if }k\neq 0,&s\in(\overline{0i},\overline{0j},0)\ \ \ \ \ \ \ \ \ \ \ \\\hline
\rule[0cm]{0cm}{0.5cm}\mbox{if }j\neq 0,&s\in(\overline{0i},0,\overline{0k})\\\hline
\rule[0cm]{0cm}{0.5cm}\mbox{if }i\neq 0,&s\in(0,\overline{0j},\overline{0k})
\end{array}&
\begin{array}{ll}
\rule[0cm]{0cm}{0.5cm}\mbox{if }k\neq 0,&s\in(0,0,0)\\\hline
\rule[0cm]{0cm}{0.5cm}\mbox{if }j\neq 0,&s\in(\overline{0i},0,0)\\\hline
\rule[0cm]{0cm}{0.5cm}\mbox{if }i\neq 0,&s\in(0,\overline{0j},0)
\end{array}&
	\begin{array}{rl}
\rule[0cm]{0cm}{0.5cm}\mbox{if }k\neq 0,&s\in(\overline{0i},\overline{0j},0)\ \ \ \ \ \ \ \ \ \ \ \ \ \ \\\hline
\rule[0cm]{0cm}{0.5cm}\mbox{if }j\neq 0,&s\in(\overline{i},0,\overline{0k})\\\hline
\rule[0cm]{0cm}{0.5cm}\mbox{if }i\neq 0,&s\in(0,\overline{j},\overline{0k})
\end{array}\\\hline
\cap{}&
\begin{array}{rl}
\rule[0cm]{0cm}{0.5cm}\mbox{if }k\neq 0&,s\in(\overline{i},\overline{j},0)\\\hline
\rule[0cm]{0cm}{0.5cm}\mbox{if }i\neq 0&\mbox{ or }j\neq 0,s\in(0,0,\overline{0k})
\end{array}
&s=(0,0,0)&
\begin{array}{rl}
\rule[0cm]{0cm}{0.5cm}\mbox{if }k\neq 0&,\ s\in(\overline{0i},\overline{0j},0)\\\hline
\rule[0cm]{0cm}{0.5cm}\mbox{if }i\neq 0&\mbox{ or }j\neq 0,s\in(0,0,\overline{0k})\ \ \ 
\end{array}\\\hline
\oplus&{}
\begin{array}{rl}
\rule[0cm]{0cm}{0.5cm}\mbox{if }k\neq 0,&s\in(\overline{i},\overline{j},0)\ \ \ \ \ \ \ \ \ \ \ \ \ \ \\\hline
\rule[0cm]{0cm}{0.5cm}\mbox{if }j\neq 0,&s\in(\overline{0i},0,\overline{0k})\\\hline
\rule[0cm]{0cm}{0.5cm}\mbox{if }i\neq 0&s\in(0,\overline{0j},\overline{0k})
\end{array}&
\begin{array}{rl}
\rule[0cm]{0cm}{0.5cm}\mbox{if }k\neq 0,&s\in(0,\overline{0},0)\\\hline
\rule[0cm]{0cm}{0.5cm}\mbox{if }j\neq 0,&s\in(\overline{0i},0,0)\\\hline
\rule[0cm]{0cm}{0.5cm}\mbox{if }i\neq 0,&s\in(0,\overline{0j},0)
\end{array}&
\begin{array}{rl}
\rule[0cm]{0cm}{0.5cm}\mbox{if }k\neq 0,&s\in(\overline{0i},\overline{0j},0)\ \ \ \ \ \ \ \ \ \ \ \ \ \ \\\hline
\rule[0cm]{0cm}{0.5cm}\mbox{if }j\neq 0,&s\in(\overline{0i},0,\overline{0k})\\\hline
\rule[0cm]{0cm}{0.5cm}\mbox{if }i\neq 0&s\in(0,\overline{0j},\overline{0k})
\end{array}\\\hline
\end{array}
\end{equation}
For such a state $s$, from Lemma \ref{0002ijk}, there exists word $w_{s}$ such that $s\cdot w_{s}=(m-1,n-1,p-1)$. Comparing to (\ref{compo1o2}), we
check that $(0,0,0)\cdot w_{s}$ is final while $(i,j,k)\cdot w_{s}$ is not final. For instance, consider the case $\circ_{1}=\circ_{2}=\cup$ and $k\neq 0$, 
we have $(0,0,0)\cdot w_{s}\in(\overline{m-1},\overline{n-1},p-1)$ which is final and $(i,j,k)\cdot w_{s}\in(\overline{m-1},\overline{n-1},\overline{p-1})$
which is not final.	
\end{proof}

The following theorem summarizes the results of the section.
\begin{theorem}
	The state complexity of any combination of two boolean operations is $mnp$ and the bound is reached for two-letters witnesses.
\end{theorem}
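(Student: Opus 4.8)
The plan is to obtain the theorem as a direct consequence of the two propositions just established, together with the generic upper bound for boolean operations. First I would fix $\circ_{1},\circ_{2}\in\{\cup,\cap,\oplus\}$ and dispose of the upper bound. Applying the product construction of Definition~\ref{CartesianDFAs} twice to three DFAs of respective sizes $m$, $n$ and $p$ yields a DFA whose state set is contained in $Q_A\times Q_B\times Q_C$, hence of cardinality at most $mnp$; since the minimal equivalent DFA cannot be larger, the state complexity of $(\oo\circ_{1}\oo)\circ_{2}\oo$ is at most $mnp$ for every choice of the two operators. Equivalently, this follows from the bound $\sc_\circ(n,p)\leq np$ of \cite{YZS94} applied twice, giving $\sc_{\circ_{2}}(\sc_{\circ_{1}}(m,n),p)\leq (mn)\,p = mnp$.

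It then remains to show the bound is attained, which I would do with the explicit $2$-letter witness $W^{(\star\circ_{1}\star)\circ_{2}\star}_{m,n,p}$ and the associated automaton $\mathcal R^{(\star\circ_{1}\star)\circ_{2}\star}_{m,n,p}$. By Table~\ref{combop} its states are exactly the triples $(i,j,k)$ with $0\leq i\leq m-1$, $0\leq j\leq n-1$, $0\leq k\leq p-1$, so there are precisely $mnp$ of them. Proposition~\ref{o1o2acc} guarantees that every such triple is accessible from $(0,0,0)$, and the preceding proposition (pairwise non-equivalence) guarantees that no two distinct triples are $\sim$-equivalent. Consequently the minimal DFA equivalent to $\mathcal R^{(\star\circ_{1}\star)\circ_{2}\star}_{m,n,p}$ still has $mnp$ states, so this witness has state complexity exactly $mnp$. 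Combined with the upper bound, this yields that the state complexity of $(\oo\circ_{1}\oo)\circ_{2}\oo$ equals $mnp$, and since the witness is built only over $\{a,b\}$, the bound is reached over a $2$-letter alphabet for each of the nine pairs $(\circ_{1},\circ_{2})$.

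The genuine content of the argument has already been absorbed into the two propositions, so at this level the theorem is essentially a bookkeeping corollary and I expect no real obstacle beyond quoting them correctly. Were I instead proving those propositions from scratch, the hard part would be the pairwise non-equivalence: for each of the nine combinations one must produce, for an arbitrary $(i,j,k)\neq(0,0,0)$, a word separating it from $(0,0,0)$. The proof handles this uniformly by choosing an appropriate preimage $s$ of the accepting triple $(m-1,n-1,p-1)$ according to the final-state description~(\ref{compo1o2}) and the separating-state table, and then invoking Lemma~\ref{0002ijk} to realize a word $w_s$ with $s\cdot w_s=(m-1,n-1,p-1)$. The reduction to comparing only $(0,0,0)$ against every other state—justified because every word of $\{a,b\}^*$ acts as a permutation on $\mathcal R^{(\star\circ_{1}\star)\circ_{2}\star}_{m,n,p}$ (Lemma~\ref{lm-permut} and Lemma~\ref{lm-inverse}) and $(0,0,0)$ already has a preimage—is exactly what keeps this case analysis finite and manageable.
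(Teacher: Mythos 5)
Your proposal is correct and matches the paper's treatment: the theorem is stated there precisely as a summary corollary of Proposition~\ref{o1o2acc} (accessibility of all $mnp$ states of $\mathcal R^{(\star\circ_{1}\star)\circ_{2}\star}_{m,n,p}$) and the pairwise non-equivalence proposition, with the upper bound $mnp$ immediate from the doubled product construction. Your added remarks on where the real work lies (the separating-state table, Lemma~\ref{0002ijk}, and the permutation-based reduction to comparing $(0,0,0)$ against all other states) accurately reflect the paper's argument for those propositions.
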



\subsection{$A\cdot(B\circ C)$}
We consider the witness 
$$W_{m,n,p}^{\star\cdot(\star\circ\star)}=
(\mathcal X_{m}(a,b,-;\{c\}),\mathcal X_{n}(b,a,c;\emptyset),\mathcal X_{p}(b,-,c;\{a\}))$$ 
for each $m,n,p\geq 3$ and $\circ\in\{\cap,\cup,\oplus\}$ (see Figure \ref{WA.(BoC)}).

\begin{figure}[htb]
	\centerline{
		\begin{tikzpicture}[node distance=1.2cm, bend angle=25]
			\node[state,initial] (p0) {$0$};
			\node[state] (p1) [right of=p0] {$1$};
			\node[state] (p2) [right of=p1] {$2$};
			\node (etc1) [right of=p2] {$\ldots$};
			\node[state, rounded rectangle] (m-2) [right of=etc1] {$m-2$};
			\node[state, rounded rectangle, accepting] (m-1) [right of=m-2] {$m-1$};
			\path[->]
        (p0) edge[bend left] node {a,b} (p1)
        (p1) edge[bend left] node {a} (p2)
        (p2) edge[bend left] node {a} (etc1)
        (etc1) edge[bend left] node {a} (m-2)
        (m-2) edge[bend left] node {a} (m-1)
        (m-1) edge[out=-115, in=-65, looseness=.2] node[above] {a} (p0)
		    (p0) edge[out=115,in=65,loop] node {c} (p0)
		    (p1) edge[out=115,in=65,loop] node {c} (p1)
		    (p2) edge[out=115,in=65,loop] node {b,c} (p2)
		    (m-2) edge[out=115,in=65,loop] node {b,c} (m-2)
		    (m-1) edge[out=115,in=65,loop] node {b,c} (m-1)
			(p1) edge[bend left] node [swap] {b} (p0)
			;
			\node (concon) [right of=m-1]{};
			\node[state,initial] (q0) [right of=concon]{$0$};
			\node[state] (q1) [right of=q0] {$1$};
			\node[state] (q2) [right of=q1] {$2$};
			\node (etc2) [right of=q2] {$\ldots$};
			\node[state, rounded rectangle] (n-2) [right of=etc2] {$n-2$};
			\node[state, rounded rectangle, accepting] (n-1) [right of=n-2] {$n-1$};
			\path[->]
        (q0) edge[bend left] node {a,b} (q1)
        (q1) edge[bend left] node {b} (q2)
        (q2) edge[bend left] node {b} (etc2)
        (etc2) edge[bend left] node {b} (n-2)
        (n-2) edge[bend left] node {b} (n-1)
        (n-1) edge[out=-115, in=-65, looseness=.2] node[above] {b} (q0)
		    (q0) edge[out=115,in=65,loop] node {c} (q0)
		    (q2) edge[out=115,in=65,loop] node {a,c} (q2)
		    (n-2) edge[out=115,in=65,loop] node {a,c} (n-2)
		    (n-1) edge[out=115,in=65,loop] node {a,c} (n-1)
             (q1) edge[bend left] node[swap] {a,c} (q0)
			;
			\node (concon2) [below of=etc1, node distance=1.7cm]{};
			\node[state,initial] (r0) [right of=concon2, node distance=.4cm]{$0$};
			\node[state] (r1) [right of=r0] {$1$};
			\node[state] (r2) [right of=r1] {$2$};
			\node (etc3) [right of=r2] {$\ldots$};
			\node[state, rounded rectangle] (p-2) [right of=etc3] {$p-2$};
			\node[state, rounded rectangle, accepting] (p-1) [right of=p-2] {$p-1$};
			\path[->]
        (r0) edge[bend left] node {b} (r1)
        (r1) edge[bend left] node {b} (r2)
        (r2) edge[bend left] node {b} (etc3)
        (etc3) edge[bend left] node {b} (p-2)
        (p-2) edge[bend left] node {b} (p-1)
        (p-1) edge[out=-115, in=-65, looseness=.2] node[above] {b} (r0)
		    (r0) edge[out=115,in=65,loop] node {a,c} (r0)
		    (r1) edge[out=115,in=65,loop] node {a} (r1)
		    (r2) edge[out=115,in=65,loop] node {a,c} (r2)
		    (p-2) edge[out=115,in=65,loop] node {a,c} (p-2)
		    (p-1) edge[out=115,in=65,loop] node {a,c} (p-1)
        (r1) edge[bend left] node[above] {c} (r0)
			;
    \end{tikzpicture}
  }
  \caption{$3$-letters witness for $A\cdot (B\circ C)$}
  \label{WA.(BoC)}
\end{figure}
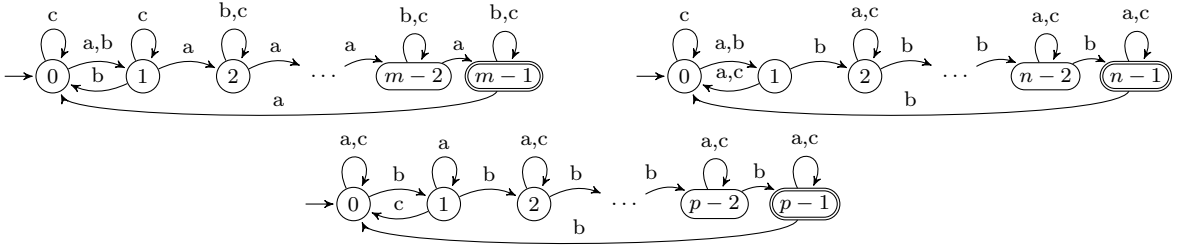

According to the constructions described in Section \ref{Brzozowski witnesses}, we 
define, for each $m,n,p\geq 3$ the automaton
 $$\mathcal 
{R}^{\star\cdot(\star\circ\star)}_{m,n,p}=\mathcal X_{m}(a,b,-;\{c\})\cdot\left(\mathcal X_{n}(b,a,c;\emptyset)\circ
\mathcal X_{p}(b,-,c;\{a\})\right).$$
From Table \ref{combop}, the  accessible states of $\mathcal 
{R}^{\star\cdot(\star\circ\star)}_{m,n,p}$ are indexed by pairs $(i,T)$ where $0\leq i\leq m-1$ and $T\subset [0,n-1]
\times [0,p-1]$.
  The transitions 
are described as follows: for each pair $(i,T)$ and each symbol $\sigma$,{}
$$(i,T)\cdot {\sigma}=\left\{\begin{array}{ll}
(i\cdot \sigma,\{({j}\cdot \sigma,{k}\cdot\sigma):({j},{k})\in T\})&\mbox{ if }{i}\cdot \sigma\neq {m-1}\\
(i\cdot \sigma,\{(j\cdot \sigma,k\cdot\sigma):(j,k)\in T\}\cup\{(0,0)\})&\mbox{ if }i\cdot \sigma={m-1}\\
\end{array}\right.$$
It is easy to see that only 
the states $(i,T)$ satisfying $i=m-1\Rightarrow (0,0)\in T$ are accessible.
We set 
$$\begin{array}{rcl}
Acc^{\star\cdot(\star\circ\star)}_{m,n,p}&=&\{(i,T):0\leq i<m-1, T\subset  Q_n \times Q_p\}\\&&
\cup \{({m-1},\{(0,0)\}\cup T): T\subset Q_n \times Q_p\}\end{array}$$
Notice that the set $Acc^{\star\cdot(\star\circ\star)}_{m,n,p}$  does not depend on $\circ$.
\begin{lemma}\label{LRacc}
For any  state $s=(i,\{(j,k)\})$ of ${\cal R}^{\star\cdot(\star\circ\star)}_{m,n,p}$ with $i\neq m-1$,
 there exists a word $w\in \{a,b\}^*$ such that $(m-1,\{(0,0)\})\cdot w=s$.
\end{lemma}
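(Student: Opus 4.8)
The plan is to recast the statement as a reachability question inside the ``singleton layer'' of $\mathcal R^{\star\cdot(\star\circ\star)}_{m,n,p}$ and then reach every target by an explicit three‑stage word. First I would reduce: every letter of $\{a,b\}$ acts as a permutation on each of $Q_m,Q_n,Q_p$ (neither uses the contraction $c$), so along any $w\in\{a,b\}^*$ the three coordinates of a marked cell evolve by permutations, while the catenation rule can only \emph{add} the cell $(0,0)$ (precisely when the first coordinate enters $m-1$) and can never delete a cell. Hence $|T|$ is non‑decreasing along $w$; since the target has $|T|=1$, the set $T$ must stay a singleton at every step. Equivalently, $w$ must never drive the first coordinate into $m-1$ while the unique marked cell is off $(0,0)$, and it then suffices to track the single triple $(i,j,k)$.

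Next I record the two actions and a safety criterion: $a$ sends $(i,j,k)$ to $(\mmod{i+1}{m},\ (0\,1)(j),\ k)$ and $b$ sends it to $((0\,1)(i),\ \mmod{j+1}{n},\ \mmod{k+1}{p})$, where $(0\,1)$ denotes the transposition. For $m\ge 3$ the letter $b$ fixes $m-1$, so $b$ is always ``safe'' (it never enters $m-1$), whereas $a$ is safe except when applied at $i=m-2$. In particular, a single $a$ takes the start $(m-1,\{(0,0)\})$ to $(0,\{(1,0)\})$ with no duplication.

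The construction then proceeds from $(0,\{(1,0)\})$ by keeping $i$ confined to $\{0,1\}$ and using only the two round‑trip words $g_1=bb$ and $g_2=ab$, which return the first coordinate to $0$ and act on the cell by $g_1\colon(j,k)\mapsto(\mmod{j+2}{n},\mmod{k+2}{p})$ and $g_2\colon(j,k)\mapsto(\mmod{(0\,1)(j)+1}{n},\mmod{k+1}{p})$; both are safe, since they only apply $a$ at $i=0$ and $0\ne m-2$ for $m\ge3$. Using them I position the cell at $\big((0\,1)^{i_s}(j_s),\,k_s\big)$ while $i=0$, and I finish with $a^{i_s}$ (Stage B): because $a$ fixes $k$ and applies $(0\,1)$ to $j$ at each step, and because these $i_s\le m-2$ applications are all made from indices $0,1,\dots,i_s-1\le m-3\ne m-2$, the run stays safe, raises $i$ to $i_s$, and turns the cell into $\big((0\,1)^{2i_s}(j_s),k_s\big)=(j_s,k_s)$. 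Concatenating the opening $a$, the $g_1,g_2$‑word, and $a^{i_s}$ gives the required $w\in\{a,b\}^*$.

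The main obstacle is Stage A: with $i$ trapped in $\{0,1\}$, every admissible letter moves several coordinates at once ($b$ advances both $j$ and $k$, $g_2$ advances $j$ but not $k$), so the real content is an orbit computation showing that $\langle g_1,g_2\rangle$ sweeps the cell over all of $Q_n\times Q_p$ starting from $(1,0)$. The key facts are that $g_2$ fixes the row $j=1$ (so its powers sweep every column $k$ at $j=1$) while acting on $j$ as the $(n-1)$‑cycle $(0\ 2\ 3\ \cdots\ n-1)$, and that $g_1$ shifts $j$ by $2$, moving off the fixed point $1$ to the row $\mmod{3}{n}\ne 1$ (using $n\ge3$). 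From a fully swept row one application of $g_1$ reaches another row, again fully swept since $g_1$ moves $k$ through all residues, and iterating $g_2$ then reaches every remaining row, each fully swept because $g_2$ also advances $k$. Establishing this transitivity is where the work lies; the remaining verifications are just bookkeeping of the safety condition.
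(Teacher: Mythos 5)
Your proof is correct, and although it is assembled from the same elementary moves as the paper's --- the opening $a$ taking $(m-1,\{(0,0)\})$ to $(0,\{(1,0)\})$, the word $ab$ (which fixes row $1$ while advancing the column, and acts on the remaining rows as the $(n-1)$-cycle $(0\,2\,3\cdots n-1)$), plain $b$'s to move rows, and a closing $a^{i}$ --- it is organized along a genuinely different route. The paper proceeds by an explicit three-case analysis on the target row ($j>2$, $j=2$, $j<2$), in each case exhibiting a closed-form word such as $a(ab)^{\mmod{k-j+1}{p}}b^{j-2}a^{\mmod{j+1}{2}}ba^{i}$; the cases exist precisely to guarantee that the row sits strictly above $1$ whenever an $a$ is applied, so that the transposition on $Q_n$ is inert, with the low rows $j<2$ handled by a separate reduction to the other cases. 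You instead prove an orbit statement --- the semigroup generated by $g_1=bb$ and $g_2=ab$ acts transitively on $Q_n\times Q_p$ from $(1,0)$ while the first coordinate stays in $\{0,1\}$ --- and you neutralize the transposition uniformly by aiming Stage A at the pre-corrected cell $\bigl((0\,1)^{i_s}(j_s),k_s\bigr)$ before the final $a^{i_s}$, which eliminates the case split altogether; your safety bookkeeping ($b$ fixes $m-1$ for $m\geq 3$, and the $i_s\leq m-2$ applications of $a$ start from indices at most $m-3$) is exactly what the paper verifies implicitly. Your transitivity sketch is in fact already complete: powers of $g_2$ sweep row $1$ through every column, one application of $g_1$ carries a fully swept row to the fully swept row $\mmod{3}{n}\neq 1$, and iterating $g_2$ transports a fully swept row around the whole $(n-1)$-cycle --- though your phrase ``since $g_1$ moves $k$ through all residues'' should be ``since $g_1$ shifts columns bijectively'' (a shift by $2$ need not generate $\mathbb{Z}_p$ when $p$ is even; bijectivity of the column action on an already full row is what the argument actually uses). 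What each approach buys: yours is case-free, conceptually cleaner, and the orbit viewpoint would adapt to other generator pairs; the paper's yields explicit words for each target, which can be convenient for reuse or for bounding word lengths, at the cost of the parity and wraparound gymnastics (e.g., the word $b^{np-2}$ in the case $j=2$).
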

\begin{proof}

	We have to consider three cases:
	\begin{enumerate}
		\item If $j>2$ then we have
		\[\displaystyle(m-1,\{(0,0)\})\xrightarrow{a}(0,\{(1,0)\})
		\xrightarrow{(ab)^{\mmod{k-j+1}{p}}}(0,\{(1,k-j+1)\}\xrightarrow{b^{j-2}}(\mmod{j}{2},\{(j-1,k-1)\}).\]
		Since $j-1>1$, one obtains
		\[{}
		(\mmod{j}{2},\{(j-1,k-1)\})\xrightarrow{a^{[j+1]_2}} (1,\{(j-1,k-1)\})\xrightarrow{b}{}
		(0,\{(j,k)\}\xrightarrow{a^{i}} (i,\{(j,k)\}).
		\]
		In conclusion, the word $$w=a(ab)^{\mmod{k-j+1}{p}}b^{j-2}a^{\mmod{j+1}{2}}ba^{i}\in\{a,b\}^{*}$$ is such that 
		$(m-1,\{(0,0)\}).w=(i,\{(j,k)\})$. This proves the lemma.
		\item If $j=2$ then we have
		\[\displaystyle(m-1,\{(0,0)\})\xrightarrow{a}(0,\{(1,0)\})
		\xrightarrow{(ab)^{\mmod{k-1}{p}}}(0,\{(1,k-1)\}\xrightarrow{b^{np-2}}(\mmod{np}{2},\{(n-1,k-3)\}).\]
		Hence,
		\[{}
		(\mmod{np}{2},\{(n-1,k-3)\})\xrightarrow{a^{\mmod{np+1}{2}}} (1,\{(n-1,k-3)\})\xrightarrow{b^{3}}{}
		(0,\{(2,k)\}\xrightarrow{a^{i}}(i,\{(2,k)\}).
		\]
		In conclusion, the word $$w=a(ab)^{\mmod{k-1}{p}}b^{np-2}a^{\mmod{np+1}{2}}b^{3}a^{i}\in\{a,b\}^{*}$$ is such that 
		$(m-1,\{(0,0)\})\cdot w=(i,\{(j,k)\})$. This proves the lemma.
		\item If $j<2$ then set $\gamma_{i}=i$ if $i>1$ and $\gamma_i=\mmod{i+j+1}{2}$ if $i\leq 1$. One has
		\[{}
		(\gamma_{i},\{(n-1,k-j-1)\})\xrightarrow{b^{j+1}} (i,\{(j,k)\}).
		\]
		Hence, as $\gamma_{i}\neq m-1$, there exists a word $v$ such that $(m-1,\{(0,0)\})\xrightarrow{v}(\gamma_{i},n-1,k-j-1)$  due to one of the previous cases.
		So one obtains 
		$(m-1,\{(0,0)\})\cdot vb^{j+1}=(i,\{(j,k)\})$ as expected.
	\end{enumerate}
\end{proof}
\begin{proposition}\label{RCacc}
	For any boolean operation $\circ$, all the states of $Acc^{\star\cdot(\star\circ\star)}_{m,n,p}$ are accessible in 
	$\mathcal 
{R}^{\star\cdot(\star\circ\star)}_{m,n,p}$.
\end{proposition}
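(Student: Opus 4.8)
The plan is to exploit the fact that the transition function of $\mathcal R^{\star\cdot(\star\circ\star)}_{m,n,p}$, and hence the very notion of accessibility, does not depend on $\circ$ (only the set of final states does), so that a single argument settles all three operators. First I would reduce the statement to the states whose first component is $0$, i.e. I claim it suffices to reach every $(0,T)$ with $T\subset Q_n\times Q_p$. Indeed, any word of $\{a,b\}^*$ acts as a permutation on the tableau (on $\mathcal X_n(b,a,c;\emptyset)$ the letters $a,b$ act as the transposition $(0,1)$ and the $n$-cycle, on $\mathcal X_p(b,-,c;\{a\})$ as the identity and the $p$-cycle). Consequently, for $i<m-1$ the word $a^i$ leads $(0,S)$ to $(i,S\cdot a^i)$ without ever meeting the final state $m-1$ of the first component, and taking for $S$ the suitable preimage yields every $(i,T)$ with $i<m-1$; for $i=m-1$ the word $a^{m-1}$ sends $(0,S)$ to $(m-1,S\cdot a^{m-1}\cup\{(0,0)\})$, which for $S$ the preimage of $T\setminus\{(0,0)\}$ equals $(m-1,T)$ precisely when $(0,0)\in T$, matching exactly the constraint defining $Acc^{\star\cdot(\star\circ\star)}_{m,n,p}$. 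The required preimages exist because these words act bijectively on the tableau (cf. Lemma \ref{lm-inverse}).

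Next I would prove that every $(0,T)$ is accessible by induction on $|T|$. The base cases $|T|\le 1$ follow from the initial state $(0,\emptyset)$ together with Lemma \ref{LRacc}: since $(m-1,\{(0,0)\})=(0,\emptyset)\cdot a^{m-1}$, that lemma reaches every singleton $(0,\{(j,k)\})$. For the inductive step, the mechanism for adding one cell is a single loop around the $m$-cycle of the first component: starting from $(0,S)$, the word $a^m$ visits $m-1$ exactly once and gives $(0,S)\cdot a^m=(0,S\cdot a^m\cup\{(1,0)\})$, thereby depositing the new cell $(1,0)$ while permuting $S$. To place this new cell at an arbitrary position $c$ and to arrange the old cells correctly, I would then apply a word $g\in\{a,b\}^*$ keeping the first component inside $\{0,1\}$ — hence never reaching $m-1$ (as $m\ge 3$) and never depositing a further cell. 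Writing $T=(T\setminus\{c\})\cup\{c\}$ and choosing $g$ so that $(1,0)\cdot g=c$, the predecessor $S$, defined as the preimage of $T\setminus\{c\}$ under the tableau permutation induced by $a^m g$, satisfies $|S|=|T|-1$, is accessible by the induction hypothesis, and verifies $(0,S)\cdot a^m g=(0,T)$.

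The crux, and the step I expect to be the main obstacle, is this repositioning claim: a word $g$ confined to first component $\{0,1\}$ can move the freshly created cell $(1,0)$ to any chosen $c\in Q_n\times Q_p$. The words confined to $\{0,1\}$ and returning the first component to $0$ are exactly the products of the two round trips $ab$ and $b^2$, which act on the tableau respectively by $(x,y)\mapsto(\tau(x)+1,y+1)$ and $(x,y)\mapsto(x+2,y+2)$, where $\tau$ is the transposition $(0,1)$ on $Q_n$ and the $Q_p$-coordinate is advanced additively, independently of the current position. I would then show that the orbit of $(1,0)$ under the group they generate is all of $Q_n\times Q_p$: iterating $ab$ from $(1,0)$ already reaches every $(1,y)$; the first coordinate can be driven to every value of $Q_n$ by combining the increments $+1$ and $+2$ with the wrap-around through $0$ and $2$; and since the second coordinate only accumulates the (position-independent) length of the word modulo $p$, any target $(x_0,y_0)$ is attained by first reaching a suitable $(1,y')$ and then following a fixed path to $x_0$. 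This transitivity is exactly what makes the single-cell addition work for an arbitrary target, and checking the first-coordinate coverage in the presence of the transposition $\tau$ is the only genuinely combinatorial part of the proof.
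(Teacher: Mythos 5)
Your proof is correct, but it takes a genuinely different route from the paper's. Both arguments proceed by induction on $|T|$ and rest on the fact that $a$ and $b$ act as permutations of $Q_n\times Q_p$, so that preimages of tableaux exist (Lemma \ref{lm-inverse}); your normalization of the first component by $a^i$ (resp.\ $a^{m-1}$, with the forced deposit of $(0,0)$ matching exactly the defining constraint of $Acc^{\star\cdot(\star\circ\star)}_{m,n,p}$) is sound. Where you diverge is the mechanism for growing the tableau. The paper handles $i=m-1$ by stepping back through $a$ to $(m-2,a\cdot(T\setminus\{(0,0)\}))$, and for $i<m-1$ it reuses Lemma \ref{LRacc}: choosing $(j,k)\in T$ and the word $w$ with $(m-1,\{(0,0)\})\cdot w=(i,\{(j,k)\})$, it transports the \emph{whole} tableau from $(m-1,w\cdot T)$ to $(i,T)$, the implicit point being that any cells deposited while reading $w$ land inside $T$ because $(0,0)\cdot w=(j,k)\in T$. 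You instead create the new cell explicitly at $(1,0)$ via $a^m$ and reposition only that cell by words $g$ whose first component stays in $\{0,1\}$, so that no further deposit can occur since $m\geq 3$; the combinatorial core then becomes the transitivity on $Q_n\times Q_p$ of the monoid generated by $ab:(x,y)\mapsto(\tau(x)+1,y+1)$ and $b^2:(x,y)\mapsto(x+2,y+2)$, where $\tau=(0,1)$, which you verify correctly ($ab$ fixes $x=1$ while advancing $y$, $b^2$ escapes from $1$, the increments $+1$ and $+2$ with wrap-around through $0$ cover $Q_n$, and the $y$-offset is word-determined, hence adjustable beforehand at $x=1$ by powers of $ab$). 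Two remarks: your orbit lemma in fact re-proves Lemma \ref{LRacc} for singletons (from $(0,\emptyset)\cdot a^m=(0,\{(1,0)\})$, apply $g$ and then $a^i$), so your citation of it in the base case is dispensable and your argument is self-contained, whereas the paper's proof is shorter only because Lemma \ref{LRacc} is already established by an explicit three-case word construction; conversely, your confinement-to-$\{0,1\}$ deposit control sidesteps the absorption argument the paper's case $i<m-1$ relies on, which is arguably the more delicate step there.
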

\begin{proof}
We prove by induction on $|T|$ that each state $(i,T)$ is accessible.
	First, observe that all the states  $(i,\emptyset)$ are 
	reachable from $(0,\emptyset)$ reading $a^{i}$. Now consider a state 
	$(i,T)$ with $T\neq\emptyset$.
	
\begin{enumerate}
\item\label{item1} Suppose that $i=m-1$ then the states $(m-1,T)$ is reachable by $a$ from 
	$(m-2,a\cdot(T\setminus \{(0,0)\})$ which is accessible by 
	induction.
	
\item Suppose now $i<m-1$ and let $(j,k)\in T$. Since $i\neq m-1$, by Lemma 
	\ref{LRacc}, there exists a word $w\in \{a,b\}^{*}$ such that $(m-1,\{(0,0)\})\cdot w=(i,\{(j,k)\})$.
	Observe that, from the definition of the automata, the letter $a$ and $b$ encode permutations of the states (no contraction is involved).
	It follows that $w\cdot T$ has the same number of elements as $T$ and so the state $(i,T)$ is accessible by $w$ from $(m-1,w\cdot T)$ which is accessible from (\ref{item1}). 
	\end{enumerate}
\end{proof}

We remark  that the action of the letter $c$ is not used to prove the accessibility of the states. Nevertheless, this letter is needed  to separate the states. The following lemma highlights a property of the action of $c$ which is  central in the study of the separability. Its proof is straightforward from the definition of $\mathcal R_{m,n,p}^{\star\cdot(\star\circ\star)}$.

\begin{lemma}\label{lc1}
Let $(i,T)$ be a state in ${\mathcal Acc}_{m,n,p}^{\star\cdot(\star\circ\star)}$. We have $(i,T)\cdot c=(i,T')$ with 
$$T'\subset (Q_n\setminus \{1\})\times (Q_p\setminus \{1\})$$
\end{lemma}

Let us consider first the case where $\circ=\cap$.
Notice that a state $(i,T)$ of ${\mathcal Acc}^{\star\cdot(\star\cap\star)}_{m,n,p}$ is final if and only if $(n-1,p-1)\in T$. 
\begin{proposition}
	The states belonging to ${\mathcal Acc}^{\star\cdot(\star\cap\star)}_{m,n,p}$  are pairwise nonequivalent in 
	$\mathcal{R}^{\star\cdot(\star\cap\star)}_{m,n,p}$.
\end{proposition}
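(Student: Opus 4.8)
The plan is to read off the final-state condition and turn equivalence into a cell-detection problem. Under $\circ=\cap$ the automaton $\mathcal X_n(b,a,c;\emptyset)\cap\mathcal X_p(b,-,c;\{a\})$ has the single accepting state $(n-1,p-1)$, so a state $(i,T)$ of $\mathcal R^{\star\cdot(\star\cap\star)}_{m,n,p}$ is final if and only if $(n-1,p-1)\in T$. Consequently, to prove pairwise non-equivalence it suffices to show that for any two distinct states of $Acc^{\star\cdot(\star\cap\star)}_{m,n,p}$ (these are exactly the accessible ones, by Proposition~\ref{RCacc}) there is a word $w$ such that the distinguished cell $(n-1,p-1)$ lies in the image of exactly one of the two tableaux. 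I would split this into the case $T_1\neq T_2$ and the case $T_1=T_2$, $i_1\neq i_2$.

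The positive engine is a \emph{single-cell isolation} statement: for every cell $(j,k)$ there is a word whose action on the tableau sends $(j,k)$ to $(n-1,p-1)$ and sends every other cell off $(n-1,p-1)$. The point is that the two coordinates can be rotated \emph{independently}. On a cell $(j,k)$ the letter $b$ advances both coordinates by one, the letter $a$ applies the transposition $(0,1)$ to the first coordinate while fixing the second, and $c$ collapses the value $1$ to $0$ in both. Conjugating $a$ by powers of $b$ produces every transposition on the first coordinate paired with the identity on the second, hence every $(\sigma,\mathrm{id})$ with $\sigma\in S_n$; composing a suitable $(\rho^{-1},\mathrm{id})$ with $b=(\rho,\rho)$ yields $(\mathrm{id},\rho)$, a pure rotation of the second coordinate alone. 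Thus I can realise the decoupled rotation advancing the first coordinate by $n-1-j$ and the second by $p-1-k$; being a product of two bijections it maps $(j,k)$ to $(n-1,p-1)$ and maps no other cell there.

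With isolation in hand the case $T_1\neq T_2$ is immediate: pick a cell in $T_1\setminus T_2$ (or in $T_2\setminus T_1$), isolate it, and exactly one tableau then meets the corner. For the case $T_1=T_2=T$, $i_1\neq i_2$, I would first manufacture a tableau difference. Every letter of $\mathcal X_m(a,b,-;\{c\})$ acts as a permutation of $Q_m$, so the first component is injective under every word and $i_1\cdot w\neq i_2\cdot w$ for all $w$; by steering the two first components so that they reach the accepting state $m-1$ at different moments, the extra cell $(0,0)$ is inserted into the two tableaux at different times and is thereafter carried to different positions, so the tableaux genuinely differ, and the previous case applies.

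The hard part will be the bookkeeping of the catenation insertions. The clean isolation argument above ignores that, whenever the first component reaches $m-1$, a cell $(0,0)$ is grafted onto the tableau; I must ensure that for the state meant to be rejected no original or inserted cell ends on $(n-1,p-1)$. I would control this by using that $c$ is the identity on the $\mathcal X_m$ component (so tableau clean-up via $c$ neither moves the first component nor triggers an insertion) together with Lemma~\ref{lc1}, which shows that a single $c$ empties row $1$ and column $1$, and by arranging the rotation so that its final, insertion-free phase places the target cell on the corner while further rotations sweep every stray $(0,0)$-descendant away from it. Verifying that this can always be scheduled — in particular that the finitely many inserted cells can be kept off the corner simultaneously with the isolation — is the delicate point, and I expect a short case analysis according to the position of $(j,k)$ relative to rows and columns $0$ and $1$ to close it, with Lemma~\ref{LRacc} supplying the explicit seed words.
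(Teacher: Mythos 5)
Your outline reproduces the skeleton of the paper's argument (finality iff $(n-1,p-1)\in T$; drive a difference cell bijectively onto the corner; reduce the case $i\neq i'$ to a tableau difference), and your ``decoupled rotations'' observation is correct at the level of the action on $Q_n\times Q_p$: conjugates $b^{-t}ab^{t}$ give all adjacent transpositions paired with the identity, hence all $(\sigma,\mathrm{id})$, and composing with $b$ gives $(\mathrm{id},\rho_p)$. But the part you defer as ``the delicate point'' is not a residual verification --- it is the entire content of the proof for this construction, and one of your intermediate claims is false as stated. In the case $T=T'$, $i\neq i'$, staggered insertions of $(0,0)$ do \emph{not} in general make the tableaux differ: $a$ and $b$ act as permutations of the cells, so if the common image tableau contains $(0,0)$ at the moment exactly one first component lands on $m-1$, the insertion changes nothing. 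Concretely, for the accessible states $(i,Q_n\times Q_p)$ and $(i',Q_n\times Q_p)$ no word over $\{a,b\}$ separates them at all, since the full tableau is invariant under permutations and insertions; the letter $c$ is indispensable, and not merely as ``clean-up.'' The paper's proof uses it exactly here: after $a^{m-i-2}c$, Lemma~\ref{lc1} guarantees $(1,0)$ is absent from both tableaux, and since $(1,0)$ is the unique $a$-preimage of $(0,0)$, reading $a$ puts $(0,0)$ into the inserting track (whose first component reaches $m-1$) while provably keeping it out of the other.

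The second unproved step is your plan to ``place the target cell on the corner while further rotations sweep every stray $(0,0)$-descendant away from it'': this is internally in tension, because every further letter also moves the corner's occupant (in particular $b$ rotates it off), so you cannot park the target and then sweep. The paper resolves this by never separating the two phases: it arranges the difference cell at $(0,0)$ with the rejecting state's first component guaranteed to stay off $m-1$ under $b$ (which fixes $Q_m\setminus\{0,1\}$, and $m\geq 3$), and then applies the single suffix $b^{np-1}$, which bijectively carries $(0,0)$ to $(n-1,p-1)$ while triggering no insertion on the rejecting track. Your long conjugation words, by contrast, contain many $a$'s, so the first component cycles through $m-1$ repeatedly, spraying $(0,0)$ into the tableaux; each such insertion occurring while your tracked cell sits at $(0,0)$ destroys the difference, and inserted cells on the rejecting side can later land on the corner. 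Closing this requires exhibiting explicit words and checking, stage by stage, both the position of the tracked cell and whether the relevant first component can hit $m-1$ --- which is precisely what the paper's case analysis with words such as $a^{\delta_{m-1,i}}b^{n-j_1}(ba)^{p-k_2-1}(aa)^{\delta_{i_3,m-2}}bab^{np-1}$ does, and what your proposal does not.
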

\begin{proof}
	Let $s=(i,T)$ and $s'=({i'},T')$ be two distinct states. Without loss of 
	generality we assume $i'\leq i$ (otherwise we permute the role of 
	the states) and we construct a word $w_{s,s'}$ sending one of the state on a final state and the other on a non final state.
	 We consider several cases as follows
	\begin{enumerate}
		\item Suppose  $i'<i<m-1$. 
		We have
		$$s\xrightarrow{a^{m-i-2}c}(m-2,T_{2})\mbox{ and }s'\xrightarrow{a^{m-i-2}c}(m-2+i'-i,T'_{2})$$
		with, from Lemma \ref{lc1},  $({1},{0})\not\in T_{2}\cup T'_{2}$.\\
		Hence,
		\[{}
		(m-2,T_{2})\xrightarrow{a}(m-1,T_{3})
		\xrightarrow{b^{np-1}} (m-1,T_{4})
		\]
		with $({0},{0})\in T_{3}$ and $({n-1},{p-1})\in T_{4}$,
		and
		\[{}
		(m-2+i'-i,T'_{2})\xrightarrow{a}(m-1+i'-i,T'_{3})
		\xrightarrow{b^{np-1}}(i'_{4},T'_{4})
		\]
		with $({0},{0})\not\in T'_{3}$ because both $m-1+i'-i\neq m-1$ and  $(1,0)\not\in T'_{2}$. 
		Furthermore as $m-1$ is never reached from $m-1+i'-i$ reading $b^{np-1}$,  we have 
		$(n-1,p-1)\not\in T'_{4}$.
		Setting $w_{s,s'}=a^{m-i-2}cab^{np-1}$, we have $s\cdot w_{s,s'}=(m-1,T_{4})$ which is final and
		$s'\cdot w_{s,s'}=(i'_{4},T'_{4})$ which is not final. So, $s$ and $s'$ are not equivalent.
	\item If $i'<i=m-1$ then reading $a$  sends $s$ to a state 
		$s_{1}=(0,T_{1})$ and $s'$ to a state $s'_{1}=({i'+1},T'_{1})$. 
		If $i'+1\neq m-1$,then  we set $w_{s,s'}=aw_{s'_{1},s_{1}}$ where $w_{s'_{1},s_{1}}$ is computed from the previous case.
		  If $i'+1=m-1$ then we read another $a$ and this 
		sends $s_{1}$ to a state $s_{2}=({1},T_{2})$ and $s'_{1}$ to a 
		state $s'_{2}=({0},T'_{2})$. As ${\mathcal X} _m$ has at least $3$ states, $m-1\neq 1$.  So 
		$w_{s,s'}=a^{2}w_{s_{2},s'_{2}}$ where $w_{s_{2},s'_{2}}$ is the word computed in the previous case.
	\item If $i=i'$ then $T\neq T'$. Without loss of generality we 
	assume that there exists $({j},{k})\in T\setminus T'$. 
	
	Let us recall the Kronecker delta $\delta_{i,j}=\left\{\begin{array}{ll}0&\text{ if } i\neq j\\1&\text{ if }i=j\end{array}\right.$
	We have :
	$$ s\xrightarrow{a^{\delta_{m-1,i}}}(i_1,T_1)\xrightarrow {b^{n-j_1}}(i_2,T_2)\text{ and }s'\xrightarrow{a^{\delta_{m-1,i}}}(i_1,T'_1)\xrightarrow {b^{n-j_1}}(i_2,T'_2)$$

	where $i_1,i_2\neq m-1$,  $(j_1,k_1)=(j,k)\cdot a^{\delta_{m-1,i}} \in T_1\setminus T'_1$ and $(0,k_2)=(j_1,k_1)\cdot b^{n-j_1}\in T_2\setminus T'_2 $.
	
	$$(i_2,T_2)\xrightarrow{(ba)^{p-k_2-1}}(i_3,T_{3})\xrightarrow{(aa)^{\delta_{i_3,m-2}}}(i_4,T_4)\mbox{ and }(i_2,T'_2)\xrightarrow{(ba)^{p-k_2-1}}(i_3,T'_{3})\xrightarrow{(aa)^{\delta_{i_3,m-2}}}(i_4,T'_4)$$

      where $i_4\neq m-2$ and $(0,p-1)=(0,k_2)\cdot (ba)^{p-k_2-1}\in T_4\setminus T'_4$.

	$$(i_4,T_4)\xrightarrow{ba}(i_5,T_{5})\xrightarrow{b^{np-1}}(i_6,T_6)\mbox{ and }(i_4,T'_4)\xrightarrow{ba}(i_5,T'_{5})\xrightarrow{b^{np-1}}(i_6,T'_6)$$

      where $i_5\neq m-1$ and $(0,0)\in T_5\setminus T'_5$ and finally $(n-1,p-1)\in T_6\setminus T'_6$.
	
	
	Setting $w_{s,s'}=a^{\delta_{m-1,i}}b^{n-j_1}(ba)^{p-k_2-1}(aa)^{\delta_{i_3,m-2}}bab^{np-1}$ we obtain that $s.w_{s,s'}$ is final while $s'.w_{s,s'}$ is not final.
	In other words, $s$ 
	and $s'$ are nonequivalent.
	\end{enumerate}
\end{proof}

Now, we consider the case where $\circ=\cup$. 
The final states of $\mathcal{R}^{\star\cdot(\star\cup\star)}_{m,n,p}$ are the pairs 
$(p_{i},T)$ such that 
$$T\cap\left(\{{n-1}\}\times Q_{p}\cup Q_{n}\times \{{p-1}\}\right)\neq \emptyset.$$

 We say that a set $T$ is \emph{saturated} if $({j},{k}), ({j'},{k'})\in T$  
 implies $({j},{k'})\in T$. 

 \begin{equation}\label{sat}
	 \sat(T)=\{{j}:(j,k)\in T\}\times \{k:(j,k)\in T\}.
 \end{equation}

\begin{lemma}
In ${\mathcal Acc}^{\star\cdot(\star\cup\star)}_{m,n,p}$, any state $(i,T)$ is equivalent to  $(i,\sat(T))$. 
\end{lemma}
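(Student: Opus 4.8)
The plan is to realize $\sat(T)$ as the result of \emph{filling in} $T$ one cell at a time, where each newly added cell completes a rectangle two of whose opposite corners already lie in $T$, and to use the fact---recalled for the union operator in the Tableaux subsection---that completing such a rectangle preserves the recognized language. Concretely, write $R=\{j\mid (j,k)\in T\text{ for some }k\}$ and $C=\{k\mid (j,k)\in T\text{ for some }j\}$, so that $\sat(T)=R\times C$ and $T\subseteq\sat(T)$. It then suffices to add the cells of $(R\times C)\setminus T$ successively while preserving equivalence.

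First I would record precisely the one-cell step. Reading a word $w$ from a state $(i,T)$ acts on the tableau cellwise, together with the reset cells produced whenever the $A$-component passes through $m-1$; crucially, those reset cells depend only on $i$ and $w$, not on $T$. Hence $(i,T)\cdot w=(i\cdot w,\{(j\cdot w,k\cdot w)\mid (j,k)\in T\}\cup E_w)$ for a set $E_w$ independent of $T$. Consequently, if $(x,x')\in T$ and $(y',y)\in T$, then the tableaux reached from $(i,T)$ and from $(i,T\cup\{(x,y)\})$ by reading $w$ differ in at most the single cell $(x\cdot w,y\cdot w)$. Since a state is final for $\cup$ exactly when its tableau meets the last row $n-1$ or the last column $p-1$, the larger state can be final while the smaller is not only if $x\cdot w=n-1$ or $y\cdot w=p-1$; but then the image of $(x,x')$, resp. of $(y',y)$, already places a marked cell in the last row, resp. last column, of the smaller tableau, a contradiction. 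Thus $(i,T)\sim(i,T\cup\{(x,y)\})$.

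Finally I would iterate this step. Enumerate $(R\times C)\setminus T=\{c_1,\dots,c_\ell\}$ and set $T_0=T$ and $T_s=T_{s-1}\cup\{c_s\}$, so that $T_\ell=\sat(T)$. For each $s$, writing $c_s=(x,y)$ with $x\in R$ and $y\in C$, membership of $x$ in $R$ yields a cell $(x,x')\in T$ and membership of $y$ in $C$ yields a cell $(y',y)\in T$; as both witnesses lie in $T\subseteq T_{s-1}$, the one-cell step applies and gives $(i,T_{s-1})\sim(i,T_s)$. Transitivity of $\sim$ then yields $(i,T)\sim(i,\sat(T))$. The hard part---and the only place where a naive induction could go wrong---is guaranteeing the availability of the two opposite corners at every stage; this is exactly what is secured by drawing both witnesses from the \emph{original} set $T$ rather than from the partially filled $T_{s-1}$, which removes any circular dependence and makes the order of filling irrelevant. (One also checks that $(i,\sat(T))$ again lies in ${\mathcal Acc}^{\star\cdot(\star\cup\star)}_{m,n,p}$: if $i=m-1$ then $(0,0)\in T$ forces $0\in R\cap C$, hence $(0,0)\in\sat(T)$.)
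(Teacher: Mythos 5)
Your proof is correct and rests on exactly the same key idea as the paper's: a separating word could make the saturated state final only through the image of a cell $(x,y)\in\sat(T)\setminus T$, i.e.\ $x\cdot w=n-1$ or $y\cdot w=p-1$, and then the row witness $(x,x')\in T$ or the column witness $(y',y)\in T$ already makes $(i,T)\cdot w$ final, a contradiction. The only difference is packaging: the paper runs this argument once, directly comparing $(i,T)$ with $(i,\sat(T))$, whereas you iterate the one-cell completion step (the observation already recalled in the Tableaux subsection) and chain by transitivity of $\sim$ with witnesses anchored in the original $T$ --- a sound, slightly more explicit refactoring (your spelled-out reset set $E_w$ independent of $T$, and the check that $(i,\sat(T))$ stays in ${\mathcal Acc}^{\star\cdot(\star\cup\star)}_{m,n,p}$, are both correct) rather than a different method.
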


\begin{proof}
	Suppose that there exists a word $w$ such that $({i},\sat(T))\cdot w$ is final and $({i},T)\cdot w$ is not final. 
	Then, there are two couples $(j,k)$ and $(j',k')$ in $T$ with $(j,k')\in \sat(T)\setminus T$ and $$({j},{k'})\cdot w\in (\{{n-1}\}\times Q_{p})\cup (Q_{n}\times \{{p-1}\}).$$
	This means that either ${j}\cdot w={n-1}$ or ${k'}\cdot w={p-1}$. But since $({j},{k})\cdot w, ({j'},{k'})\cdot w\in T\cdot w$
	 we have  
	$$T\cdot w\cap\left(\{{n-1}\}\times Q_{p}\cup Q_{n}\times \{{p-1}\}\right)\neq \emptyset$${}
	  and this is not possible because $({i},T)\cdot w=({i}\cdot w,T\cdot w)$ which is not final.
\end{proof}
\begin{lemma}
Let $(i,T)$ and $(i,T')$ be two  states of 
${\mathcal Acc}^{\star\cdot(\star\cup\star)}_{m,n,p}$ such that $\sat(T)=\sat(T')$. Then $(i,T)$ and $(i,T')$ are equivalent.
\end{lemma}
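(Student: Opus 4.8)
The statement to prove is: if $\sat(T)=\sat(T')$ then the two states $(i,T)$ and $(i,T')$ of ${\mathcal Acc}^{\star\cdot(\star\cup\star)}_{m,n,p}$ are equivalent. The natural route is to reduce this claim to the previous lemma, which asserts that every state $(i,T)$ is equivalent to $(i,\sat(T))$. The plan is to chain three equivalences: $(i,T)\sim(i,\sat(T))$ by the previous lemma, then $(i,\sat(T))=(i,\sat(T'))$ by the hypothesis, and finally $(i,\sat(T'))\sim(i,T')$ again by the previous lemma. Since state equivalence $\sim$ in a DFA is an equivalence relation (in particular transitive and symmetric), these three links compose to give $(i,T)\sim(i,T')$.

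First I would explicitly invoke the preceding lemma to record $(i,T)\sim(i,\sat(T))$ and $(i,T')\sim(i,\sat(T'))$; both are immediate instances. Second, I would use the hypothesis $\sat(T)=\sat(T')$ to identify $(i,\sat(T))$ and $(i,\sat(T'))$ as literally the same state, so they are trivially equivalent. Third, I would close the argument by transitivity and symmetry of $\sim$: from $(i,T)\sim(i,\sat(T))$, $(i,\sat(T))=(i,\sat(T'))$, and $(i,\sat(T'))\sim(i,T')$ we conclude $(i,T)\sim(i,T')$.

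The only point requiring a moment of care is that the previous lemma must apply to both $T$ and $T'$, i.e. that $(i,\sat(T))$ and $(i,\sat(T'))$ are genuine accessible states lying in ${\mathcal Acc}^{\star\cdot(\star\cup\star)}_{m,n,p}$ so that the equivalence relation is defined on them; this follows since saturation of an accessible state yields an accessible state (by Proposition~\ref{RCacc} all subsets occur, and saturation only adds cells, preserving the constraint $i=m-1\Rightarrow(0,0)\in T$ because $(0,0)$ sits in row $0$ and column $0$). There is essentially no computational obstacle here: this lemma is a pure corollary of its predecessor together with the fact that $\sim$ is an equivalence relation, and the whole proof is a two-line transitivity argument. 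The conceptual work was already done in the preceding lemma, where the saturation pattern for $\cup$ was shown to be invisible to any separating word.

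\begin{proof}
By the previous lemma, $(i,T)$ is equivalent to $(i,\sat(T))$ and $(i,T')$ is equivalent to $(i,\sat(T'))$. Since $\sat(T)=\sat(T')$ by hypothesis, the states $(i,\sat(T))$ and $(i,\sat(T'))$ coincide. As state equivalence is symmetric and transitive, we obtain $(i,T)\sim(i,\sat(T))=(i,\sat(T'))\sim(i,T')$, hence $(i,T)$ and $(i,T')$ are equivalent.
\end{proof}
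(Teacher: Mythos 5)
Your proof is correct and matches the paper's intent exactly: the paper states this lemma without an explicit proof, treating it as an immediate corollary of the preceding lemma ($(i,T)\sim(i,\sat(T))$) via symmetry and transitivity of state equivalence, which is precisely your two-line argument. Your extra remark that saturation preserves membership in ${\mathcal Acc}^{\star\cdot(\star\cup\star)}_{m,n,p}$ (since $T\subseteq\sat(T)$, the constraint $i=m-1\Rightarrow(0,0)\in T$ is preserved) is a harmless and even slightly more careful touch than the paper's silent omission.
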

From now, we will only consider  the set of saturated states defined as follows : 
  \[{}
 Sat=\{(i,\sat(T))\in {\mathcal Acc}_{m,n,p}^{\star\cdot(\star\cup\star)}\} \]

%

 We define $\split(T)=(\{j:(j,k)\in T\},\{k:(j,k)\in T\})$.
 For any $s=(i,T)\in Sat$ we define $L(s)=S_{1}$ and $R(s)=S_{2}$ if $\split(T)=(S_{1},S_{2})$.
 With this notation, a state $s$ is final if and only if ${n-1}\in L(s)$ or ${p-1}\in R(s)$. 
Notice that Lemma \ref{lc1} can be restated as follows.
\begin{lemma}\label{lc2}
	Let $s\in {\mathcal Acc}^{\star\cdot(\star\circ\star)}_{m,n,p}$. Then $s\cdot c=s'$ with ${1}\not\in L(s')$ and ${1}\not\in{}
	R(s')$.
\end{lemma}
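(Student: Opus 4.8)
The plan is to obtain Lemma~\ref{lc2} as a direct translation of Lemma~\ref{lc1} into the $\split$ notation, so that the whole argument reduces to unwinding the definitions of $L$ and $R$. First I would fix $s=(i,T)\in{\mathcal Acc}^{\star\cdot(\star\circ\star)}_{m,n,p}$ and apply Lemma~\ref{lc1} to write $s\cdot c=s'=(i,T')$ with $T'\subset (Q_n\setminus\{1\})\times(Q_p\setminus\{1\})$. Then I would recall that, by definition, $\split(T')=(L(s'),R(s'))$, where $L(s')=\{j:(j,k)\in T'\}$ is the set of first coordinates of the marked cells of $T'$ and $R(s')=\{k:(j,k)\in T'\}$ is the set of their second coordinates. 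Since every pair $(j,k)\in T'$ satisfies $j\neq 1$, no first coordinate equals $1$, hence $1\notin L(s')$; symmetrically $k\neq 1$ for every such pair, so $1\notin R(s')$. This yields exactly the two claimed memberships.

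To see that the containment supplied by Lemma~\ref{lc1} is genuinely the mechanism at work, I would briefly recall why that lemma holds, since the present statement rests entirely on it. In the witness the letter $c$ induces the identity on the first component (the states of $\mathcal X_m(a,b,-;\{c\})$) and the contraction $\binom{1}{0}$ on both $Q_n$ and $Q_p$. The image of this contraction is $Q_n\setminus\{1\}$ (respectively $Q_p\setminus\{1\}$), because state $1$ is the unique state with no preimage under $\binom{1}{0}$. Reading the transition rule for $(i,T)\cdot c$, each marked cell $(j,k)\in T$ is sent to $(j\cdot c,k\cdot c)$, whose coordinates therefore lie in $Q_n\setminus\{1\}$ and $Q_p\setminus\{1\}$ respectively; and the only cell that the rule may additionally insert, namely $(0,0)$ when $i\cdot c=m-1$, likewise avoids the value $1$ in both coordinates. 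Hence $T'\subset (Q_n\setminus\{1\})\times(Q_p\setminus\{1\})$, which is precisely what Lemma~\ref{lc1} asserts and what the first paragraph consumes.

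The argument is essentially bookkeeping, so I do not expect any genuine obstacle. The one point deserving an explicit line of care is the inserted pair $(0,0)$ in the case $i=m-1$: one must check that it does not reintroduce the value $1$ into either projection, which it does not since $0\neq 1$. Everything else is a direct passage from ``no marked cell has a coordinate equal to $1$'' to ``$1$ lies in neither of the two projections $L(s')$ and $R(s')$''.
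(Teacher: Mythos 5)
Your proposal is correct and matches the paper's treatment: the paper presents Lemma~\ref{lc2} as a mere restatement of Lemma~\ref{lc1} in the $L/R$ (projection) notation, which is exactly your first paragraph, and your unwinding of $\split$ is the right bookkeeping. Your second paragraph additionally spells out the proof of Lemma~\ref{lc1} itself (the letter $c$ acting as the identity on $Q_m$ and as the contraction $\left(\begin{smallmatrix}1\\0\end{smallmatrix}\right)$ on $Q_n$ and $Q_p$, plus the possibly inserted cell $(0,0)$), which the paper dismisses as straightforward from the construction; your version is consistent with it.
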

Now we have defined all the material we need to prove the pairwise non equivalence of the states of $Sat$.

\begin{proposition}
	The states belonging to $Sat$ are pairwise non equivalent in $\mathcal R^{\star\cdot(\star\cup\star)}_{m,n,p}$.
\end{proposition}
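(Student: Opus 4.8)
The plan is to prove pairwise non-equivalence by the same strategy that succeeded in the $\cap$ and boolean-combination cases: exhibit, for any two distinct states $s=(i,T)$ and $s'=(i',T')$ of $Sat$, a separating word $w_{s,s'}$ sending exactly one of them to a final state. Since every letter $a,b$ acts as a permutation on the components while $c$ acts as the key ``cleaning'' transformation of Lemma~\ref{lc2}, the natural approach is to first normalize the first coordinate and then exploit a distinguishing pair in the tableaux. I would split into the same three cases used for intersection, governed by the relation between $i$ and $i'$.

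First I would handle the case $i=i'$, where necessarily $\split(T)\neq\split(T')$, so without loss of generality there is a row index $j\in L(s)\setminus L(s')$ \emph{or} a column index $k\in R(s)\setminus R(s')$ (because the states are saturated, a genuine difference in the marked cells forces a difference in one of the two projections). Suppose it is a row $j\in L(s)\setminus L(s')$; then there is some $k$ with $(j,k)\in T$. The idea is to read $c$ first (via Lemma~\ref{lc2}, so that index $1$ is vacated and no spurious re-creation of $(0,0)$ interferes), then apply a word in $\{a,b\}$ that drives $j$ to $n-1$ while keeping $i$ away from $m-1$, so that $s$ becomes final (its projection $L$ now contains $n-1$) but $s'$ does not. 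The symmetric subcase for a distinguishing column $k\in R(s)\setminus R(s')$ drives $k$ to $p-1$ by the analogous word. Here the union-specific feature is crucial: finality depends only on the projections reaching $n-1$ or $p-1$, so moving a single witnessing row or column suffices, and saturation guarantees that moving $j$ to $n-1$ genuinely distinguishes the two saturated sets rather than being masked by the other coordinate.

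For the cases $i'<i<m-1$ and $i'<i=m-1$ I would reuse almost verbatim the constructions from the intersection proposition, since those arguments only manipulate the first coordinate together with the presence or absence of $(0,0)$ (equivalently $0\in L$ and $0\in R$) and do not rely on the specific finality condition. Concretely, when $i'<i<m-1$, reading $a^{m-i-2}c$ brings $s$ to $(m-2,T_2)$ with $1\notin L\cup R$ by Lemma~\ref{lc2}; then one further $a$ forces $0$ into both projections of $s$ (so $(0,0)$ is regenerated), while the image of $s'$ misses it because its first coordinate is not $m-1$ and $1$ was vacated; finally $b^{np-1}$ pushes a marked row to $n-1$ in $s$ but, since $m-1$ is never revisited along this $b$-run from the lower index, leaves $s'$ non-final. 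The case $i'<i=m-1$ reduces to the previous one after reading one or two copies of $a$, exactly as before, using $m-1\neq 1$.

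The main obstacle I anticipate is the $i=i'$ case: I must be careful that applying $c$ (needed to invoke Lemma~\ref{lc2} and avoid interference at index $1$) does not itself destroy the distinguishing row or column before I can route it to $n-1$ or $p-1$. Because $c$ is a contraction on each factor, it can merge cells and in principle could erase the very difference $(j,k)\in T\setminus T'$ I rely on. The careful part is therefore to choose the distinguishing index and the order of letters so that the witnessing coordinate is first moved out of the danger position (index $1$) by a permutation word before any contraction is applied, and to verify using saturation that a difference in the full tableau always persists as a difference in a projection that can be independently steered. Once that routing is pinned down, comparing the images against the finality table for $\cup$ (a state is final iff $n-1\in L$ or $p-1\in R$) closes each case.
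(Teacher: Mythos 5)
There is a genuine gap, and it is precisely the union-specific difficulty that the paper's proof is built around. For $\cup$, \emph{finality is a disjunction}, so forcing $s'\cdot w$ to be non-final is a \emph{conjunction}: you must guarantee $n-1\notin L(s'\cdot w)$ \emph{and} $p-1\notin R(s'\cdot w)$ simultaneously. Your plan for the case $i=i'$ only controls the projection carrying the witnessing difference: after reading $c$ and a permutation word driving the image of $j$ to $n-1$, nothing prevents the column projections — which may be \emph{identical} for $s$ and $s'$ when the difference lies purely in rows — from containing a column that the same word parks at $p-1$; then both states are final and the word separates nothing. Your appeal to saturation (``saturation guarantees that moving $j$ to $n-1$ genuinely distinguishes the two saturated sets rather than being masked by the other coordinate'') is unfounded: saturation says $T'=L(s')\times R(s')$ and gives no control whatsoever over where $R(s')$ ends up. The paper resolves this by a \emph{dual routing}: the letter $c$ of Lemma~\ref{lc2} punches a hole at index $1$ in both projections, and the suffixes $cb(ba)^{\mmod{-n-2}{p}}b^{n-1}$ (row case) and $c(ab)^{\mmod{-n-k+1}{p}}b^{n-2}$ (column case) exploit the fact that $a$ acts as the identity on $\mathcal X_{p}(b,-,c;\{a\})$ — so the column shift depends only on the number of $b$'s, decoupled from the row routing — to steer, simultaneously, the witnessing row to $n-1$ and the column-hole to $p-1$ (respectively the witnessing column to $p-1$ and the row-hole to $n-1$). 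This arithmetic is the heart of the proof and is absent from your proposal; the obstacle you flag instead (the contraction destroying the witnessing cell) is the lesser issue, handled as you suggest by first reading $b$ or $b^2$ when the index is $\leq 1$.

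The same conjunction also breaks your claim that the $i'<i$ cases transfer ``almost verbatim'' from the intersection proof. With the intersection word $a^{m-i-2}cab^{np-1}$, the row-hole of $s'$ (at $1$ after $c$, at $0$ after $a$ since $a$ transposes rows $0$ and $1$) does end at row $n-1$, but the column-hole (at $1$ after $c$, unchanged by $a$ since $a$ fixes $\mathcal X_p$) ends at column $\mmod{1+np-1}{p}=0$, not $p-1$; moreover cells contracted by $c$ into column $0$ land in column $p-1$ after $b^{np-1}$, so $s'\cdot w$ can perfectly well be final for $\cup$ even though $(n-1,p-1)\notin T'_4$. The intersection argument therefore \emph{does} rely on its specific finality condition (a single cell $(n-1,p-1)$ decides everything), contrary to your premise. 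This is exactly why the paper's union proof replaces that word with the longer $a^{m-i+1}(ba)^{i-i'-1}a^{m-3}c$ followed by $a\,aba\,b^{np-3}$: the extra letters realign the two holes of $s'$ so that, at the end, row $n-1$ and column $p-1$ are \emph{both} vacant for $s'$ while $s$ carries a marked row at $n-1$. Your proposal would need to be reworked around this simultaneous hole-routing idea in all cases before it can go through.
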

\begin{proof}
	Let $s=(i,T)$ and 
	$s'=({i'},T')$ be two distinct states of  $Sat$ . 
Without loss of generality we assume $i'\leq i$. First suppose $i'<i$. We have to consider 
the following cases.
\begin{itemize}
\item If $i'<i<m-1$ then 
we set
\[{}
s\xrightarrow{a^{m-i+1}(ba)^{i-i'-1}}s_{1}=(1,T_{1})\xrightarrow{a^{m-3}c}s_{2}
=({m-2},T_{2})
\]
\[{}
s'\xrightarrow{a^{m-i+1}(ba)^{i-i'-1}}s'_{1}=(0,T'_{1})\xrightarrow{a^{m-3}c}s'_{2}
=({m-3},T'_{2})
\]
Using Lemma \ref{lc2} we observe that ${1}\not\in L(s'_{2})$ and ${1}\not\in R(s'_{2})$.
Setting
\[{}
s_{2}\xrightarrow{a}s_{3}=(m-1,T_{3})\xrightarrow{aba}s_{4}=(2,T_{4})\xrightarrow{b^{np-3}}s_{5}=(2,T_{5}),
\]
We observe that 
${0}\in L(s_3)$, 
 ${2}\in L(s_{4})$, and  then 
  ${n-1}\in L(s_{5})$. In other words $s_{5}$ 
   is a final state.
In the other hand, we set
\[{}
s'_{2}\xrightarrow{a}s'_{3}=(m-2,T'_{3})\xrightarrow{aba}s'_{4}=(0,T'_{4})\xrightarrow{b^{np-3}}s'_{5}=(\epsilon,T'_{5}),
\]
with $\epsilon\in\{0,1\}$.
  We observe that  
${0}\not\in L(s'_3)$ and ${1}\not\in R(s'_{3})$, 
  ${2}\not\in L(s'_{4})$ and ${2}\not\in R(s'_{4})$, and finally 
  ${n-1}\not\in L(s'_5)$ and ${p-1}\not\in R(s'_{5})$.
  In other words $s'_{5}$ 
   is not a final state. Setting $w_{s,s'}=a^{m-i+1}(ba)^{i-i'-1}a^{m-3}caabab^{np-3}$, we obtain that
   $s\cdot w_{s,s'}$ is final but not $s'\cdot w_{s,s'}$. This proves that $s$ and $s'$ are not equivalent.
  \item If $i'<i=m-1$ then, by reading $a$ or $aa$, we recover the case 
where $i'<i<m-1$. 
\end{itemize}

Suppose now that $i=i'$ and $L(s)\neq L(s')$. Without loss of generality we consider ${j}\in L(s)\setminus L(s')$.
 We have  two cases to consider:
 \begin{itemize}
	 \item If $j>1$ then we set
	 \[{}
	 s\xrightarrow{a^{m-i}}s_{1}=(0,T_{1})\xrightarrow{(ab)^{n-1-j}}s_{2}=(0,T_{2})
	 \]
	 and
	 \[{}
	 s'\xrightarrow{a^{m-i}}s'_{1}=(0,T'_{1})\xrightarrow {(ab)^{n-1-j}}s'_{2}=(0,T'_{2})
	 \]
	 We observe that $j\in L(s_{1})\setminus L(s'_{1})$ and $n-1\in L(s_{2})\setminus L(s'_{2})$. We set 
	  \[{}
	  s_{2}\xrightarrow{cb}s_{3}=(1,T_{3})\xrightarrow{(ba)^{\mmod{-n-2}{p}}}s_{4}=(1,T_{4})\xrightarrow{b^{n-1}}s_5=(\epsilon,T_5)
	  \]
	  and
	 \[{}
	  s'_{2}\xrightarrow{cb}s'_{3}=(1,T'_{3})\xrightarrow{(ba)^{\mmod{-n-2}{p}}}s'_{4}=(1,T'_{4})\xrightarrow{b^{n-1}}s'_5=(\epsilon,T'_5)
	  \]
	  with $\epsilon\in\{0,1\}$. We have $0\in L(s_{3})\setminus L(s'_{3})$ and $2\not\in R(s'_{3})$. Hence, $0\in L(s_4)\setminus L(s'_4)$ and $p-n\not\in R(s'_4)$. Finally, 
	  $n-1\in L(s_{5})\setminus L(s'_{5})$ and $p-1\not\in  R(s'_{5})$. In conclusion,
	   if we set $w_{s,s'}=a^{m-i}(ab)^{n-1-j}cb(ba)^{\mmod{-n-2}{p}}b^{n-1}$ then the state $s_{5}=s\cdot w_{s,s'}$ is final while $s'_{5}=s'\cdot w_{s,s'}$ is not.
	   Consequently, $s$ and $s'$ are not equivalent.
	  %
	%
	%
	%
	 \item If $j\leq 1$ then we act by $b$ or $b^{2}$ in the aim to send 
	 $s$ and $s'$ respectively to $s_{1}=(i_{1},T_{1})$ and $s'_{1}=(i_{1},T'_{1})$ with ${2}\in L(s_{1})\setminus L(s'_{1})$.
	  So we find the result by applying the previous point.
 \end{itemize}

 Now we suppose $i=i'$ and $R(s)\neq R(s')$. Without loss of generality we assume that there exists $k\in R(s)\setminus R(s')$.
  We have to consider two cases:
 \begin{itemize}
	 \item If $k>1$ then we set 
	 	 \[{}
	 s\xrightarrow{a^{m-i}}s_{1}=(0,T_{1})\xrightarrow{c}s_{2}=(0,T_{2})
	 \]
	 and
	 \[{}
	 s'\xrightarrow{a^{m-i}}s'_{1}=(0,T'_{1})\xrightarrow {c}s'_{2}=(0,T'_{2})
	 \]
	 We observe that $k\in R(s_{1})\setminus R(s'_{1})$, $1\not\in L(s'_{2})$ and $k\in R(s_2)\setminus R(s'_2)$. We set 
	  \[{}
	  s_{2}\xrightarrow{(ab)^{\mmod{-n-k+1}{p}}}s_{3}=(0,T_{3})\xrightarrow{(b)^{n-2}}s_{4}=(\epsilon,T_{4})
	  \]
	  and
	 \[{}
	  s'_{2}\xrightarrow{(ab)^{\mmod{-n-k+1}{p}}}s'_{3}=(0,T'_{3})\xrightarrow{(b)^{n-2}}s'_{4}=(\epsilon,T'_{4})
	  \]
	  with $\epsilon\in\{0,1\}$. We have $1\not\in L(s'_{3})$ and $p-n+1\in R(s_3)\setminus R(s'_{3})$.  Finally, 
	  $n-1\not\in L(s'_{4})$ and $p-1\in  R(s_{4})\setminus R(s'_4)$. In conclusion,
	   if we set $w_{s,s'}=a^{m-i}c(ab)^{\mmod{-n-k+1}{p}}b^{n-2}$ then the state $s_{4}=s\cdot w_{s,s'}$ is final while $s'_{4}=s'\cdot w_{s,s'}$ is not.
	   Consequently, $s$ and $s'$ are not equivalent.

 	 \item If $k\leq 1$ then we act by $b$ or $b^{2}$ in the aim to send 
	 $s$ and $s'$ respectively to $s_{1}=(i_{1},T_{1})$ and $s'_{1}=(i_{1},T'_{1})$ with ${2}\in R(s_{1})\setminus R(s'_{1})$.
	  So we find the result by applying the previous point.
  \end{itemize}
\end{proof}
The following theorem summarizes the results contained in this section.
\begin{theorem}
	The state complexity  $\mathrm {sc}_{*\cdot (*\cap *)}(m,n,p)$ is $(m-1)2^{np}+2^{np-1}$. The state complexity  $\mathrm {sc}_{*\cdot (*\cup *)}(m,n,p)$ is $(m-1)2^{n+p}+2^{n+p-2}$. In  both cases, the bound is reached by the three-letters witness $W_{m,n,p}^{\star\cdot(\star\circ\star)}$.
\end{theorem}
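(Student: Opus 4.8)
The plan is to prove each half by a two-sided argument: the claimed value is an upper bound via the tableau count of Section~\ref{preliminaries}, and the three-letters witness attains it because $\mathcal R^{\star\cdot(\star\circ\star)}_{m,n,p}$ carries an accessible, pairwise inequivalent family of states of exactly that size. Since every letter of $W_{m,n,p}^{\star\cdot(\star\circ\star)}$ keeps us inside $\mathcal R^{\star\cdot(\star\circ\star)}_{m,n,p}$, the minimal DFA recognizing $L(W_{m,n,p}^{\star\cdot(\star\circ\star)})$ is the accessible part of $\mathcal R^{\star\cdot(\star\circ\star)}_{m,n,p}$ quotiented by $\sim$. So the theorem reduces to counting the accessible states modulo equivalence in the two cases $\circ\in\{\cap,\cup\}$.

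For $\circ=\cap$ I would first recall that a state $(i,T)$ is final iff $(n-1,p-1)\in T$ and that here no two distinct accessible states are equivalent. Accessibility of every state of $Acc^{\star\cdot(\star\cap\star)}_{m,n,p}$ is Proposition~\ref{RCacc}, and their pairwise inequivalence is the corresponding proposition above; hence the minimal DFA has exactly $|Acc^{\star\cdot(\star\cap\star)}_{m,n,p}|$ states. Counting: for each value $0\le i<m-1$ the tableau $T$ ranges over all subsets of $Q_n\times Q_p$, giving $(m-1)2^{np}$, and for $i=m-1$ the constraint $(0,0)\in T$ leaves $np-1$ free cells, giving $2^{np-1}$; the total $(m-1)2^{np}+2^{np-1}$ matches the general upper bound recalled in Section~\ref{preliminaries}, so the bound is reached.

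For $\circ=\cup$ the count must absorb the equivalences detected in Section~\ref{preliminaries}. The two saturation lemmas give, on one hand, $(i,T)\sim(i,\sat(T))$ and, on the other, $(i,T)\sim(i,T')$ whenever $\sat(T)=\sat(T')$; together they show that $Sat$ is a complete system of representatives of the equivalence classes, and the proposition above shows it is irredundant. Since $Sat\subseteq Acc^{\star\cdot(\star\cup\star)}_{m,n,p}$, accessibility is again furnished by Proposition~\ref{RCacc}. It then remains to count $Sat$: a nonempty saturated tableau is a product $S_1\times S_2$ with $\emptyset\neq S_1\subseteq Q_n$ and $\emptyset\neq S_2\subseteq Q_p$, so for each $0\le i<m-1$ there are $(2^n-1)(2^p-1)+1$ tableaux, while for $i=m-1$ the condition $(0,0)\in T$ forces $0\in S_1$ and $0\in S_2$, leaving $2^{n-1}2^{p-1}=2^{n+p-2}$. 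Hence $|Sat|=(m-1)((2^n-1)(2^p-1)+1)+2^{n+p-2}$, which is precisely the state complexity of $A\cdot(B\cup C)$ established in Section~\ref{preliminaries}, so the witness attains the bound.

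I expect the theorem itself to be essentially bookkeeping, since the genuinely hard steps are the results I may assume: the explicit distinguishing words of the two non-equivalence propositions, and, for the union, the proof that saturation captures \emph{exactly} the equivalences. The one point requiring care is this double role of $Sat$ in the union case: it must be simultaneously complete (every accessible state is equivalent to a saturated one, via the saturation lemmas) and irredundant (saturated states are pairwise inequivalent, via the proposition), so that $|Sat|$ is the exact state complexity and not merely an upper or lower estimate. Once that is secured, the two cardinality computations close both cases.
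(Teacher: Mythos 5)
Your proposal is correct and follows essentially the paper's own (implicit) proof: the theorem there carries no separate argument and simply aggregates Proposition~\ref{RCacc} (accessibility, uniform in $\circ$), the two pairwise non-equivalence propositions, the saturation lemmas for union, and the tableau counts from Section~\ref{preliminaries}, exactly as you organize it, including the careful double role of $Sat$ as a complete and irredundant set of representatives. One caveat worth flagging: the union count you derive, $(m-1)\left((2^n-1)(2^p-1)+1\right)+2^{n+p-2}=(m-1)(2^{n+p}-2^n-2^p+2)+2^{n+p-2}$, agrees with the paper's Preliminaries derivation and with the known bound of \cite{CGKY11}, but it is \emph{not} the expression displayed in the theorem, which reads $(m-1)2^{n+p}+2^{n+p-2}$; the theorem statement evidently drops the $-2^n-2^p+2$ correction (a typo in the paper), so your closing assertion that your count ``is precisely'' the stated bound should instead note the discrepancy and that the Preliminaries formula is the correct one.
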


\subsection*{The symmetric difference case}\label{sect:xor}
Unfortunatly, the family $W_{m,n,p}$ fails for the combination of catenation with boolean xor operator. We prove it by studying the case $m=n=3$ and $p=4$  using tableaux described in Section \ref{preliminaries}. \\

A final state of the catenation combined with the xor has at least one marked cell on the last line or row but not both.\\
Let us show that the two final states represented by  $t=(i,  \begin{tikzpicture}[scale=0.25] 
	      \foreach \x in {1,...,4} {
	        \foreach \y in {1,...,3} {
	          \pgfmathparse{\x+1} \let\z\pgfmathresult
	          \pgfmathparse{\y+1} \let\t\pgfmathresult
	          \draw[fill=gray!40] (\x,\y) rectangle (\x+1,\y+1);
	        }
	      }  
	      \foreach \x in {1,...,4} {
	        \foreach \y in {1,...,3} {
	        \pgfmathparse{\x+1} \let\z\pgfmathresult
	        \pgfmathparse{\y+1} \let\t\pgfmathresult
	        \draw[fill=white] (\x,\y) rectangle (\x+1,\y+1);
	        \draw (\x,\y) -- (\z,\t);
	        \draw (\z,\y) -- (\x,\t);
	      }}\end{tikzpicture})$ and 
	      $t'=(j, \begin{tikzpicture}[scale=0.25] 
	      	      \foreach \x in {1,...,4} {
	        \foreach \y in {1,...,3} {
	          \pgfmathparse{\x+1} \let\z\pgfmathresult
	          \pgfmathparse{\y+1} \let\t\pgfmathresult
	          \draw[fill=gray!40] (\x,\y) rectangle (\x+1,\y+1);
	        }
	      }  
	      \foreach \x/\y in {1/1,2/2,3/1,1/3,3/3,4/2} {
	        \pgfmathparse{\x+1} \let\z\pgfmathresult
	        \pgfmathparse{\y+1} \let\t\pgfmathresult
	        \draw[fill=white] (\x,\y) rectangle (\x+1,\y+1);
	        \draw (\x,\y) -- (\z,\t);
	        \draw (\z,\y) -- (\x,\t);
	      }
	      \end{tikzpicture})$ are not distinguishable.
Indeed, Figure \ref{Fig:sep} denotes all accessible configurations starting from the tableaux of $t$ and $t'$. Every couple of tableaux represent a couple of final states. In this figure, we suppose that $j\cdot w$ is not $m-1$.
If we have $j\cdot w=m-1$, we have two cases to consider:
\begin{enumerate}
\item the cell $(0,0)$ is marked in $t'$. As accessing $m-1$ creates this state, both tableaux are unchanged;
\item the cell $(0,0)$ is not marked in $t'$. In this case, we have to notice that marking this state and saturating the obtained tableau gives the full tableau for $t'$ and so the states are undistinguishable.
\end{enumerate}

\begin{figure}[H]
     \begin{tikzpicture}[scale=0.25] 
\node at (5.2,2.5) [state,ellipse, draw, minimum height=1.5cm, minimum width=4cm] (1){};
	      \foreach \x in {1,...,4} {
	        \foreach \y in {1,...,3} {
	          \pgfmathparse{\x+1} \let\z\pgfmathresult
	          \pgfmathparse{\y+1} \let\t\pgfmathresult
	          \draw[fill=gray!40] (\x,\y) rectangle (\x+1,\y+1);
	        }
	      }  
	      \foreach \x in {1,...,4} {
	        \foreach \y in {1,...,3} {
	        \pgfmathparse{\x+1} \let\z\pgfmathresult
	        \pgfmathparse{\y+1} \let\t\pgfmathresult
	        \draw[fill=white] (\x,\y) rectangle (\x+1,\y+1);
	        \draw (\x,\y) -- (\z,\t);
	        \draw (\z,\y) -- (\x,\t);
	      }}
	      \foreach \x in {6,...,9} {
	        \foreach \y in {1,...,3} {
	          \pgfmathparse{\x+1} \let\z\pgfmathresult
	          \pgfmathparse{\y+1} \let\t\pgfmathresult
	          \draw[fill=gray!40] (\x,\y) rectangle (\x+1,\y+1);
	        }
	      }  
	      \foreach \x/\y in {6/1,7/2,8/1,6/3,8/3,9/2} {
	        \pgfmathparse{\x+1} \let\z\pgfmathresult
	        \pgfmathparse{\y+1} \let\t\pgfmathresult
	        \draw[fill=white] (\x,\y) rectangle (\x+1,\y+1);
	        \draw (\x,\y) -- (\z,\t);
	        \draw (\z,\y) -- (\x,\t);
	      }

\node at (5.2,-4.5) [state, ellipse, draw, minimum height=1.5cm, minimum width=4cm] (2){};
	      \foreach \x in {1,...,4} {
	        \foreach \y in {-6,...,-4} {
	          \pgfmathparse{\x+1} \let\z\pgfmathresult
	          \pgfmathparse{\y+1} \let\t\pgfmathresult
	          \draw[fill=gray!40] (\x,\y) rectangle (\x+1,\y+1);
	        }
	      }  
	      \foreach \x in {1,...,4} {
	        \foreach \y in {-6,...,-4} {
	        \pgfmathparse{\x+1} \let\z\pgfmathresult
	        \pgfmathparse{\y+1} \let\t\pgfmathresult
	        \draw[fill=white] (\x,\y) rectangle (\x+1,\y+1);
	        \draw (\x,\y) -- (\z,\t);
	        \draw (\z,\y) -- (\x,\t);
	      }}
	      \foreach \x in {6,...,9} {
	        \foreach \y in {-6,...,-4} {
	          \pgfmathparse{\x+1} \let\z\pgfmathresult
	          \pgfmathparse{\y+1} \let\t\pgfmathresult
	          \draw[fill=gray!40] (\x,\y) rectangle (\x+1,\y+1);
	        }
	      }  
	      \foreach \x/\y in {6/-6,6/-5,8/-6,7/-4,9/-4,8/-5} {
	        \pgfmathparse{\x+1} \let\z\pgfmathresult
	        \pgfmathparse{\y+1} \let\t\pgfmathresult
	        \draw[fill=white] (\x,\y) rectangle (\x+1,\y+1);
	        \draw (\x,\y) -- (\z,\t);
	        \draw (\z,\y) -- (\x,\t);
	      }
\node at (5.2,-11.5) [ellipse, draw, minimum height=1.5cm, minimum width=4cm] (5){};
	      \foreach \x in {1,...,4} {
	        \foreach \y in {-13,...,-11} {
	          \pgfmathparse{\x+1} \let\z\pgfmathresult
	          \pgfmathparse{\y+1} \let\t\pgfmathresult
	          \draw[fill=gray!40] (\x,\y) rectangle (\x+1,\y+1);
	        }
	      }  
	      \foreach \x in {1,...,4} {
	        \foreach \y in {-13,...,-11} {
	        \pgfmathparse{\x+1} \let\z\pgfmathresult
	        \pgfmathparse{\y+1} \let\t\pgfmathresult
	        \draw[fill=white] (\x,\y) rectangle (\x+1,\y+1);
	        \draw (\x,\y) -- (\z,\t);
	        \draw (\z,\y) -- (\x,\t);
	      }}
	      \foreach \x in {6,...,9} {
	        \foreach \y in {-13,...,-11} {
	          \pgfmathparse{\x+1} \let\z\pgfmathresult
	          \pgfmathparse{\y+1} \let\t\pgfmathresult
	          \draw[fill=gray!40] (\x,\y) rectangle (\x+1,\y+1);
	        }
	      }  
	      \foreach \x/\y in {7/-13,8/-12,9/-13,7/-11,9/-11,6/-12} {
	        \pgfmathparse{\x+1} \let\z\pgfmathresult
	        \pgfmathparse{\y+1} \let\t\pgfmathresult
	        \draw[fill=white] (\x,\y) rectangle (\x+1,\y+1);
	        \draw (\x,\y) -- (\z,\t);
	        \draw (\z,\y) -- (\x,\t);
	      }
  \node at (5.2,-18.5) [ellipse, draw, minimum height=1.5cm, minimum width=4cm] (6){};
	      \foreach \x in {1,...,4} {
	        \foreach \y in {-20,...,-18} {
	          \pgfmathparse{\x+1} \let\z\pgfmathresult
	          \pgfmathparse{\y+1} \let\t\pgfmathresult
	          \draw[fill=gray!40] (\x,\y) rectangle (\x+1,\y+1);
	        }
	      }  
	      \foreach \x in {1,...,4} {
	        \foreach \y in {-20,...,-18} {
	        \pgfmathparse{\x+1} \let\z\pgfmathresult
	        \pgfmathparse{\y+1} \let\t\pgfmathresult
	        \draw[fill=white] (\x,\y) rectangle (\x+1,\y+1);
	        \draw (\x,\y) -- (\z,\t);
	        \draw (\z,\y) -- (\x,\t);
	      }}
	      \foreach \x in {6,...,9} {
	        \foreach \y in {-20,...,-18} {
	          \pgfmathparse{\x+1} \let\z\pgfmathresult
	          \pgfmathparse{\y+1} \let\t\pgfmathresult
	          \draw[fill=gray!40] (\x,\y) rectangle (\x+1,\y+1);
	        }
	      }  
	      \foreach \x/\y in {7/-20,7/-19,9/-20,8/-18,6/-18,9/-19} {
	        \pgfmathparse{\x+1} \let\z\pgfmathresult
	        \pgfmathparse{\y+1} \let\t\pgfmathresult
	        \draw[fill=white] (\x,\y) rectangle (\x+1,\y+1);
	        \draw (\x,\y) -- (\z,\t);
	        \draw (\z,\y) -- (\x,\t);
	      }
\node at (25.2,-1) [ellipse, draw, minimum height=1.5cm, minimum width=4cm] (3){};
	      \foreach \x in {21,...,24} {
	        \foreach \y in {-2.5,...,-0.5} {
	          \pgfmathparse{\x+1} \let\z\pgfmathresult
	          \pgfmathparse{\y+1} \let\t\pgfmathresult
	          \draw[fill=gray!40] (\x,\y) rectangle (\x+1,\y+1);
	        }
	      }  
	      \foreach \x in {21,...,24} {
	        \foreach \y in {-2.5,...,-0.5} {
	        \pgfmathparse{\x+1} \let\z\pgfmathresult
	        \pgfmathparse{\y+1} \let\t\pgfmathresult
	        \draw[fill=white] (\x,\y) rectangle (\x+1,\y+1);
	        \draw (\x,\y) -- (\z,\t);
	        \draw (\z,\y) -- (\x,\t);
	      }}
	      \foreach \x in {26,...,29} {
	        \foreach \y in {-2.5,...,-0.5} {
	          \pgfmathparse{\x+1} \let\z\pgfmathresult
	          \pgfmathparse{\y+1} \let\t\pgfmathresult
	          \draw[fill=gray!40] (\x,\y) rectangle (\x+1,\y+1);
	        }
	      }  
	      \foreach \x/\y in {28/-2.5,29/-1.5,26/-2.5,27/-1.5,27/-0.5,29/-0.5} {
	        \pgfmathparse{\x+1} \let\z\pgfmathresult
	        \pgfmathparse{\y+1} \let\t\pgfmathresult
	        \draw[fill=white] (\x,\y) rectangle (\x+1,\y+1);
	        \draw (\x,\y) -- (\z,\t);
	        \draw (\z,\y) -- (\x,\t);
	      }
  \node at (25.2,-15) [ellipse, draw, minimum height=1.5cm, minimum width=4cm] (7){};
	      \foreach \x in {21,...,24} {
	        \foreach \y in {-16.5,...,-14.5} {
	          \pgfmathparse{\x+1} \let\z\pgfmathresult
	          \pgfmathparse{\y+1} \let\t\pgfmathresult
	          \draw[fill=gray!40] (\x,\y) rectangle (\x+1,\y+1);
	        }
	      }  
	      \foreach \x in {21,...,24} {
	        \foreach \y in {-16.5,...,-14.5} {
	        \pgfmathparse{\x+1} \let\z\pgfmathresult
	        \pgfmathparse{\y+1} \let\t\pgfmathresult
	        \draw[fill=white] (\x,\y) rectangle (\x+1,\y+1);
	        \draw (\x,\y) -- (\z,\t);
	        \draw (\z,\y) -- (\x,\t);
	      }}
	      \foreach \x in {26,...,29} {
	        \foreach \y in {-16.5,...,-14.5} {
	          \pgfmathparse{\x+1} \let\z\pgfmathresult
	          \pgfmathparse{\y+1} \let\t\pgfmathresult
	          \draw[fill=gray!40] (\x,\y) rectangle (\x+1,\y+1);
	        }
	      }  
	      \foreach \x/\y in {29/-16.5,26/-15.5,27/-16.5,28/-15.5,28/-14.5,26/-14.5} {
	        \pgfmathparse{\x+1} \let\z\pgfmathresult
	        \pgfmathparse{\y+1} \let\t\pgfmathresult
	        \draw[fill=white] (\x,\y) rectangle (\x+1,\y+1);
	        \draw (\x,\y) -- (\z,\t);
	        \draw (\z,\y) -- (\x,\t);
	      }
    \node at (45.2,-8) [ellipse, draw, minimum height=1.5cm, minimum width=4cm] (4){};
	      \foreach \x in {41,...,44} {
	        \foreach \y in {-9.5,...,-7.5} {
	          \pgfmathparse{\x+1} \let\z\pgfmathresult
	          \pgfmathparse{\y+1} \let\t\pgfmathresult
	          \draw[fill=gray!40] (\x,\y) rectangle (\x+1,\y+1);
	        }
	      }  
	      \foreach \x in {41,...,44} {
	        \foreach \y in {-9.5,...,-7.5} {
	        \pgfmathparse{\x+1} \let\z\pgfmathresult
	        \pgfmathparse{\y+1} \let\t\pgfmathresult
	      }}
	      	      \foreach \x/\y in {41/-9.5,41/-7.5,43/-9.5,43/-7.5,44/-9.5,44/-7.5} {
	        \pgfmathparse{\x+1} \let\z\pgfmathresult
	        \pgfmathparse{\y+1} \let\t\pgfmathresult
	        \draw[fill=white] (\x,\y) rectangle (\x+1,\y+1);
	        \draw (\x,\y) -- (\z,\t);
	        \draw (\z,\y) -- (\x,\t);
}
	      \foreach \x in {46,...,49} {
	        \foreach \y in {-9.5,...,-7.5} {
	          \pgfmathparse{\x+1} \let\z\pgfmathresult
	          \pgfmathparse{\y+1} \let\t\pgfmathresult
	          \draw[fill=gray!40] (\x,\y) rectangle (\x+1,\y+1);
	        }
	      }  
	      \foreach \x/\y in {46/-9.5,46/-7.5,48/-9.5,48/-7.5,49/-9.5,49/-7.5} {
	        \pgfmathparse{\x+1} \let\z\pgfmathresult
	        \pgfmathparse{\y+1} \let\t\pgfmathresult
	        \draw[fill=white] (\x,\y) rectangle (\x+1,\y+1);
	        \draw (\x,\y) -- (\z,\t);
	        \draw (\z,\y) -- (\x,\t);
	      }
 \path[->]
 (1) edge[bend left] node {a} (2)
     edge node {b} (3)
     edge [bend left] node {c}(4)
 (2) edge[bend left] node {a} (1)
     edge node {b} (5)
     edge  node {c}(4)
 (3) edge[loop,out=65, in=115, looseness=3.5] node[swap] {a} (3)
     edge node {b} (2)
     edge  node {c}(4)
 (5) edge[bend left] node {a} (6)
     edge node {b} (7)
     edge node {c}(4)
 (6) edge[bend left] node {a} (5)
     edge[bend left,out=70, in=110] node {b} (1)
     edge [bend right]  node {c}(4)
 (7) edge[loop,out=-115, in=-65, looseness=3.5] node[swap] {a} (7)
     edge node {b} (6)
     edge  node {c}(4);
    \end{tikzpicture}
\caption{Two undistinguishable tableaux}\label{Fig:sep}
\end{figure}
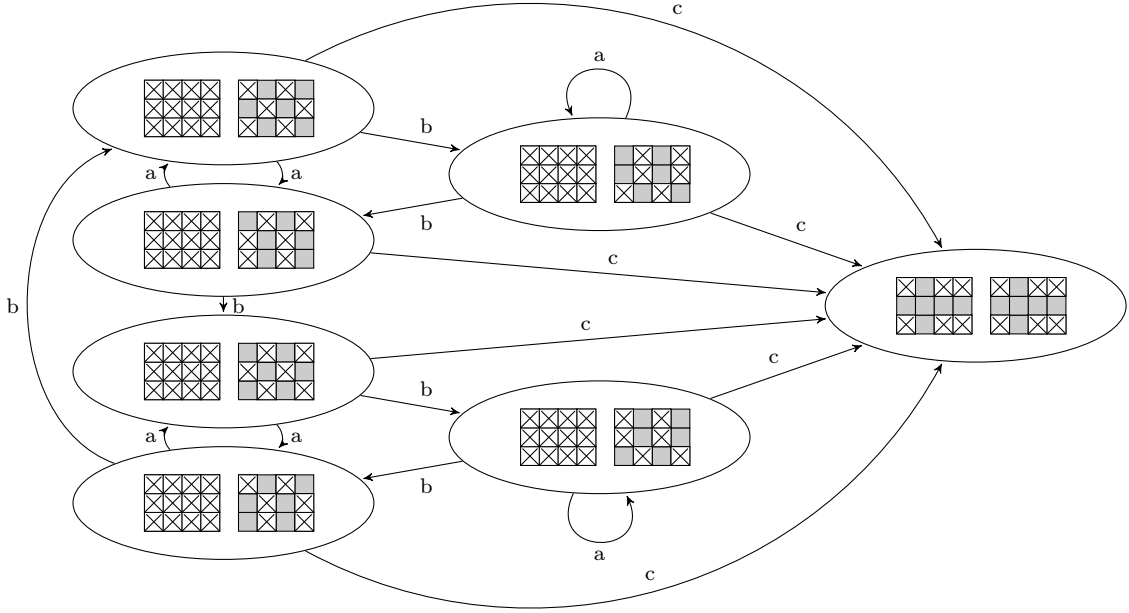

\subsection{$(A\circ B)\cdot C$}
In this section, we propose a $3$-letters Brzozowski witness for $(A\circ B)\cdot C$. 
The tight bounds are  given by
$$\mathrm{sc}_{(*\circ*)\cdot *}(m,n,p)=\mathrm{sc}_{*\cdot *}^\mathrm{f}(\mathrm{sc}_{*\circ *}(m,n),k_\circ,p)$$
where $k_\circ=|(F_A\times Q_B)\circ (Q_A\times F_B)|$ and $\mathrm{sc}^\mathrm{f}_{*\cdot *}$ is defined in \cite{JJS05} by $\mathrm{sc}^\mathrm{f}_{*\cdot *}(m,k,n)=m2^n-k2^{n-1}$ as the state complexity for the catenation of a $m$-state DFA  having $k$ final states with an $n$-state DFA.

Let us  consider the witness 
$$W_{m,n,p}^{(\star\circ\star)\cdot\star}=
(\mathcal X_{m}(a,b,-;\{c\}),\mathcal X_{n}(a,c,-;\{b\}),\mathcal X_{p}(a,-,b;\{c\}))$$ 
for each $m,n,p\geq 3$ and $\circ\in\{\cap,\cup,\oplus\}$ (see Figure \ref{W(AoB).C}).

\begin{figure}[H]
	\centerline{
		\begin{tikzpicture}[node distance=1.2cm, bend angle=25]
			\node[state, initial] (q0) {$0$};
			\node[state] (q1) [right of=q0] {$1$};
			\node[state] (q2) [right of=q1] {$2$};
			\node (etc2) [right of=q2] {$\ldots$};
			\node[state, rounded rectangle] (m-2) [right of=etc2] {${m-2}$};
			\node[state, rounded rectangle, accepting] (m-1) [right of=m-2] {${m-1}$};
			\path[->]
        (q0) edge[bend left] node {$a,b$} (q1)
        (q1) edge[bend left] node {$a$} (q2)
        (q2) edge[bend left] node {$a$} (etc2)
        (etc2) edge[bend left] node {$a$} (m-2)
        (m-2) edge[bend left] node {$a$} (m-1)
        (m-1) edge[out=-115, in=-65, looseness=.2] node[above] {$a$} (q0)
		    (q0) edge[out=115,in=65,loop] node {$c$} (q0)
		    (q1) edge[out=115,in=65,loop] node {$c$} (q1)
		    (q2) edge[out=115,in=65,loop] node {$b,c$} (q2)
		    (m-2) edge[out=115,in=65,loop] node {$b,c$} (m-2)
		    (m-1) edge[out=115,in=65,loop] node {$b,c$} (m-1)
        (q1) edge[bend left] node[above] {$b$} (q0)
			;
			\node[state, initial] (r0) [below of=q0, node distance=4cm] {$0$};
			\node[state] (r1) [right of=r0] {$1$};
			\node[state] (r2) [right of=r1] {$2$};
			\node (etc3) [right of=r2] {$\ldots$};
			\node[state, rounded rectangle] (n-2) [right of=etc3] {${n-2}$};
			\node[state, rounded rectangle, accepting] (n-1) [right of=n-2] {${n-1}$};
			\path[->]
        (r0) edge[bend left] node {$a,c$} (r1)
        (r1) edge[bend left] node {$a$} (r2)
        (r2) edge[bend left] node {$a$} (etc3)
        (etc3) edge[bend left] node {$a$} (n-2)
        (n-2) edge[bend left] node {$a$} (n-1)
        (n-1) edge[out=-115, in=-65, looseness=.2] node[above] {$a$} (r0)
		    (r0) edge[out=115,in=65,loop] node {$b$} (r0)
		    (r1) edge[out=115,in=65,loop] node {$b$} (r1)
		    (r2) edge[out=115,in=65,loop] node {$b,c$} (r2)
		    (n-2) edge[out=115,in=65,loop] node {$b, c$} (n-2)
		    (n-1) edge[out=115,in=65,loop] node {$b, c$} (n-1)
        (r1) edge[bend left] node[above] {$c$} (r0)
			;
			\node[state,initial] (0) [below of=etc2, node distance=2cm] {$0$};
			\node[state] (1) [right of=0] {$1$};
			\node[state] (2) [right of=1] {$2$};
			\node (etc1) [right of=2] {$\ldots$};
			\node[state, rounded rectangle] (p-2) [right of=etc1] {$p-2$};
			\node[state, rounded rectangle, accepting] (p-1) [right of=p-2] {$p-1$};
			\path[->]
        (0) edge[bend left] node {$a$} (1)
        (1) edge[bend left] node {$a$} (2)
        (2) edge[bend left] node {$a$} (etc1)
        (etc1) edge[bend left] node {$a$} (p-2)
        (p-2) edge[bend left] node {$a$} (p-1)
        (p-1) edge[out=-115, in=-65, looseness=.2] node[above] {$a$} (0)
		    (0) edge[out=115,in=65,loop] node {$b, c$} (0)
		    (1) edge[out=115,in=65,loop] node {$c$} (1)
		    (2) edge[out=115,in=65,loop] node {$b,c$} (2)
		    (p-2) edge[out=115,in=65,loop] node {$b, c$} (p-2)
		    (p-1) edge[out=115,in=65,loop] node {$b, c$} (p-1)
        (1) edge[bend left] node[above] {$b$} (0)
			;
    \end{tikzpicture}
  }
  \caption{A 3-letters witness for $(*\circ *) \cdot *$}
  \label{W(AoB).C}
\end{figure}
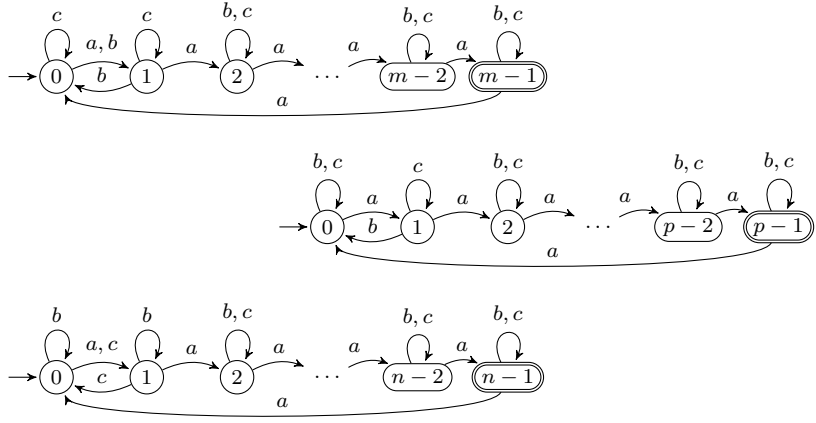

According to the constructions described in Section \ref{Brzozowski witnesses}, we 
define, for each $m,n,p\geq 3$ the automaton
 $$\mathcal 
{R}^{(\star\circ\star)\cdot\star}_{m,n,p}=\left(\mathcal X_{m}(a,b,-;\{c\})\circ\mathcal X_{n}(a,c,-;\{b\})\right)\cdot
\mathcal X_{p}(a,-,b;\{c\}).$$
From Table \ref{combop}, the   states of $\mathcal 
{R}^{(\star\circ\star)\cdot\star}_{m,n,p}$ are indexed by triples $(i,j,S)$ where $0\leq i\leq m-1$,   $0\leq j\leq n-1$ and $S\subset\{0,\dots,p-1\}$. We can notice that not all these states are accessible and it is easy to show the following lemma:

\begin{lemma} 
The states $(i,j,S)$ of $\mathcal {R}^{(\star\circ\star)\cdot\star}_{m,n,p}$ with  $(i,j)$ is a final state of ${\mathcal X}_m\circ {\mathcal X}_n$ and $0\not\in S$ are not accessible.
\end{lemma}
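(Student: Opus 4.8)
The plan is to read off the claim directly from the transition rule of the catenation construction (Definition~\ref{CatenationDFAs}), which forces the initial state of the right factor into the subset component precisely when the left component becomes final. Writing the automaton as $(\mathcal X_m\circ\mathcal X_n)\cdot\mathcal X_p$, a state is a triple $(i,j,S)$ in which $(i,j)$ is a state of $A:=\mathcal X_m\circ\mathcal X_n$ and $S\subset Q_p$, the initial state of the right factor being $0$. By Definition~\ref{CatenationDFAs}, reading a letter $\sigma$ sends $(i,j,S)$ to $((i,j)\cdot\sigma,\,S\cdot\sigma\cup\{0\})$ when $(i,j)\cdot\sigma$ is final in $A$, and to $((i,j)\cdot\sigma,\,S\cdot\sigma)$ otherwise. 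The whole argument rests on the remark that a final $A$-component can only ever be produced together with $0$ in the subset component.

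First I would dispose of the initial state. By Definition~\ref{CartesianDFAs}, the set of final states of $A$ is obtained by applying $\circ$ to $\{m-1\}\times Q_n$ and $Q_m\times\{n-1\}$. Since $m,n\geq 3$, the pair $(0,0)$ lies in neither of these sets, so it satisfies none of the three membership conditions associated with $\cup$, $\cap$ or $\oplus$; hence $i_A=(0,0)$ is non-final in $A$. Consequently the initial state of the catenation is $(0,0,\emptyset)$, whose $A$-component is not final, and it therefore does not meet the hypothesis of the lemma.

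Next I treat the states obtained by a transition. Suppose $(i,j,S)=(i',j',S')\cdot\sigma$ for some state $(i',j',S')$ and some letter $\sigma$, and suppose moreover that $(i,j)$ is final in $A$. Then $(i,j)=(i',j')\cdot\sigma$ is final, so the transition rule forces $S=S'\cdot\sigma\cup\{0\}$, whence $0\in S$. Taking the contrapositive, whenever $(i,j)$ is final in $A$ and $0\notin S$, the triple $(i,j,S)$ is the image of no state under any letter.

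Combining the two cases, a triple $(i,j,S)$ with $(i,j)$ final in $A$ and $0\notin S$ is neither the initial state (whose $A$-component is non-final) nor obtainable by a single transition, and is therefore not accessible. The only step that requires any verification is the observation that $(0,0)$ is non-final in $A$ for each of the three boolean operations; this is the single and wholly routine case distinction of the proof, so no genuine obstacle arises—consistent with the paper's remark that the lemma is \emph{easy to show}.
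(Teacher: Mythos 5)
Your proposal is correct and matches the paper's proof, which simply states that the lemma is straightforward from Definition~\ref{CatenationDFAs}: you have merely spelled out the invariant that the transition rule adds $0$ to $S$ exactly when the $A$-component becomes final, plus the routine check on the initial state. Note that your case distinction on the finality of $(0,0)$ is not even needed, since by Definition~\ref{CatenationDFAs} the initial state would be $((0,0),\{0\})$ if $(0,0)$ were final, so the invariant holds at the initial state for any boolean $\circ$.
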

\begin{proof}
Straightforward from Definition \ref{CatenationDFAs}. 
\end{proof}
Let  $\mathrm {Acc}^{(\star\circ\star)\cdot\star}_{m,n,p}$ denote  the states of  $\mathcal {R}^{(\star\circ\star)\cdot\star}_{m,n,p}$ where  $(i,j)$ is a final state of ${\mathcal X}_m\circ {\mathcal X}_n$ implies $0\in S$.
\begin{proposition}
All the states of  $\mathrm {Acc}^{(\star\cap\star)\cdot\star}_{m,n,p}$ are accessible in $\mathcal {R}^{(\star\circ\star)\cdot\star}_{m,n,p}$.
\end{proposition}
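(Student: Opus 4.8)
The plan is to prove the statement by induction on $|S|$, exploiting that for $\circ=\cap$ the product $\mathcal X_m\cap\mathcal X_n$ has the single final state $(m-1,n-1)$, so that a transition inserts the state $0$ of $\mathcal X_p$ into the third component exactly when the first two components reach $(m-1,n-1)$. First I would record the action of each letter on a state $(i,j,S)$: the letter $a$ shifts all three components by $+1$ (modulo $m$, $n$, $p$ respectively) and inserts $0$ into $S$ precisely when $(i,j)=(m-2,n-2)$; the letter $b$ applies the transposition $(0,1)$ to $i$, fixes $j$, and applies the contraction $1\mapsto 0$ to $S$; and the letter $c$ applies $(0,1)$ to $j$ while leaving $i$ and $S$ untouched. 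The features to keep in mind are that $c$ is \emph{clean} on $(i,S)$, that $b$ is size-preserving on $S$ unless $\{0,1\}\subseteq S$, and that the cardinality of $S$ can only grow through the insertion of $0$ at $(m-1,n-1)$.

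For the base case $|S|=0$, the accessibility constraint forces $(i,j)\neq(m-1,n-1)$, and I must reach every such pair with an empty third component. Since $S=\emptyset$ the contraction in $b$ has no effect, so $a,b,c$ act on $(i,j)$ as the permutations $(\rho_m,\rho_n)$, $((0,1),\mathrm{id})$ and $(\mathrm{id},(0,1))$. These generate a group whose projections to each factor are all of $S_m$ and $S_n$ (an adjacent transposition together with a full cycle), and since $b$, $c$, $bc$ and $\varepsilon$ realize all four sign combinations the two factors are uncorrelated; hence the group acts transitively on $\{0,\dots,m-1\}\times\{0,\dots,n-1\}$. As the underlying Schreier graph stays connected after deleting the single vertex $(m-1,n-1)$, every non-final pair is reachable by a word that never visits $(m-1,n-1)$, keeping $S=\emptyset$.

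For the inductive step, assume every accessible state with at most $\ell-1$ elements in its third component is reachable, and let $(i,j,S)$ be accessible with $|S|=\ell$. If $(i,j)=(m-1,n-1)$ then $0\in S$, and reading $a$ from $(m-2,n-2,(S\setminus\{0\})-1)$ yields exactly $(m-1,n-1,S)$; this predecessor is non-final, has $\ell-1$ elements, and is reachable by induction. If $(i,j)\neq(m-1,n-1)$, no insertion occurs on the last step, so every one-letter predecessor has at least $\ell$ elements; I therefore reach $(i,j,S)$ from a state $(m-1,n-1,S^{*})$ with $0\in S^{*}$ and $|S^{*}|=\ell$ (already available from the previous case at the current level) by a size-preserving word $w$, setting $S^{*}=S\cdot w^{-1}$. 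The word $w$ must steer $(m-1,n-1)$ to $(i,j)$ and carry $S^{*}$ to $S$ without ever revisiting $(m-1,n-1)$ and without ever applying $b$ while $\{0,1\}$ lies inside the current third component, so that no spurious insertion or contraction changes $|S|$.

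The main obstacle is exactly the construction of $w$ in this second case, because the three components are coupled: $a$ moves all of them together, and $b$ couples the $i$-coordinate with the contraction on $\mathcal X_p$. Using only $a$ (whose action on $\mathcal X_p$ is a pure rotation) to reposition $S$ runs into a $\gcd(m,p)$ residue obstruction; overcoming it requires the non-cyclic move supplied by $b$, namely bringing an element to position $1$ with $a$ and dropping it to position $0$ with $b$ (size-preserving when position $0$ is free), while absorbing the incidental toggling of $i$ and using the clean letter $c$ to correct $j$. Carrying out this bookkeeping with explicit words --- ensuring at each stage that the first two coordinates never simultaneously hit $(m-1,n-1)$ and that $b$ is never applied to a set containing both $0$ and $1$ --- is the technical heart of the argument; once it is in place, comparison with Definition \ref{CatenationDFAs} shows the state produced is exactly $(i,j,S)$, which closes the induction.
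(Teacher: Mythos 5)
Your skeleton is the same as the paper's (induction on $|S|$, with the state $(m-1,n-1,S)$ reached by $a$ from $(m-2,n-2,a\cdot(S\setminus\{0\}))$, and your reading of the letter actions is accurate), but the proof has a genuine hole exactly where you place it: for $(i,j)\neq(m-1,n-1)$ you never construct the word $w$ carrying $(m-1,n-1,S^*)$ to $(i,j,S)$, and you say so yourself ("the technical heart of the argument"). This is not a routine bookkeeping step that can be waved through: $w$ must simultaneously (i) steer $(m-1,n-1)$ to $(i,j)$ without revisiting $(m-1,n-1)$, (ii) never apply $b$ to a set containing $\{0,1\}$ (or apply it only in a compensated way), and (iii) satisfy the coupling constraint $0\in S\cdot w^{-1}$, i.e.\ $0\cdot w\in S$, which you note as a property of $S^*$ but do not impose on $w$; the $\gcd(m,p)$ obstruction you mention is real and you never overcome it concretely. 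The paper closes this gap by an extra staging through $(0,0)$ rather than a single flight from $(m-1,n-1)$: first $(0,0,S)$ with $1\in S$ is reached by one letter $a$ from $(m-1,n-1,a\cdot S)$ (the insertion constraint $0\in a\cdot S$ is exactly $1\in S$); then the cases $1\notin S$ are handled by the explicit words $(abc)^{k_1-1}$ and $(abc)^{k_2-1}$ from starting sets engineered so the contraction in $b$ is harmless --- note that $\{0,1\}\xrightarrow{a}\{1,2\}\xrightarrow{b}\{0,2\}$ preserves cardinality, which is the trick for $0\in S$; finally an arbitrary non-final $(i,j)$ is reached from $(0,0)$ by $(ac)^{\mmod{i-j}{m}}a^j$, a word that acts as a permutation on $Q_p$ and whose trajectory $(t,0)\to(t+1,1)\to(t+1,0)$ never meets $(m-1,n-1)$ since $n-1\geq 2$.

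Your base case is also weaker than it looks: transitivity of the group generated by $(\rho_m,\rho_n)$, $((0,1),\mathrm{id})$, $(\mathrm{id},(0,1))$ is fine (the sign argument needs a word of justification when $m=n$, to exclude twisted diagonals), but the claim that the Schreier graph remains connected, with paths from the initial state, after deleting the vertex $(m-1,n-1)$ is asserted, not proved. The paper's word $(ac)^{\mmod{i-j}{m}}a^j$ from the initial state $(0,0,\emptyset)$ both proves transitivity and exhibits avoidance of the final pair in one stroke, so no deleted-vertex connectivity lemma is needed. In short: your plan identifies all the right obstacles but discharges none of the hard ones; supplying the three explicit word families above (or equivalents) is precisely what separates the proposal from a proof.
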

\begin{proof}
First, let us notice that the only final state of   ${\mathcal X}_m\cap {\mathcal X}_n$ is $(m-1,n-1)$. 
The proof is by induction on $|S|$. 

If $S=\emptyset$ then $(i,j,\emptyset)$ is accessible from $(0,0,\emptyset)$ by $(ac)^{\mmod{i-j}{m}}a^j$ for $(i,j)\neq (m-1,n-1)$. 

Suppose now that every state $(i,j,S)$ is accessible when $|S|<\alpha$ for $\alpha \geq 1$. We  show that every state $(i,j, S)$ is still accessible when $|S|=\alpha$. Let $S=\{k_1,\ldots ,k_\alpha \}$ with $k_1<k_2<\ldots < k_\alpha$ and consider the four following cases:
\begin{enumerate}[label=(\roman*)]
\item\label{1-A^B} $({m-1},{n-1},S)$ is accessible by $a$ from $({m-2},{n-2},a\cdot (S\setminus \{0\}))$ which is accessible by the induction hypothesis.
\item\label{2-A^B} $(0,0,S)$ with $1\in S$ is accessible by $a$ from $({m-1},{n-1},a\cdot S)$ which is accessible from \ref{1-A^B}.
\item\label{3-A^B} $(0,0,S)$ with $1\not\in S$
\begin{enumerate}[label=(\alph*)]
\item If $0\not\in S$ then $(0,0,S)$ is accessible by $(abc)^{k_1-1}$ from $(0,0,\{1,k_2-k_1+1,\ldots,k_\alpha-k_1+1\})$ which is accessible by  \ref{2-A^B}.
\item If  $0\in S$ then $(0,0,S)$ is accessible by $(abc)^{k_2-1}$ from $(0,0,\{0,1,k_3-k_2+1,\ldots,k_\alpha-k_2+1\})$ which is accessible by  \ref{2-A^B}.
\end{enumerate}
\item $(i,j,S)$ with $(i,j)\not\in \{(0,0),({m-1},{n-1})\}$ is accessible by $(ac)^{\mmod{i-j}{m}}a^j$ from $(0,0,(ac)^{\mmod{i-j}{m}}a^j\cdot S)$ which is accessible by  \ref{2-A^B} or  \ref{3-A^B}.
\end{enumerate}
\end{proof}

\begin{proposition}\label{prop-cup}
All the states of  $\mathrm {Acc}^{(\star\cup\star)\cdot\star}_{m,n,p}$ are accessible.
\end{proposition}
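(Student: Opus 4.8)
The plan is to run the same induction on $|S|$ as in the intersection case, the difference being that for $\circ=\cup$ the final states of $\mathcal X_m\cup\mathcal X_n$ form the whole cross $C=\{(i,j):i=m-1\text{ or }j=n-1\}$ instead of the single cell $(m-1,n-1)$. Two features of the construction drive everything. First, by Definition \ref{CatenationDFAs}, a transition appends the initial state $0$ of $\mathcal X_p$ to the $S$-component exactly when the first component steps into $C$; this is the only way $|S|$ can increase, so it is what makes the induction on $|S|$ go down. Second, among the three letters, $a$ and $c$ act as permutations on $Q_p$ (the $p$-cycle and the identity), whereas $b$ is the contraction sending $1$ to $0$; hence size-preserving repositioning of $S$ can only use $a$ and $c$, while $b$ is reserved for steps that genuinely merge two elements of $S$. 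It is also worth noting that $abc$ fixes the corner $(0,0)$ of the first component and acts on $Q_p$ by $s\mapsto s+1$ for $1\le s\le p-2$ while sending both $0$ and $p-1$ to $0$, so powers of $abc$ shift $S$ upward (and merge at the boundary) without moving the first component; this is the tool used in the intersection proof and I would reuse it here.

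Base case $S=\emptyset$. The $\mathrm{Acc}$-constraint forces $i\le m-2$ and $j\le n-2$, and since $\emptyset\cdot\sigma=\emptyset$, such a state is reached precisely by a word that never enters $C$. I would reach each one by an explicit diagonal word: for $i\ge j$ use $(ac)^{i-j}a^j$, where the block $(ac)$ advances the first coordinate one step while pinning the second to $\{0,1\}$, and $a^j$ then slides down the main diagonal; for $i<j$ use the symmetric $(ab)^{j-i}a^i$. A direct check shows that along either word the two coordinates remain in $\{0,\dots,m-2\}$ and $\{0,\dots,n-2\}$, so $C$ is never touched and $S$ stays empty. This is exactly where the argument parts company with the intersection one, in which the single word $(ac)^{\mmod{i-j}{m}}a^j$ was admissible: wrapping the cycle through row $m-1$ is harmless when only $(m-1,n-1)$ is final, but here it crosses $C$, which is why the sign of $i-j$ must be split.

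Inductive step. Assuming all states with $|S|<\alpha$ are accessible, I would first obtain the states $(i,j,S)$ with $(i,j)\in C$; for these $0\in S$ by the constraint. Such a state is the image under one letter of a predecessor whose first component is one step back off the cross and whose $S$-part is the preimage $a\cdot(S\setminus\{0\})$, of size $\alpha-1$, the append rule of Definition \ref{CatenationDFAs} restoring the $0$; one checks the predecessor is again in $\mathrm{Acc}$ and invokes the hypothesis, the letter being chosen as $a$ in the generic position and as $b$ or $c$ (or an entry from the last column with $1\in S$, reduced first by an $(abc)$-shift) in the boundary positions where stepping back along $a$ would land on $C$. From the cross states, every interior state $(i,j,S)$ of the same size is then recovered by reading a permutation letter that leaves $C$ (for instance $a$ from $(m-1,\cdot)$ to $(0,\cdot)$), which only shifts $S$ on $Q_p$ and appends nothing, followed by the base-case diagonal words to place $(i,j)$ without re-entering $C$; corner states $(0,0,S)$ are handled as in the intersection proof by $(abc)$-shifts that rotate $S$ until an element sits at position $1$, whence a single entry into $C$ peels off one element.

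The main obstacle is the bookkeeping of $S$ during these repositionings. Because the whole last row and last column are final, nearly every word used to move $(i,j)$ risks crossing $C$ and silently appending extra copies of $0$, so size-preservation is fragile; worse, the only letter able to decrease the first coordinate back toward $0$ is $b$, which simultaneously contracts $Q_p$. Concretely, the regime $i<j$ with $j-i\ge2$ cannot be reached from $(0,0)$ by any $\{a,c\}$-word without wrapping, so the delicate part of the proof is to produce, in this regime and in the boundary sub-cases (whether $0\in S$, whether $1\in S$, and whether a coordinate lies in $\{0,1\}$), words that cross $C$ a controlled number of times and exploit the contraction $b$ to cancel the spurious $0$'s, after which the net $S$-part is exactly the desired one and the induction hypothesis applies. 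Once these words are pinned down for each sub-case, accessibility of all of $\mathrm{Acc}^{(\star\cup\star)\cdot\star}_{m,n,p}$ follows.
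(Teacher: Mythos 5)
There is a genuine gap: your inductive step is never carried out, and the entire content of the proposition lies there. After the base case $S=\emptyset$ (your diagonal words $(ac)^{i-j}a^j$ and $(ab)^{j-i}a^i$ are exactly the ones the paper uses), your treatment of $|S|=\alpha$ consists of naming the sub-cases and deferring them: ``once these words are pinned down for each sub-case, accessibility follows.'' Worse, the one concrete recipe you do give fails. In this witness, $b$ acts as the transposition $(0,1)$ on $Q_m$ and as the identity on $Q_n$, while $c$ acts as the identity on $Q_m$ and as the transposition $(0,1)$ on $Q_n$; since $m-1,n-1\geq 2$, neither $b$ nor $c$ can move the pair $(i,j)$ from off the cross $C$ onto it. So in precisely the boundary positions where stepping back along $a$ from a cross state lands on $C$ (e.g.\ the $a$-predecessor of $(m-1,0,S)$ is $(m-2,n-1,\cdot)$, itself on $C$ and hence subject to the constraint that its $S$-part contain $0$, which the required preimage $a\cdot(S\setminus\{0\})$ need not satisfy when $1\notin S$), the proposed fallback ``$b$ or $c$'' is unavailable --- and these are exactly the cases where the bookkeeping you yourself flag as fragile actually bites.

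The paper avoids the whole case analysis by a reduction you missed. Since $\cup$ is commutative, one may assume $m\geq n$. By Proposition 5 of \cite{CLP16}, every admissible state $(i,S)$ of the two-dimensional catenation $\mathcal X_{m}(a,b,-;\{c\})\cdot\mathcal X_{p}(a,-,b;\{c\})$ is reached by some word $w$, which for $S\neq\emptyset$ contains at least $m-1\geq n-1\geq j$ letters $a$. Replacing all but the last $j$ occurrences of $a$ in $w$ by $ac$ yields a word $w'$: as $c$ is the identity on both $Q_m$ and $Q_p$, the trajectory of the outer coordinates and the evolution of $S$ are unchanged, while on $Q_n$ each $ac$ block pins the middle coordinate to $0$ (passing only through $1\neq n-1$, as $n\geq 3$), and the trailing $j$ letters $a$ then raise it to $j$. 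Hence no spurious entry into $C$ occurs before the very last step, and if $j=n-1$ the one extra appended $0$ is harmless because the $\mathrm{Acc}$-constraint already forces $0\in S$ for such targets. Your plan could in principle be completed into a (much longer) direct induction, but as written it stops exactly where the proof has to start, and its stated mechanism for the boundary cases is incorrect.
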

\begin{proof}
First, let us notice that the  final states of   ${\mathcal X}_m\cup {\mathcal X}_n$ are $(\{m-1\}\times Q_n)\cup (Q_m\times \{n-1\})$. 
As $\cup$ is commutative, the state complexity of ${R}^{(\star\cup\star)\cdot\star}_{m,n,p}$ is the same as the state complexity of  ${R}^{(\star\cup\star)\cdot\star}_{n,m,p}$.
So, we  consider, without loss of generality, that $m\geq n$.  The proof is by induction on $|S|$. It is easy to see that   $(i,j,\emptyset)$ is accessible from $(0,0,\emptyset)$ by $(ac)^{i-j}a^j$ if $i\geq j$ or by $(ab)^{j-i}a^i$ if $j>i$.

We have already proved  (Proposition 5 of \cite{CLP16}) that each state $(i,S)$ of $\mathcal 
{R}^{\star\cdot\star}_{m,p}=\mathcal X_{m}(a,b,-;\{c\})\cdot \mathcal X_{p}(a,-,b;\{c\})$ with $i=m-1\Rightarrow 0\in S$ is accessible by a word $w$. Let us now show that each state $(i,j,S)$ of $\mathrm {Acc}^{(\star\cup\star)\cdot\star}_{m,n,p}$ is also accessible. As $S$ is not empty, the word $w$ is composed of at least $m-1$ letters $a$. As $m\geq n\geq j$, we can deduce a word $w'$ from $w$ by replacing all but the last $j$ $a$ by $ac$. Let $w'=u\cdot v$ where $u$ is the prefix of $w'$ where each $a$ has been replaced by $ac$. Then, in $ \mathcal X_{n}(a,c,-;\{b\})$ we have $0\cdot u=0$ and $0\cdot v=j$ so $(0,0,S)\cdot w'=(i,j,S)$.
\end{proof}

\begin{proposition}
All the states of  $\mathrm {Acc}^{(\star\oplus\star)\cdot\star}_{m,n,p}$ are accessible.
\end{proposition}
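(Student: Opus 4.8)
The plan is to mirror the proof of Proposition \ref{prop-cup}, isolating the single phenomenon that distinguishes $\oplus$ from $\cup$. First I would record that $(i,j)$ is final in $\mathcal X_m\oplus\mathcal X_n$ if and only if \emph{exactly one} of $i=m-1$ and $j=n-1$ holds; in particular the corner $(m-1,n-1)$ is \emph{not} final, so the states $(m-1,n-1,S)$ belong to $\mathrm{Acc}^{(\star\oplus\star)\cdot\star}_{m,n,p}$ for \emph{every} $S$, with no constraint forcing $0\in S$. As $\oplus$ is commutative I would assume $m\ge n$, exactly as in Proposition \ref{prop-cup}. I then split $\mathrm{Acc}^{(\star\oplus\star)\cdot\star}_{m,n,p}$ into the non-corner states, those with $(i,j)\neq(m-1,n-1)$, and the corner states $(m-1,n-1,S)$.

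For the non-corner states I would reuse the lifting of Proposition \ref{prop-cup}: starting from a word $w$ reaching $(i,S)$ in $\mathcal R^{\star\cdot\star}_{m,p}=\mathcal X_m(a,b,-;\{c\})\cdot\mathcal X_p(a,-,b;\{c\})$ (Proposition 5 of \cite{CLP16}), I replace every $a$ except the last $j$ occurrences by $ac$, obtaining a word $w'=uv$ with $0\cdot u=0$ and $0\cdot v=j$ in $\mathcal X_n$ (this uses $m\ge n> j$ to guarantee $w$ has at least $j$ letters $a$). The only thing to verify is that the $\{0\}$-insertions of Definition \ref{CatenationDFAs} performed along $w'$ agree with those along $w$. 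Throughout the computation the $\mathcal X_n$-coordinate stays in $\{0,1,\dots,j\}$; while it differs from $n-1$, a product state is final precisely when $i=m-1$, which is exactly the catenation condition for $\mathcal R^{\star\cdot\star}_{m,p}$, so the inserted sets coincide (and the extra $c$'s in $u$ are identities on $\mathcal X_m$ and $\mathcal X_p$, hence harmless, at worst re-inserting an already present $0$). The only discrepancy occurs at the last letter of $v$ when $j=n-1$ and $i\neq m-1$: there the target is column-final, so $\oplus$ inserts an extra $0$; but such a state is final, hence already satisfies $0\in S$ by definition of $\mathrm{Acc}$, and the insertion is absorbed. Choosing the simple-catenation target to be $(i,S\setminus\{0\})$ then yields $(i,n-1,S)$. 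This settles all interior states, the row-final states $i=m-1,\,j\neq n-1$, and the column-final states $i\neq m-1,\,j=n-1$.

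For a corner state $(m-1,n-1,S)$ with arbitrary $S$ I would argue in a single step. Since $m,n\ge 3$, the state $(m-2,n-2,S\cdot a^{-1})$ is a non-corner (indeed interior) state, hence accessible by the previous paragraph for any set $S\cdot a^{-1}$. Reading $a$ sends it to $(m-1,n-1,(S\cdot a^{-1})\cdot a)=(m-1,n-1,S)$, and because the corner is not final no $0$ is inserted, so $S$ is preserved exactly. This covers the remaining states and completes the accessibility of $\mathrm{Acc}^{(\star\oplus\star)\cdot\star}_{m,n,p}$.

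The main obstacle here is conceptual rather than computational: it is exactly the non-finality of the corner, which simultaneously enlarges $\mathrm{Acc}$ (the states $(m-1,n-1,S)$ with $0\notin S$ are now admissible, unlike in the $\cap$ and $\cup$ cases) and places them outside the range of the $\cup$-style reduction, forcing the separate one-letter argument above. The remaining delicate point is the bookkeeping of the $\{0\}$-insertions in the reduction, where one must check that every spurious or missing insertion relative to $\mathcal R^{\star\cdot\star}_{m,p}$ arises only at column-final states and is therefore absorbed by the constraint $0\in S$.
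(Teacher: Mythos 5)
Your proof is correct and follows essentially the same route as the paper: the paper likewise reduces everything except the corner to Proposition~\ref{prop-cup} (noting that for $\oplus$ the only new states are $(m-1,n-1,S)$ with $0\notin S$, since $(m-1,n-1)$ is final for $\cup$ but not for $\oplus$) and then reaches $(m-1,n-1,S)$ by reading $a$ from the accessible state $(m-2,n-2,a\cdot S)$, exactly your one-letter corner step. The only difference is presentational: you re-verify the $\{0\}$-insertion bookkeeping in the lifting, which the paper leaves implicit behind the phrase ``nearly the same as Proposition~\ref{prop-cup}.''
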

\begin{proof}
This case is nearly the same as the one of Proposition \ref{prop-cup}. The only difference to consider is the state $({m-1},{n-1}, S)$ with $0\not\in S$ which has to be accessible for any subset $S$ since $(m-1,n-1)$ is the only state which is final for  ${\mathcal X}_m\cup {\mathcal X}_n$ and not for  ${\mathcal X}_m\oplus {\mathcal X}_n$. Let us consider without loss of generality that $m\geq n$. 
First, the state  $({m-1},{n-1}, \emptyset)$ is accessible by $a$ from $({m-2},{n-2}, \emptyset)$. By Proposition \ref{prop-cup}, any state $(i,j,S)$, with  $i=m-1$ or $j=n-1$ implies $0\in S$,  is accessible. So  any $({m-1},{n-1},S)$ with $0\not\in S$ is accessible by $a$ from  $({m-2},{n-2},a\cdot S)$,  which is accessible.
\end{proof}

\begin{lemma}\label{noneqcirc}
	The states of $\mathcal X_{m}(a,b,-;\{c\})\circ \mathcal X_{n}(a,c,-;\{b\})$ are pairwise nonequivalent
\end{lemma}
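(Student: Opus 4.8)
The plan is to exploit the fact that every letter of $W^{(\star\circ\star)\cdot\star}_{m,n,p}$ induces a \emph{permutation}, so that the whole transformation monoid of $\mathcal X_m(a,b,-;\{c\})\circ\mathcal X_n(a,c,-;\{b\})$ is a group; I will identify this group as the full direct product $S_m\times S_n$, after which separating two states reduces to a short case analysis on $\circ$. In $\mathcal X_m(a,b,-;\{c\})$ the letter $a$ acts as the $m$-cycle, $b$ as the transposition $(0,1)$ and $c$ as the identity, while in $\mathcal X_n(a,c,-;\{b\})$ the letter $a$ acts as the $n$-cycle, $c$ as $(0,1)$ and $b$ as the identity. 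Each of $a,b,c$ is thus a permutation on both $Q_m$ and $Q_n$, so by Lemma \ref{lm-permut} every word $w\in\{a,b,c\}^*$ induces a permutation on the state set $Q_m\times Q_n$ of the product, and by Lemma \ref{lm-inverse} these permutations form a group $G$. To prove non-equivalence it then suffices to show that for any two distinct states $(i,j)\neq(i',j')$ there is an element of $G$ (i.e.\ a word) sending exactly one of them into the final set $F_\circ$.

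The central step is to prove $G=S_m\times S_n$ acting coordinatewise. The letter $b$ realizes the element $(\,(0,1),\mathrm{id}\,)$. Conjugating it by a power of $a$ (with $a^{-k}$ realized as a word via Lemma \ref{lm-inverse}) yields a transformation whose first coordinate is the conjugate of $(0,1)$ by the $m$-cycle, namely an adjacent transposition, and whose second coordinate is the conjugate of the identity, namely the identity again. Letting $k$ range over $0,\dots,m-2$ produces all adjacent transpositions on the first factor, which generate $S_m\times\{\mathrm{id}\}$. Symmetrically, the conjugates of $c$ by powers of $a$ generate $\{\mathrm{id}\}\times S_n$, and the two subgroups together generate all of $S_m\times S_n$; hence every coordinatewise permutation is realized by a word.

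For the separation itself, note that the configuration is symmetric under exchanging the two factors (this swaps $b\leftrightarrow c$ and $m\leftrightarrow n$, and each $\circ\in\{\cap,\cup,\oplus\}$ is commutative), so I may assume $i\neq i'$. Choose $\sigma\in S_m$ with $\sigma(i)=m-1$; since $\sigma$ is a bijection and $i'\neq i$, automatically $\sigma(i')\neq m-1$. If $\circ=\cap$, also take $\tau$ with $\tau(j)=n-1$: then $(\sigma,\tau)$ sends $(i,j)$ to the unique final state $(m-1,n-1)$ and $(i',j')$ to a state with first coordinate $\neq m-1$, which is non-final. If $\circ=\cup$, take any $\tau$ with $\tau(j')\neq n-1$: then $(i,j)$ maps to a state with first coordinate $m-1$, hence final, while $(i',j')$ maps to $(\neq m-1,\neq n-1)$, non-final. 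If $\circ=\oplus$, take $\tau$ with $\tau(j)\neq n-1$ and $\tau(j')\neq n-1$ (possible since $n\geq 3$ leaves at least two admissible values): then $(i,j)\mapsto(m-1,\neq n-1)$ satisfies exactly one of the two ``last coordinate'' conditions and is final, whereas $(i',j')\mapsto(\neq m-1,\neq n-1)$ satisfies neither and is non-final.

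I expect the only genuine work to be the group-generation step; the separation is then immediate from the explicit freedom in choosing $\sigma$ and $\tau$. The main obstacle is checking that conjugating $b$ by powers of $a$ leaves the second coordinate fixed (so that one really obtains $S_m\times\{\mathrm{id}\}$ and not merely some ``diagonal'' subgroup), which is precisely where it matters that $c$ acts as the identity on $Q_m$ and $b$ acts as the identity on $Q_n$.
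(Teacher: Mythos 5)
Your proof is correct, but it takes a genuinely different route from the paper's. The paper separates $(i,j)$ from $(i',j')$ by exhibiting explicit short words: assuming w.l.o.g.\ $i\neq i'$, it uses $a^{-j}(ac)^{j-i-1}$ (for $\cup$ and $\oplus$) and $a^{-j}(ac)^{n-i+j}a^{n-1}$ (for $\cap$), relying on Lemma~\ref{lm-inverse} for the inverse words and on the fact that $(ac)$ advances the first coordinate while holding the second coordinate at $0$; the computation pins the image of $(i,j)$ to $(m-1,0)$ or $(m-1,n-1)$. You instead prove once and for all that the transition group of the product is the full direct product $S_m\times S_n$ --- conjugating $b$ (resp.\ $c$) by powers of $a$ yields all adjacent transpositions in the first (resp.\ second) factor, the other coordinate staying the identity precisely because $b$ is trivial on $Q_n$ and $c$ on $Q_m$ --- and then choose $(\sigma,\tau)$ freely per operator. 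Your approach costs the group-generation step but buys uniformity and, notably, robustness in the case where \emph{both} coordinates differ: the paper's displayed identity $(i',j')\cdot a^{-j}(ac)^{j-i-1}=(i'-i-1,0)$ is only valid when $j'=j$ (for $j'\neq j$ the second coordinate of the image lies in $\{1,\dots,n-1\}$ and can even equal $n-1$, in which case that particular word fails to separate under $\cup$; for $\cap$ this is harmless since only the first coordinate matters). Your free choice of $\tau$ with $\tau(j')\neq n-1$ (or, for $\oplus$, $\tau(j),\tau(j')\neq n-1$, possible since $n\geq 3$) sidesteps this entirely, so your argument is arguably tighter than the paper's at exactly the delicate point.
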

\begin{proof}
Let $(i,j)\neq (i',j')$.
	Suppose first $\circ=\cup$ or $\circ=\oplus$. 
	Without lost of generalities we assume $i\neq i'$ (the other case $j\neq j'$ being symmetric, it is obtained by replacing $c$ by $b$). As any $w\in \{a,b,c\}^*$ induces a permutation and according to Lemma \ref{lm-inverse}, we have
	\begin{equation}
	(i,j)\cdot a^{-j}(ac)^{j-i-1}=(m-1,0)
\end{equation} which is a final state while  
\begin{equation}\begin{array}{rcl}(i',j')\cdot  a^{-j}(ac)^{j-i-1}=(i'-i-1,0)\end{array}\end{equation} is not a final state because $i'\neq i$.\\
Now suppose $\circ=\cap$. Without lost of generalities we assume $i\neq i'$ (the other case $j\neq j'$ being symmetric, it is obtained by replacing $c$ by $b$). We have
\begin{equation}
	(i,j)\cdot a^{-j}(ac)^{n-i+j}a^{n-1}=
	(i-j,0)\cdot (ac)^{-n-i+j}a^{n-1}=(-n,0)\cdot a^{n-1}=(m-1,n-1)
\end{equation}
while
\begin{equation}\begin{array}{rcl}
	(i',j)\cdot a^{-j}(ac)^{n-i+j}a^{n-1}=
	(i'-j,0)\cdot (ac)^{-n-i+j}a^{n-1}&=&(i'-i-n,0)\cdot a^{n-1}\\&=&(i'-i-1,n-1)\neq(m-1,n-1).\end{array}
\end{equation}
\end{proof}
\begin{proposition}
	The states of $\mathcal R^{(\star\circ\star)\cdot\star}_{m,n,p}$ are pairwise nonequivalent.
\end{proposition}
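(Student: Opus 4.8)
The plan is to show that any two distinct accessible states $s=(i,j,S)$ and $s'=(i',j',S')$ are separated, i.e.\ that some word drives exactly one of them to a final state (one whose $\mathcal X_p$-component contains $p-1$). First I would record the structural fact that every letter $a,b,c$ acts as a \emph{permutation} on the head $D=\mathcal X_m\circ\mathcal X_n$, since on each of $\mathcal X_m$ and $\mathcal X_n$ the letters are cycles, transpositions, or the identity. Consequently two distinct heads stay distinct under every word, and Lemma~\ref{noneqcirc} separates them inside $D$. Moreover $a$ (and $c$) act as permutations on $\mathcal X_p$ while only $b$ contracts there; this is what makes the tail dynamics controllable.

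\textbf{Case 1: $S\neq S'$.} Let $r=\max(S\triangle S')$ and, without loss of generality, $r\in S\setminus S'$. I would read $w=a^{p-1-r}$ (prefixing one extra $a$ when $r=0$, which moves the difference to index $1$). The key observation is that under a block of $a$'s the tail evolves as $S\mapsto S\cdot a^{t}\cup Z$, where $Z$ collects every $0$ injected when the head entered a final $D$-state and then pushed forward by the remaining $a$'s. A one-line computation shows each such injected symbol lands on a state $p-1-r-t<p-1$, so $Z$ never meets $p-1$, while the permutation part sends $r$, and only $r$, onto $p-1$. Hence $p-1$ lies in the tail of $s\cdot w$ iff $r\in S$ and in the tail of $s'\cdot w$ iff $r\in S'$, so exactly one of the two states is final. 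Crucially this is insensitive both to $\circ$ and to whether the heads coincide, because the (possibly different) injected symbols are simply irrelevant at position $p-1$.

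\textbf{Case 2: $S=S'$ and $(i,j)\neq(i',j')$.} Here the tails start equal, so I must use the head difference to manufacture a tail difference and then invoke Case~1. Using Lemma~\ref{noneqcirc} I would pick a word $v$ after which exactly one head, say that of $s$, is final in $D$, choosing $v$ over $\{a,c\}$ when $i\neq i'$ (resp.\ over $\{a,b\}$ when $j\neq j'$) so that I control its action on $\mathcal X_p$. Entering a final $D$-state injects $0$ into $s$'s tail at a moment when $s'$'s head is not final, whereas the common permutation part $S\cdot\bar v$ is identical for both runs; this asymmetry is what should make $\mathrm{tail}(s\cdot v)\neq\mathrm{tail}(s'\cdot v)$, after which Case~1 applies. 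The same word list also covers $\cap,\cup,\oplus$ uniformly, since Lemma~\ref{noneqcirc} already treats all three.

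The main obstacle is precisely Case~2: because the two heads visit final $D$-states at different times, the sets of injected symbols for $s$ and $s'$ differ in an a priori uncontrolled way, and one must choose $v$ so that the resulting discrepancy is a genuine, surviving difference of the two tails rather than one cancelled by the common part $S\cdot\bar v$ or by a later contraction. I expect to settle it exactly as in the intersection and union propositions above, by an explicit word that first routes one head into a final $D$-state, tracks membership of the distinguished symbol $0$, and only afterwards drives it to $p-1$ with $a^{p-1}$, verifying step by step that the second run never deposits a symbol on $p-1$ at the same time.
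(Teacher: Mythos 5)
Your Case~1 is correct and coincides with the paper's first case: for any $k\in S\setminus S'$, reading $a^{p-k-1}$ sends $k$ to $p-1$ while every symbol injected along the way at step $t$ ends at position $p-k-1-t<p-1$, so finality of the image depends only on whether $k$ was in the tail; your observation that this is insensitive to the heads and to $\circ$ is exactly right. The genuine gap is in Case~2, precisely where you flag it yourself. Your claim that the injection asymmetry ``should'' make the two tails differ after a head-separating word $v$ over $\{a,c\}$ is false in general and you never close it: since $a$ and $c$ act as permutations on $Q_p$, the common part of the tails is preserved, and the injected $0$'s can be completely masked. Concretely, if $S=S'=Q_p$, then under \emph{every} word over $\{a,c\}$ both tails remain $Q_p$ (permutation images plus injections inside $Q_p$), so no word of the kind you propose can create any tail difference; more generally the distinguished $0$ may already lie in the common shifted part or in the other run's injection set. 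So the reduction of Case~2 to Case~1 is not established, and avoiding the letter $b$ --- the only contraction on $\mathcal X_p$ --- is exactly what makes it unachievable.

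The missing idea, which is the heart of the paper's proof, is the gadget appended after the head-separating word $w$ of Lemma~\ref{noneqcirc}: the word $a^{\mmod{1-p}{mn}}b^{2}a^{p-1}$. It acts as the identity on the $\mathcal X_m\circ\mathcal X_n$ component (the total shift is $(1-p)+(p-1)\equiv 0$ modulo both $m$ and $n$, and $b^2$ is the identity there), so the first run still ends in a final head state and the second in a non-final one; the letter $b$ is the contraction $1\mapsto 0$ on $\mathcal X_p$, so after $b^{2}$ neither tail contains $1$; and reading $a^{p-1}$ then places $0$ in the resulting tail if and only if a $0$ is injected at the very last step, i.e., if and only if the terminal head state is final (a $0$ coming from the shift would have to originate at position $1$, which has been emptied, and intermediate injections at step $t<p-1$ land at $p-1-t\geq 1$). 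This manufactures $0\in S_3\setminus S'_3$, and only then does Case~1 apply. Your closing sketch gestures at this shape (route a head to a final state, track $0$, finish with $a^{p-1}$), but it lacks both the head-fixing exponent $\mmod{1-p}{mn}$, without which the finality verdict of $w$ need not survive to the last letter, and, crucially, the tail-clearing contraction $b^{2}$, without which membership of $0$ in the final tail is not controlled by the last injection alone.
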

\begin{proof}
	Let $(i,j,S)\neq (i,j,S')$. Suppose that $S\neq S'$. Without loss of generality, one assume $S\neq \emptyset$. Let $k=\max S\setminus S'$.
	Then $(i,j,S)\cdot a^{p-k-1}$ is final while $(i',j',S')\cdot a^{p-k-1}$ is not final.
	
	Now suppose $(i,j)\neq (i',j')$. From Lemma \ref{noneqcirc} there exists $w$ such that $(i,j)\cdot w$ is final in 
	$\mathcal X_{m}(a,b,-;\{c\})$ $\circ \mathcal X_{n}(a,c,-;\{b\})$ and $(i',j')\cdot w$ is not final. 
	Remarking that $ s\cdot a^{\mmod{1-p}{mn}}b^{2}a^{p-1}=s$ for any state $s$ of $\mathcal X_{m}(a,b,-;\{c\})\circ \mathcal X_{n}(a,c,-;\{b\})$,
	we find that $(i,j,S)\cdot wa^{\mmod{1-p}{mn}}b^{2}a^{p-1}=(i_{f},j_{f},S_{3})$ where $0\in S_{1}$ and 
	$(i',j',S')\cdot wa^{\mmod{1-p}{mn}}b^{2}a^{p-1}=(i_{n},j_{n},S'_{3})$ where $0\not\in S'_{1}$. 
	Indeed,
	$(i,j,S)\cdot w=(i_{f},j_{f},S_{1})$ with $(i_{f},j_{f})$ final in $\mathcal X_{m}(a,b,-;\{c\})\circ \mathcal X_{n}(a,c,-;\{b\})$.
	Hence, $(i_{f},j_{f},S_{1})\cdot a^{[1-p]_{mn}}b^{2}=(i_{f}+1-p,j_{f}+1-p,S_{2})$ where $1\not\in S_{2}$. So,
	$(i_{f}+1-p,j_{f}+1-p,S_{2}).a^{p-1}=(i_{f},j_{f},S_{3})$ where $0\in S_{3}$ because $(i_{f},j_{f})$ is final in 
	$\mathcal X_{m}(a,b,-;\{c\})\circ \mathcal X_{n}(a,c,-;\{b\})$. \\
	On the other hand, $(i',j',S)\cdot w=(i_{n},j_{n},S'_{1})$  with $(i_{n},j_{n})$ non final in 
	$\mathcal X_{m}(a,b,-;\{c\})\circ \mathcal X_{n}(a,c,-;\{b\})$. 
	Hence, $(i_{n},j_{n},S'_{1})\cdot a^{\mmod{1-p}{mn}}b^{2}=(i_{n}+1-p,j_{n}+1-p,S'_{2})$ where $1\not\in S'_{2}$. So,
	$(i_{n}+1-p,j_{n}+1-p,S'_{2})\cdot a^{p-1}=(i_{n},j_{n},S'_{3})$ where $0\in S'_{3}$ because $(i_{n},j_{n})$ is not final in 
	$\mathcal X_{m}(a,b,-;\{c\})\circ \mathcal X_{n}(a,c,-;\{b\})$.\\
	To summarize we have $S_{3}\neq S'_{3}$. 
	So the problem reduces to the first case ($S\neq S')$. This shows the result.
\end{proof}

\begin{theorem}
	The state complexity  $\mathrm {sc}_{(*\circ *)\cdot *)}(m,n,p)$ is $(mn-k)2^{p}k2^{p-1}$ where $k=1$ for $\cap$, $k=m+n-1$ for $\cup$ and $k=m+n-2$ for $\oplus$.  In  all cases, the bound is reached by the three-letters witness $W_{m,n,p}^{(\star\circ\star)\cdot\star)}$.
\end{theorem}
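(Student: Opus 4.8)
The plan is to sandwich the state complexity between a structural upper bound and a matching lower bound realized by the witness $W_{m,n,p}^{(\star\circ\star)\cdot\star}$. For the upper bound I would invoke the composition formula stated at the opening of this section, namely
$$\mathrm{sc}_{(*\circ*)\cdot *}(m,n,p)\le\mathrm{sc}^{\mathrm f}_{*\cdot *}(\mathrm{sc}_{*\circ *}(m,n),k_\circ,p),$$
combined with $\mathrm{sc}_{*\circ *}(m,n)\le mn$ and the formula $\mathrm{sc}^{\mathrm f}_{*\cdot *}(M,k,p)=M2^{p}-k2^{p-1}$ from \cite{JJS05}. Substituting $M=mn$ yields the bound $mn\,2^{p}-k_\circ 2^{p-1}=(mn-k_\circ)2^{p}+k_\circ 2^{p-1}$ (the announced expression $(mn-k)2^{p}k2^{p-1}$ being a misprint for $(mn-k)2^{p}+k2^{p-1}$). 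It then remains to identify $k_\circ$ and to show the bound is attained.

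For the value of $k=k_\circ=|(F_A\times Q_B)\circ(Q_A\times F_B)|$ I would simply count the final states of $\mathcal X_m\circ\mathcal X_n$. Intersection leaves only $(m-1,n-1)$, so $k=1$; union gives $(\{m-1\}\times Q_n)\cup(Q_m\times\{n-1\})$, so $k=m+n-1$; and the symmetric difference removes the single common pair, so $k=m+n-2$. These match the values in the statement. To reach the bound I would then count $\mathrm{Acc}^{(\star\circ\star)\cdot\star}_{m,n,p}$: a state $(i,j,S)$ lies in it precisely when finality of $(i,j)$ in $\mathcal X_m\circ\mathcal X_n$ forces $0\in S$. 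Hence each of the $k$ final pairs admits $2^{p-1}$ attached sets while each of the remaining $mn-k$ pairs admits $2^{p}$ sets, giving $(mn-k)2^{p}+k2^{p-1}$ states in total, exactly the upper bound.

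Finally I would assemble the ingredients already established in this section. The three accessibility propositions (one for $\cap$, one for $\cup$, one for $\oplus$) guarantee that every state of $\mathrm{Acc}^{(\star\circ\star)\cdot\star}_{m,n,p}$ is reachable, and the last proposition guarantees that these states are pairwise non-equivalent. Consequently the accessible part of the minimal DFA of $W_{m,n,p}^{(\star\circ\star)\cdot\star}$ has exactly $(mn-k)2^{p}+k2^{p-1}$ states, so the witness attains the upper bound for each $\circ$. The genuinely hard work is hidden in those cited propositions: the delicate point is accessibility for $\cup$ and $\oplus$, obtained by an induction on $|S|$ that recycles the catenation accessibility word of \cite{CLP16} while rerouting it through the second automaton, together with the separation argument whose crux is the resetting word $w\,a^{\mmod{1-p}{mn}}b^{2}a^{p-1}$, which records the finality of the pair $(i,j)$ as membership of $0$ in the attached set and thereby reduces the pairs-differ case (handled by Lemma \ref{noneqcirc}) to the easy sets-differ case. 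At the level of this summarizing theorem, only the bookkeeping of the count remains.
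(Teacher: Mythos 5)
Your proposal is correct and follows essentially the same route as the paper, which offers no separate proof but lets the theorem summarize the section: the upper bound $\mathrm{sc}^{\mathrm f}_{*\cdot *}(\mathrm{sc}_{*\circ *}(m,n),k_\circ,p)$ from \cite{JJS05}, the count of final states of $\mathcal X_m\circ\mathcal X_n$ giving $k=1$, $m+n-1$, $m+n-2$, and the section's accessibility and pairwise non-equivalence propositions showing the witness attains $(mn-k)2^{p}+k2^{p-1}$. You are also right that the displayed expression $(mn-k)2^{p}k2^{p-1}$ is a misprint for $(mn-k)2^{p}+k2^{p-1}$, consistent with $mn\,2^{p}-k\,2^{p-1}$.
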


\subsection{$(A\cdot B)\circ C$}

In this section, we propose a $2$-letters Brzozowski witness for $(A\cdot B)\circ C$. 
The tight bounds are  given by
$$\mathrm{sc}_{(*\cdot*)\circ *}(m,n,p)=\mathrm{sc}_{*\circ *}(\mathrm{sc}_{*\cdot *}(m,n),p).$$

Let us  consider the witness 
$$W_{m,n,p}^{(\star\cdot\star)\circ\star}=
(\mathcal X_{m}(a,b,-;\emptyset),\mathcal X_{n}(a,-,b;\emptyset),\mathcal X_{p}(b,a,-;\emptyset))$$ 
for each $m,n,p\geq 3$ and $\circ\in\{\cap,\cup,\oplus\}$ (see Figure \ref{W(A.B)oC}).

\begin{figure}[H]
	\centerline{
		\begin{tikzpicture}[node distance=1.2cm, bend angle=25]
			\node[state, initial] (q0) {$0$};
			\node[state] (q1) [right of=q0] {$1$};
			\node[state] (q2) [right of=q1] {$2$};
			\node (etc2) [right of=q2] {$\ldots$};
			\node[state, rounded rectangle] (m-2) [right of=etc2] {${m-2}$};
			\node[state, rounded rectangle, accepting] (m-1) [right of=m-2] {${m-1}$};
			\path[->]
        (q0) edge[bend left] node {$a,b$} (q1)
        (q1) edge[bend left] node {$a$} (q2)
        (q2) edge[bend left] node {$a$} (etc2)
        (etc2) edge[bend left] node {$a$} (m-2)
        (m-2) edge[bend left] node {$a$} (m-1)
        (m-1) edge[out=-115, in=-65, looseness=.2] node[above] {$a$} (q0)
		    (q2) edge[out=115,in=65,loop] node {$b$} (q2)
		    (m-2) edge[out=115,in=65,loop] node {$b$} (m-2)
		    (m-1) edge[out=115,in=65,loop] node {$b$} (m-1)
        (q1) edge[bend left] node[above] {$b$} (q0);
			\node[state, initial, right of =m-1, node distance=2cm] (r0) {$0$};
			\node[state] (r1) [right of=r0] {$1$};
			\node[state] (r2) [right of=r1] {$2$};
			\node (etc3) [right of=r2] {$\ldots$};
			\node[state, rounded rectangle] (n-2) [right of=etc3] {${n-2}$};
			\node[state, rounded rectangle, accepting] (n-1) [right of=n-2] {${n-1}$};
			\path[->]
        (r0) edge[bend left] node {$a$} (r1)
        (r1) edge[bend left] node {$a$} (r2)
        (r2) edge[bend left] node {$a$} (etc3)
        (etc3) edge[bend left] node {$a$} (n-2)
        (n-2) edge[bend left] node {$a$} (n-1)
        (n-1) edge[out=-115, in=-65, looseness=.2] node[above] {$a$} (r0)
		    (r0) edge[out=115,in=65,loop] node {$b$} (r0)
		    (r2) edge[out=115,in=65,loop] node {$b$} (r2)
		    (n-2) edge[out=115,in=65,loop] node {$b$} (n-2)
		    (n-1) edge[out=115,in=65,loop] node {$b$} (n-1)
        (r1) edge[bend left] node[above] {$b$} (r0);
			\node[state, initial, below of =etc2, node distance=2cm] (0) {$0$};
			\node[state] (1) [right of=0] {$1$};
			\node[state] (2) [right of=1] {$2$};
			\node (etc1) [right of=2] {$\ldots$};
			\node[state, rounded rectangle] (p-2) [right of=etc1] {${p-2}$};
			\node[state, rounded rectangle, accepting] (p-1) [right of=p-2] {${p-1}$};
			\path[->]
        (0) edge[bend left] node {$a,b$} (1)
        (1) edge[bend left] node {$b$} (2)
        (2) edge[bend left] node {$b$} (etc1)
        (etc1) edge[bend left] node {$b$} (p-2)
        (p-2) edge[bend left] node {$b$} (p-1)
        (p-1) edge[out=-115, in=-65, looseness=.2] node[above] {$b$} (0)
		    (2) edge[out=115,in=65,loop] node {$a$} (2)
		    (p-2) edge[out=115,in=65,loop] node {$a$} (p-2)
		    (p-1) edge[out=115,in=65,loop] node {$a$} (p-1)
        (1) edge[bend left] node[above] {$a$} (0);
    \end{tikzpicture}
  }
  \caption{A 2-letters witness for $(*\cdot *) \circ *$}
  \label{W(A.B)oC}
\end{figure}
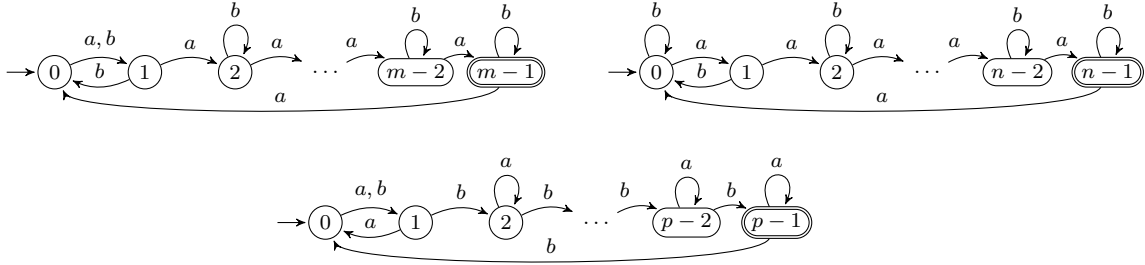

According to the constructions described in Section \ref{Brzozowski witnesses}, we 
define, for each $m,n,p\geq 3$ the automaton
 $$\mathcal 
{R}^{(\star\cdot\star)\circ\star}_{m,n,p}=\left(\mathcal X_{m}(a,b,-;\emptyset)\cdot\mathcal X_{n}(a,-,b;\emptyset)\right)\circ
\mathcal X_{p}(b,a,-;\emptyset).$$
From Table \ref{combop}, the   states of $\mathcal 
{R}^{(\star\cdot\star)\circ\star}_{m,n,p}$ are indexed by triples $(i,S,k)$ where $0\leq i\leq m-1$,   $0\leq k\leq p-1$ and $S\subset\{0,\dots,n-1\}$.

Let  $\mathrm {Acc}^{(\star\cdot\star)\circ\star}_{m,n,p}$ denote  the states of  $\mathcal {R}^{(\star\cdot\star)\circ\star}_{m,n,p}$ where  $i=m-1$ implies $0\in S$.
\begin{proposition}
All the states of  $\mathrm {Acc}^{(\star\cdot\star)\circ\star}_{m,n,p}$ are accessible in $\mathcal {R}^{(\star\cdot\star)\circ\star}_{m,n,p}$.
\end{proposition}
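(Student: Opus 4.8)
The plan is to run an induction on $|S|$, after first peeling off a clean structural description of the automaton. From Definitions \ref{CatenationDFAs} and \ref{CartesianDFAs} I would record the transition rules explicitly: the initial state is $(0,\emptyset,0)$; the letter $a$ sends $(i,S,k)$ to $(i+1,S+1,k\cdot(0\,1))$, adjoining $0$ to $S+1$ exactly when $i+1=m-1$; and $b$ sends $(i,S,k)$ to $(i\cdot(0\,1),\phi(S),k+1)$, where $\phi$ merges $1$ into $0$ and fixes the rest, adjoining $0$ only at $i=m-1$. The decisive structural remark is that the $(i,S)$-part evolves independently of $k$: it is exactly a run in the catenation automaton $C=\mathcal X_m(a,b,-;\emptyset)\cdot\mathcal X_n(a,-,b;\emptyset)$, while the $\mathcal X_p$-coordinate of any word $w$ is a fixed bijection $\pi_w$ of $Q_p$ depending only on $w$, since $a$ and $b$ act on $Q_p$ as the transposition $(0\,1)$ and the full $p$-cycle. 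So the task splits into reaching every admissible $(i,S)$ in $C$, and then landing the independent coordinate $k$ on its prescribed value.

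For the first half I would establish accessibility of every admissible $(i,S)$ in $C$, which is exactly the two-letter catenation reachability proved in this style (Proposition~$5$ of \cite{CLP16}); the induction on $|S|$ proceeds as usual. The states $(i,\emptyset)$ with $i\le m-2$ come from $(0,\emptyset)$ by powers of $a$; a constrained state $(m-1,S)$ (necessarily with $0\in S$) is reached by one $a$ from $(m-2,(S\setminus\{0\})-1)$, which has one fewer element and hence is covered by the induction hypothesis; an interior state $(i,S)$ with $i<m-1$ is reached from a suitable $(m-1,\cdot)$ anchor by a block $a^{i+1}$, which leaves $m-1$ at once and never returns, so $|S|$ is preserved, the remaining difference patterns being produced with the help of the contraction $b$ precisely as in the base catenation argument. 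This settles the projection onto the first two coordinates.

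The main work, and the step I expect to be the real obstacle, is to pin $k$ to its target while the construction of $(i,S)$ runs, because the only generator that cycles $k$ is $b$, which simultaneously contracts $S$. The unlocking observation is that $b$ acts as the identity on the catenation part exactly when $i\notin\{0,1\}$ and $1\notin S$ (then $(0\,1)$ fixes $i$ and $\phi$ fixes $S$); at such a \emph{safe} configuration one may insert any power $b^t$ without disturbing $(i,S)$ while advancing $k$ by $t$. Thus, for $S\neq\emptyset$, I would take a word reaching $(i,S)$ in $C$ that passes through a safe configuration, namely the one after an initial $a^2$, which is $(2,\emptyset)$ for $m\ge 4$ and $(2,\{0\})$ for $m=3$ (in both cases $b$ fixes it), write it as $w_1w_2$ with $w_1=a^2$, and consider $w_1b^tw_2$: it still reaches $(i,S)$ in $C$ and sends the initial $k=0$ to $\pi_{w_2}\!\left(\pi_{w_1}(0)+t\right)$, which runs over all of $Q_p$ as $t$ varies because $\pi_{w_2}$ is a bijection; choosing $t$ so that this equals $k$ gives a word reaching $(i,S,k)$. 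The empty-$S$ states, which cannot be routed through $(2,\emptyset)$, are treated directly: once $k\ge 2$ the letter $a$ fixes $k$ and merely advances $i$, so $(i,\emptyset,k)$ is obtained by first driving $k$ to its value with $a$'s and $b$'s and then adjusting $i$ while staying below $m-1$. The only genuinely delicate points are the small-index coincidences ($m=3$, and $i$ or $k$ lying in $\{0,1\}$), which are dispatched by short ad hoc words.
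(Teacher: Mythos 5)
Your proposal is correct in substance and rests on the same two pillars as the paper's proof --- reachability of every admissible $(i,S)$ in the catenation automaton $C=\mathcal X_{m}(a,b,-;\emptyset)\cdot\mathcal X_{n}(a,-,b;\emptyset)$ imported from \cite{CLP16} (the paper cites its Proposition~4 for this product; your Proposition~5 is the one used elsewhere, a harmless slip), and the fact that every word over $\{a,b\}$ induces a permutation of $Q_p$ --- but you tune the third coordinate by a genuinely different mechanism. The paper first reaches all states $(0,\emptyset,k')$ by short explicit words (parity case analysis on $k$), and then, given any word $w$ with $(0,\emptyset)\cdot w=(i,S)$ in $C$, simply starts from $(0,\emptyset,k')$ where $k'$ is the preimage of the target $k$ under the bijection induced by $w$ on $Q_p$; this handles every state, including those with $S=\emptyset$, in one uniform stroke. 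You instead keep the single base point $(0,\emptyset,0)$ and splice a block $b^{t}$ into the middle of the word at a ``safe'' configuration fixed by $b$, sweeping $k$ over all of $Q_p$ via bijectivity of $\pi_{w_2}$; your identification of the safe states ($(2,\emptyset)$ for $m\ge 4$, $(2,\{0\})$ for $m=3$, both indeed fixed by $b$) is accurate, and the computation $\pi_{w_2}\!\left(\pi_{w_1}(0)+t\right)$ is right. The cost is two obligations the paper never incurs. First, you assert rather than prove that every admissible $(i,S)$ with $S\neq\emptyset$ is reachable by a word beginning with $a^{2}$; this is true but not automatic from the cited reachability result --- the clean justification is that $S$ can never become empty once nonempty (the contraction only merges states), so on any accepting path the last empty-$S$ state is followed by the unique transition creating a nonempty set, which lands in $(m-1,\{0\})$; hence every nonempty-$S$ state is reachable from $(m-1,\{0\})$, which in turn is reached from the safe state by $a^{m-3}$. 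Second, you must treat the $S=\emptyset$ states by hand, and your ``short ad hoc words'' for the small-index coincidences are exactly the parity bookkeeping the paper performs explicitly for $(0,\emptyset,k)$. In short: your mid-word insertion is a viable alternative route, but the paper's preimage-at-the-start trick dissolves both of these obligations, which is what makes its argument shorter and free of case splits on $S$.
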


\begin{proof}


Let us first show that each state $(0,\emptyset, k)$ is accessible from $(0,\emptyset, 0)$.
\begin{itemize}
\item if $k$ is even then $(0,\emptyset,0)\xrightarrow{b^k}(0,\emptyset, k)$.
\item if $k>1$ and $k$ is odd then $(0,\emptyset,0)\xrightarrow{ab^{k-1}}(0,\emptyset, k)$
\item if $k=1$ and $k$ is odd then  $(0,\emptyset,0)\xrightarrow{b^{p+1}}(0,\emptyset, 1)$
\item if $k=1$ and $k$ is even then  $(0,\emptyset,0)\xrightarrow{b^2}(0,\emptyset, 2)\xrightarrow{a}(1,\emptyset, 2)\xrightarrow{b^{p-1}}(0,\emptyset, 1)$
\end{itemize}
We know  that every state $(i,S)$ of $\mathcal X_{m}(a,b,-;\emptyset)\cdot\mathcal X_{n}(a,-,b;\emptyset)$ is
 accessible (Proposition $4$ of  \cite{CLP16}) which means that  there exists a word $w$ such that $(0,\emptyset)\cdot w=(i,S)$. Let $(i,S,k)$ be a state of  $\mathrm {Acc}^{(\star\cdot\star)\circ\star}_{m,n,p}$ and let $w$ be such that  $(0,\emptyset)\cdot w=(i,S)$. The word $w$ acts as a permutation on the states of $\mathcal X_{p}(b,a,-;\emptyset)$. So, there exists $k'\in \mathcal X_{p}(b,a,-;\emptyset)$ such that $k'\cdot w=k$. Thus $(i,S,k)$ is accessible from $(0,\emptyset,k')$ by $w$. As $(0,\emptyset,k')$ is accessible, we  conclude that $(i,S,k)$ is accessible.

\end{proof}

%
To prove the pairwise non equivalence, we need the following lemmas. 
\begin{lemma}\label{l.final}
	Let $m,n>2$. Any state $(i,S)$ of $\mathcal X_{m}(a,b,-;\emptyset)\cdot\mathcal X_{n}(a,-,b;\emptyset)$ is final if and only if
	$(i,S)\cdot b$ is final.
\end{lemma}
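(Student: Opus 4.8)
The plan is to read the finality condition straight off Definition~\ref{CatenationDFAs} and then track how the letter $b$ can affect it. In $\mathcal X_m(a,b,-;\emptyset)\cdot\mathcal X_n(a,-,b;\emptyset)$ the set of final states is $\{(i,S)\mid S\cap\{n-1\}\neq\emptyset\}$, so $(i,S)$ is final if and only if $n-1\in S$. The statement therefore reduces to showing that $n-1$ lies in the second component of $(i,S)$ exactly when it lies in the second component of $(i,S)\cdot b$.

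First I would describe the action of $b$ inside $B=\mathcal X_n(a,-,b;\emptyset)$, where $b$ induces the contraction sending $1$ to $0$ and fixing every other state. Since $n>2$ we have $n-1\neq 1$, so $b$ fixes $n-1$; moreover the only state mapped onto $n-1$ by $b$ is $n-1$ itself. Hence $n-1\in S\cdot b$ if and only if $n-1\in S$: applying $b$ to the $B$-component can neither delete nor create the index $n-1$.

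The remaining point is the extra initial state that the catenation construction may insert. By Definition~\ref{CatenationDFAs}, reading $b$ from $(i,S)$ yields $(i\cdot b,\,S\cdot b)$ when $i\cdot b\neq m-1$ and $(i\cdot b,\,S\cdot b\cup\{0\})$ when $i\cdot b=m-1$, since $i_B=0$. In the latter situation one only has to check that adjoining $0$ cannot change the membership of $n-1$, which is immediate because $n>2$ gives $0\neq n-1$. Together with the previous paragraph this shows that the second component of $(i,S)\cdot b$ contains $n-1$ precisely when $S$ does, i.e.\ precisely when $(i,S)$ is final; this is the desired equivalence.

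This is a direct verification rather than an inductive argument, and the only genuine use of the hypothesis is $n>2$: it guarantees $n-1\notin\{0,1\}$, so that neither the contraction (whose only nontrivial value is $0$) nor the possibly adjoined initial state $0$ can spuriously affect finality. The value of $i\cdot b$ itself is irrelevant beyond deciding whether $0$ is adjoined, so no finer analysis of the $A$-component is needed, and I expect no real obstacle here.
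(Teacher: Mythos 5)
Your proof is correct and takes essentially the same route as the paper's: both reduce finality to the condition $n-1\in S$ and use that the contraction induced by $b$ in $\mathcal X_{n}(a,-,b;\emptyset)$ fixes $n-1$ because $n>2$. You merely make explicit the two points the paper dismisses as straightforward, namely that $n-1$ has no $b$-preimage other than itself and that the state $0$ possibly adjoined by the catenation construction when $i\cdot b=m-1$ cannot equal $n-1$.
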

\begin{proof}
Let us suppose that $(i,S)$ is final. Then $n-1\in S$. As $(n-1)\cdot b =n-1$ in $\mathcal X_{n}(a,-,b;\emptyset)$, we have $n-1\in S\cdot b$ which means that  $(i,S)\cdot b$ is a final state of $\mathcal X_{m}(a,b,-;\emptyset)\cdot\mathcal X_{n}(a,-,b;\emptyset)$. The converse is straightforward.
\end{proof}
\begin{lemma}\label{l.nonequivalent}
	All the  states  of  $\mathcal X_{m}(a,b,-;\emptyset)\cdot\mathcal X_{n}(a,-,b;\emptyset)$ are co-accessible and for any state $(i,S)$ there exists a word $w$ such that $(i,S)\cdot w$ is not a final state.
\end{lemma}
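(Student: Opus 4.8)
The plan is to establish the two assertions separately, in both cases reading off transitions from the component automata $A=\mathcal X_m(a,b,-;\emptyset)$ and $B=\mathcal X_n(a,-,b;\emptyset)$ together with the catenation construction of Definition~\ref{CatenationDFAs}. Recall that in $A$ the letter $a$ is the $m$-cycle, in $B$ the letter $a$ is the $n$-cycle, and $b$ is the contraction of $B$ sending $1$ to $0$ and fixing every other state. Two facts drive everything: a state $(i,S)$ is final exactly when $n-1\in S$, and a transition injects $0$ into the second component precisely when the first component lands on $m-1$. In particular injections only ever \emph{add} the element $0$, never remove an element.

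For co-accessibility I would first treat the case $S\neq\emptyset$: choosing any $s\in S$ and reading $a^{\,n-1-s}$ moves $s$ to $n-1$ in $B$ (and since $s\le n-1$ there is no wraparound), so the image lies in a final state, the injections performed along the way being harmless as they cannot delete $n-1$ once it appears. To reduce the case $S=\emptyset$ to this one, I would read $a^{m}$ first: over a full $a$-cycle the first component visits $m-1$ at some transition, which injects a $0$ and hence produces a state with nonempty second component, after which the previous argument applies.

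For the second assertion the key computation is the action of $ab$ on $B$: a direct check gives $0\cdot ab=0$, $s\cdot ab=s+1$ for $1\le s\le n-2$, and $(n-1)\cdot ab=0$, so iterating shows $(ab)^{n-1}$ sends every state of $B$ to $0$. I would then read $w=(ab)^{n-1}$ from an arbitrary $(i,S)$. The original elements of $S$ undergo the whole word and therefore land on $0$. The one point needing verification is that each $0$ injected during the reading of $w$ also stays at $0$: the suffix of $w$ acting on such an injected $0$ is of the form $(ab)^{k}$ or $b(ab)^{k}$, and both $0\cdot b=0$ and $0\cdot ab=0$, so it remains at $0$. Consequently the second component of $(i,S)\cdot w$ is contained in $\{0\}$, and since $n>2$ we have $0\neq n-1$, so the resulting state is not final.

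The cycle bookkeeping in $A$ and $B$ is routine; the delicate point—the main obstacle—is the interaction between the collapsing word $(ab)^{n-1}$ and the injections of $0$ triggered whenever the first component reaches $m-1$, because an injected element experiences only a suffix of $w$ rather than the whole word. This is exactly what the fixed-point identities $0\cdot b=0$ and $0\cdot ab=0$ control, and it is the reason for iterating $ab$ rather than $b$ alone: $0$ must be a fixed point of both the full block and its one-letter truncation so that injections can never escape back to $n-1$.
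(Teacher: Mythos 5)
Your proof is correct, but it takes a genuinely different route from the paper's. The paper argues abstractly: it invokes the minimality of $\mathcal X_{m}(a,b,-;\emptyset)\cdot\mathcal X_{n}(a,-,b;\emptyset)$, already proved in \cite{CLP16}, observes that a minimal DFA has at most one non-co-accessible state and at most one state from which every word leads to a final state, and that any such exceptional state would have to be fixed by every letter; since $(i,S)\cdot a=(i+1,S')$ moves the first coordinate, no state is fixed by $a$, so neither exceptional state exists. You instead construct explicit witnesses: for co-accessibility, reading $a^{m}$ forces the first component through $m-1$ and injects a $0$ when $S=\emptyset$, after which $a^{\,n-1-s}$ pushes a tracked element of $S$ to $n-1$ (sound, since $a$ permutes $Q_B$ and injections only add elements); for non-finality, the collapsing word $(ab)^{n-1}$ sends all of $Q_B$ to $0$, and your fixed-point check that $0\cdot b=0$ and $0\cdot ab=0$ correctly disposes of the one delicate point, namely that a $0$ injected mid-word experiences only a suffix of the form $(ab)^{k}$ or $b(ab)^{k}$ rather than the whole word. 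The paper's argument is shorter but leans on the external minimality result; yours is self-contained and effective, producing concrete words of bounded length, and it makes visible exactly why the witness's transition structure prevents injected elements from escaping back to the final class --- incidentally the same suffix-tracking style the paper itself deploys later, e.g.\ with the word $wa^{[1-p]_{mn}}b^{2}a^{p-1}$ in the $(A\circ B)\cdot C$ section.
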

\begin{proof}
	The minimality of $\mathcal X_{m}(a,b,-;\emptyset)\cdot\mathcal X_{n}(a,-,b;\emptyset)$ is proved in  \cite{CLP16}.
	Since the automaton is minimal, it has at most one non co-accessible state. So this state is invariant by the action of $a$ and $b$.
	Remarking that for each state $(i,S)$, one has $(i,S)\cdot a=(i+1,S')$, we prove that there is no non  co-accessible state. In the same way, 
	the pairwise non equivalence implies that there exists at most one state $(i,S)$ such that $(i,S)\cdot w$ is final for any $w$.
	 The same argument as for the co-accessibility allows us to conclude.
\end{proof}
\begin{proposition}
	The states belonging to $Acc^{(\star\cdot\star)\circ\star}_{m,n,p}$  are pairwise non equivalent.
\end{proposition}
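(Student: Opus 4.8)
The plan is to view each state of $Acc^{(\star\cdot\star)\circ\star}_{m,n,p}$ as a pair $(P,k)$, where $P=(i,S)$ is a state of the catenation DFA $D_{1}=\mathcal X_{m}(a,b,-;\emptyset)\cdot\mathcal X_{n}(a,-,b;\emptyset)$ with set of final states $F_{1}$, and $k$ is a state of $D_{2}=\mathcal X_{p}(b,a,-;\emptyset)$, whose only final state is $p-1$. By the product construction, $(P,k)$ is final if and only if $[P\in F_{1}]\circ[k=p-1]$. Two distinct states fall into exactly one of two cases: either $P\neq P'$, or $P=P'$ with $k\neq k'$. In both cases I would exhibit a separating word, relying on three structural facts: $D_{1}$ is minimal, hence its states are pairwise inequivalent (from \cite{CLP16}, as used in Lemma \ref{l.nonequivalent}); the letter $b$ preserves finality in $D_{1}$ (Lemma \ref{l.final}), so every power $b^{t}$ does too; and in $D_{2}$ the letters $a$ and $b$ both act as permutations, with $b$ being the full $p$-cycle, so $b^{t}$ can move any state onto $p-1$.

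For the case $P\neq P'$, minimality of $D_{1}$ yields a word $u$ separating $P$ and $P'$ in $D_{1}$; up to exchanging the two states, I may assume $P\cdot u\in F_{1}$ and $P'\cdot u\notin F_{1}$. I then append a power $b^{t}$: by Lemma \ref{l.final} this keeps $P\cdot ub^{t}$ final and $P'\cdot ub^{t}$ non-final, while $b^{t}$ shifts the $D_{2}$-coordinates $k\cdot u$ and $k'\cdot u$ cyclically. It remains to tune $t$ to each operation. For $\cup$, I choose $t$ so that $k'\cdot ub^{t}\neq p-1$, making the first state final and the second not. For $\cap$, I choose $t$ so that $k\cdot ub^{t}=p-1$, making the first state final while the second stays non-final because its $D_{1}$-part is non-final. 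For $\oplus$, I choose $t$ so that neither $k\cdot ub^{t}$ nor $k'\cdot ub^{t}$ equals $p-1$, so that the symmetric difference reflects only the two distinct $D_{1}$-finalities. Each choice forbids at most two residues modulo $p$, hence is possible since $p\geq 3$.

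For the case $P=P'$ and $k\neq k'$, I first note that since $a$ and $b$ both act as permutations on $D_{2}$, every word keeps the two $D_{2}$-coordinates distinct. Here I neutralize the common $D_{1}$-coordinate according to $\circ$: for $\cup$ and $\oplus$ I use Lemma \ref{l.nonequivalent} to send $P$ to a non-final state, and for $\cap$ I use the co-accessibility granted by the same lemma to send $P$ to a final state. Appending a suitable $b^{t}$ then preserves this $D_{1}$-finality (Lemma \ref{l.final}) and sends $k$ onto $p-1$ while leaving $k'$ off $p-1$, which is possible since the two coordinates remain distinct and $b$ is a full cycle. A short check of the three truth tables shows the resulting product states differ in finality.

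The main obstacle is the coordination of the two coordinates of the product: a separating word must simultaneously control the finality of the $D_{1}$-part and the position of the $D_{2}$-part. The fact that makes everything fit is that the catenation letter $b$ plays a double role, fixing the final state $n-1$ of $\mathcal X_{n}$ (whence Lemma \ref{l.final} and the invariance of $D_{1}$-finality under $b^{t}$) while acting as the full $p$-cycle on $\mathcal X_{p}$, so powers of $b$ adjust the $D_{2}$-coordinate freely without disturbing the $D_{1}$-finality already arranged. The only genuinely combinatorial point is checking, for each of $\cup$, $\cap$ and $\oplus$, that an exponent $t$ with the required behaviour exists, which reduces to avoiding at most two residues modulo $p$ and uses the hypothesis $p\geq 3$.
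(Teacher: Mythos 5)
Your proof is correct and follows essentially the same route as the paper's: the same case split ($P\neq P'$ versus $P=P'$ with $k\neq k'$), the same reliance on Lemma \ref{l.final} and Lemma \ref{l.nonequivalent}, and the same device of appending powers of $b$, which preserve finality in the catenation automaton while cycling the $\mathcal X_{p}$-coordinate, with the exponent tuned per operation. The only cosmetic difference is that you phrase the choice of exponent as avoiding at most two residues modulo $p$, where the paper first reaches $p-1$ via $b^{p-1-k_{1}}$ and then shifts both coordinates off $p-1$ by a suitable $b^{\delta}$.
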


\begin{proof}
	
Let $(i,S,k)\neq (i',S',k')$ be a state of ${\cal R}^{(\star\cdot\star)\circ\star}_{m,n,p}$. Suppose first that $(i,S)=(i',S')$ then $k\neq k'$.
From Lemma \ref{l.nonequivalent},
there exists a word $u$ such that $(i,S)\cdot u$ is final and another word $v$ such that $(i,S)\cdot v$ is not final in 
$\mathcal X_{m}(a,b,-;\emptyset)\cdot\mathcal X_{n}(a,-,b;\emptyset)$. 
If  $\circ=\cap${}, let $k_1=k\cdot u$ and $k'_1=k'\cdot u$ in   $\mathcal X_{p}(b,a,-;\emptyset)$ and $w= ub^{p-1-k_1}$. Then $(i,S)\cdot w$ is final in  $\mathcal X_{m}(a,b,-;\emptyset)\cdot\mathcal X_{n}(a,-,b;\emptyset)$ (Lemma \ref{l.final}) and $w$ separates the two states because  
 $(i,S,k)\cdot w= ((i,S)\cdot w,p-1)$
  is final in
 ${\cal R}^{(\star\cdot\star)\circ\star}_{m,n,p}$ 
 while $(i,S,k')\cdot w=((i,S)\cdot w,k'\cdot  ub^{p-1-k_1})= ((i,S)\cdot w,k'_1-k_1+p-1)$ is not final. 
 Suppose now $\circ=\oplus$ or $\circ=\cup$.
 The word $w=vb^{p-1-k'_1}$ separates the two states because Lemma \ref{l.final} implies that
 $(i,S,k)\cdot w=((i,S)\cdot w,k-k'_1+p-1)$ is not final  while $(i,S,k')\cdot w= ((i,S)\cdot w,p-1)$ is final.\\
 Suppose now $(i,S)\neq (i',S')$. From Lemma \ref{l.nonequivalent}, there exists a word $u$ separating the two states in $\mathcal X_{m}(a,b,-;\emptyset)\cdot\mathcal X_{n}(a,-,b;\emptyset)$.
  Without loss of generalities
 we assume $(i,S)\cdot u$ is final and $(i',S')\cdot u$ is non final. Let $k_1=k\cdot u$ and $k'_1=k'\cdot u$ in $\mathcal X_{p}(b,a,-;\emptyset)$ and let $w=ub^{p-1-k_1}$. 
 If  $\circ=\cap$ then, by Lemma \ref{l.final}, the state $(i,S,k)\cdot w$ is final while $(i',S',k')\cdot w$ is not final.
 If $\circ=\oplus$ or $\circ=\cup$ then, we consider $\delta$ such that $\mmod{k_1+\delta}{p},\mmod{k'_1+\delta}{p}\neq p-1$. Such an integer exists because $p>2$. 
 By Lemma \ref{l.final}, the state $(i,S,k)\cdot wb^{\delta}$ is final while $(i',S',k')\cdot wb^{\delta}$ is not final.
 \\ In all the cases, the states $(i,S,k)$ and $(i',S',k')$ are non equivalent. 
 \end{proof}

 \begin{theorem}
	The state complexity  $\mathrm {sc}_{(*\cdot *)\circ *)}(m,n,p)$ is $((m-1)2^{n}+2^{n-1})p$.  The bound is reached by the two-letters witness $W_{m,n,p}^{(\star\circ\star)\cdot\star)}$.
\end{theorem}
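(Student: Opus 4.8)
The plan is to prove that the upper and lower bounds on $\mathrm{sc}_{(*\cdot *)\circ *}(m,n,p)$ both equal $((m-1)2^{n}+2^{n-1})p$. For the upper bound I would invoke the compositional identity $\mathrm{sc}_{(*\cdot*)\circ *}(m,n,p)=\mathrm{sc}_{*\circ *}(\mathrm{sc}_{*\cdot *}(m,n),p)$ recalled at the start of this subsection, plug in the classical catenation complexity $\mathrm{sc}_{*\cdot *}(m,n)=(m-1)2^{n}+2^{n-1}$, and use the bound $\mathrm{sc}_{*\circ *}(N,p)=Np$ valid for any binary boolean operation. Setting $N=(m-1)2^{n}+2^{n-1}$ gives the claimed value, uniformly for $\circ\in\{\cap,\cup,\oplus\}$.

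For the matching lower bound I would count the states of $\mathrm{Acc}^{(\star\cdot\star)\circ\star}_{m,n,p}$, i.e. the triples $(i,S,k)$ with $0\le i\le m-1$, $S\subseteq\{0,\dots,n-1\}$, $0\le k\le p-1$ subject to the constraint $i=m-1\Rightarrow 0\in S$. Splitting on the value of $i$: there are $(m-1)2^{n}p$ triples with $i\neq m-1$ (with $S$ and $k$ arbitrary) and $2^{n-1}p$ triples with $i=m-1$ (since $S$ must then contain $0$). Their sum is $((m-1)2^{n}+2^{n-1})p$, which already coincides with the upper bound.

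It then suffices to certify that each of these states gives rise to a distinct state of the minimal DFA of $W_{m,n,p}^{(\star\cdot\star)\circ\star}$. This is exactly what the two preceding propositions supply: the accessibility proposition shows every state of $\mathrm{Acc}^{(\star\cdot\star)\circ\star}_{m,n,p}$ is reachable from $(0,\emptyset,0)$, and the non-equivalence proposition shows they are pairwise inequivalent, for each of the three operators. Assembling these three facts — count, accessibility, separability — forces the minimal DFA to have exactly $((m-1)2^{n}+2^{n-1})p$ states, so the witness meets the upper bound and pins down the state complexity.

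The genuinely delicate step, already discharged in the non-equivalence proposition, is separating two states that agree on their catenation component $(i,S)$ but differ in the $\mathcal X_{p}$-coordinate $k$. The idea I would follow there is to read a word that simultaneously drives $(i,S)$ to a final (resp.\ non-final) configuration of the catenation while realigning the $k$-coordinates, using Lemma~\ref{l.final} to preserve the finality of the catenation part under the extra $b$'s and Lemma~\ref{l.nonequivalent} to guarantee that such a separating behaviour exists; the precise choice of realignment is where the three boolean operators $\cap$, $\cup$ and $\oplus$ must be handled slightly differently. Everything else reduces to the arithmetic bookkeeping above.
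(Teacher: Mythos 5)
Your proposal is correct and follows the paper's own route: the theorem is exactly the assembly of the composition upper bound $\mathrm{sc}_{*\circ *}(\mathrm{sc}_{*\cdot *}(m,n),p)=((m-1)2^{n}+2^{n-1})p$ stated at the head of the subsection with the two propositions on accessibility and pairwise non-equivalence of $\mathrm{Acc}^{(\star\cdot\star)\circ\star}_{m,n,p}$, whose cardinality you compute correctly as $(m-1)2^{n}p+2^{n-1}p$. Your closing remarks on the delicate separation of states sharing $(i,S)$ but differing in $k$, handled via Lemma~\ref{l.final} and Lemma~\ref{l.nonequivalent} with operator-dependent realignment of the $\mathcal X_p$-coordinate, accurately describe what the paper's non-equivalence proposition does.
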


\section{Conclusion}

For any $3$-ary operator involving a catenation and/or a boolean operator, we give a Brzozowski witness. In many cases, it allows us to improve some conjectures by Brzozowski.
Nevertheless, many questions remain to be investigated. In particular, the optimality of the size of the alphabet remains a difficult problem which may require the development of  algebraic and combinatorial tools. Indeed, in all the constructions the states are labelled by combinatorial objects as tableaux, sets, \textit{etc}. The letters can be seen as operators acting on these objects and generate a semigroup.
All the elements and the actions can be combinatorially described.  This problem has to be restated to settle in well the theory of finite semigroups. 

\bibliography{../COMMONTOOLS/biblio,../COMMONTOOLS/bibjg}

\end{document}